\documentclass[11pt,oneside]{article}
\usepackage[square, numbers]{natbib}
\usepackage{amsmath, amssymb}
\usepackage{amsthm,xcolor}
\usepackage{mathtools}
\usepackage{amsfonts}
\usepackage{hyperref}
\usepackage{enumitem}
\usepackage{bm, bbm, cancel}
\usepackage{hhline}             

\newcommand{\R}{\mathbb{R}}
\newcommand{\N}{\mathbb{N}}

\def\d{{\rm d}}

\numberwithin{equation}{section}

\newtheorem{theorem}{Theorem}[section] 
\newtheorem{lemma}[theorem]{Lemma}
\newtheorem{corollary}[theorem]{Corollary}
\newtheorem{proposition}[theorem]{Proposition}

\theoremstyle{definition}
\newtheorem{definition}[theorem]{Definition}

\newtheorem{remark}[theorem]{Remark}

\newtheorem{examplex}[theorem]{Example}
\newenvironment{example}
{\pushQED{\qed}\examplex}
{\popQED\endexamplex}

\title{Pathwise Portfolio Theory and Market Viability}

\author{
	\textsc{Ioannis Karatzas}
	\thanks{Departments of Mathematics and Statistics, Columbia University, USA (E-mail: {\it ik1@columbia.edu})}
	\and
	\textsc{Donghan Kim} 
	\thanks{Department of Mathematical Sciences, KAIST, South Korea (E-mail: {\it kimdonghan@kaist.ac.kr})}
}

\date{\today}

\begin{document}

\maketitle

\begin{abstract}
    \noindent The theory of portfolios, and its allied notions and fundamental results concerning growth optimality, the num\'eraire property, and ``market viability''---which rules out the possibility of financing nontrivial future liability streams starting with arbitrarily small initial capital---is developed in a pathwise setting, completely devoid of probabilistic considerations. The approach replaces the familiar semimartingale decomposition of stochastic analysis for assets' returns, by decompositions generated through suitable trend extractors and their associated residual paths; then deploys F\"ollmer's celebrated pathwise version of classical It\^o integration and calculus. The resulting growth--num\'eraire and viability--boundedness equivalences bear considerable similarities to their semimartingale counterparts, but need not collapse into a single equivalence class in the pathwise setting; this separation is illustrated by two examples.
\end{abstract}

\section{Introduction}

A central theme of continuous-time portfolio theory is that several notions which appear quite unrelated at first sight are, in fact, manifestations of the same underlying market structure. In the continuous semimartingale setting for asset returns, the existence of a distinguished num\'eraire portfolio, the finiteness of the best available growth opportunities, the absence of extreme forms of arbitrage, and the boundedness of attainable levels of wealth are closely connected. A particularly comprehensive formulation is Theorem~2.31 of \citet{KaratzasKardaras2021}, which identifies market viability with the existence of a local martingale deflator, the existence of a supermartingale (or a local martingale) num\'eraire, locally finite growth, and boundedness-in-probability of attainable levels of wealth. Together with the allied results on growth optimality, this theorem is one of the cornerstones of the semimartingale theory of portfolios and arbitrage.

The semimartingale equivalences underlying this cornerstone result have a broader lineage. General num\'eraire-portfolio theory, including characterizations in terms of predictable characteristics and boundedness-in-probability, was developed by \citet{KaratzasKardaras2007}, while \citet{Kardaras2012} related market viability to absence of arbitrage of the first kind and to local martingale deflators. The same viability-based research has since been extended to open equity markets with changing constituents \citep{Karatzas:Kim2} and to markets whose dimension changes over time \citep{BayraktarKimTilva2024}.
 
{\it We investigate here to what extent this structure survives when probability is removed altogether.} We work with continuous trajectories of cumulative asset returns, impose finite quadratic covariation only along a fixed refining sequence of partitions, and use the pathwise integration and change-of-variable calculus initiated by \citet{follmer1981}. There is no probability measure, no conditional expectation, and no martingale assumption. Our objective is to identify which parts of the classical theory are genuinely pathwise, and which rely essentially on semimartingales.

Probability-free portfolio and integration theory has developed in several complementary directions. Strictly pathwise master formulas and functional portfolio generation within F\"ollmer calculus appear in \citet{Schied:model-free} and \citet{Karatzas:Kim}; rough-path extensions allowing broader portfolio classes are developed in \citet{AllanCuchieroLiuPromel2023}. Related model-free treatments of pathwise integration, self-financing, and hedging include \citet{PerkowskiPromel2016} and \citet{ChiuCont2023}. Our question is different: rather than deriving portfolio-generation or hedging formulas, we seek a pathwise counterpart of the growth, num\'eraire, financing, and viability structure. The classical cornerstone theorem combines statements of two kinds: local structural relations among return characteristics and portfolio growth, and global restrictions on what can be financed. The former admits a pathwise formulation on an individual trajectory, whereas the latter necessarily refers to a collection of possible paths and to a single trading rule that respects the information available on each of them. Keeping these roles separate from the outset is essential to the formulation of our pathwise result.

The first difficulty is that a continuous path does not come with a canonical drift-noise decomposition. We address this by fixing an observation-window length and a non-anticipative trend extractor. At every time, the extractor reads only the most recently observed portion of the return path, and produces a trend signal at the chosen resolution. Accumulating this signal yields a trend component of finite first variation, while the remainder is treated as the residual path. Since subtraction of a continuous finite variation path does not alter F\"ollmer quadratic covariation, the residual retains the entire second-order roughness of the original return trajectory. The trend and covariation components then serve as finite-resolution pathwise characteristics. Faber--Schauder projections provide a concrete and flexible family of such extractors.

The dependence on extractor and observation scale is deliberate. In the absence of probability, there is no canonically specified drift component which the pathwise construction could hope to recover. Our characteristics describe instead what is regarded as trend and what as noise at a prescribed resolution, using only information already observed. The results should therefore be read relative to this choice. This finite-resolution viewpoint makes the role of the extracted characteristics explicit, rather than concealing it in an assumed probabilistic model.

Two additional issues arise when these characteristics are used for trading. First, the associated local rates are ordinarily determined only almost everywhere, while F\"ollmer sums evaluate a portfolio at the actual partition points. We select pointwise versions through a fixed backward-looking differentiation rule, so that the resulting objects remain measurable and non-anticipative. Secondly, not every measurable portfolio rule can be integrated against an arbitrary path. We introduce a concrete class of wealth-generating F\"ollmer portfolios, built from the standard admissible integrands of pathwise It\^o calculus and closed under finite switching at partition times. Within this class, wealth is well defined, strictly positive, and self-financing.

The first group of our results concerns growth and num\'eraires. At the level of the selected pathwise characteristics, we identify the condition under which growth opportunities remain finite, and show that it is equivalent to the existence of a portfolio satisfying the corresponding compatibility condition between trend and covariation. The same condition characterizes growth optimality in the sense determined by the extracted characteristics, and yields an exact comparison of growth for any two portfolios. When the canonical portfolio can be realized as a wealth-generating F\"ollmer portfolio, the F\"ollmer logarithm of every relative wealth process, denominated by the wealth this portfolio generates, has no extracted finite variation return drift. We call such a portfolio's wealth  \emph{driftless num\'eraire}. This terminology is deliberately algebraic: without a probability measure, the driftless relative-wealth identity does not imply a supermartingale property. We isolate the realizability of this canonical portfolio as a separate admissibility requirement and give concrete sufficient conditions for it.

The second group of results concerns financing and viability. We extend from a single return trajectory to a scenario set, and require trading rules to be jointly measurable and non-anticipative, with the same rule applied consistently across all scenarios. A market is called \emph{pathwise viable} when no nontrivial future withdrawal stream can be financed from arbitrarily small positive initial capital. We show that this is equivalent to the pointwise boundedness of attainable terminal wealth levels on every scenario. This is the natural probability-free counterpart of boundedness-in-probability in the classical theory. It is stronger, in the sense that no exceptional set of scenarios can be discarded, and expresses a genuinely robust requirement on the collection of possible paths.

These two groups of equivalences form the two parts of our main result, Theorem~\ref{thm:pathwise-finite-resolution-cornerstone}. The first is the \emph{growth--num\'eraire layer}; the second is the \emph{viability--boundedness layer}. A central theme of the paper is that, in the general pathwise setting, these layers cannot be merged. In particular, the validity of the growth-num\'eraire layer does not by itself imply pathwise viability.

The reason these two layers can be merged in the classical setting lies in properties specific to continuous semimartingales. There, the residual in the canonical decomposition is a continuous local martingale. Once the structural condition for the num\'eraire is satisfied, every relative wealth process becomes the stochastic exponential of a local martingale, thus a nonnegative local martingale and a supermartingale. The reciprocal of a num\'eraire is a local martingale deflator; and the resulting supermartingale inequalities provide budget restrictions for financeable liabilities, and control attainable wealth in probability. This probabilistic mechanism is precisely the bridge joining local growth structure to global market viability in Theorem~2.31 of \cite{KaratzasKardaras2021}.

There is also a more elementary but equally important distinction. A continuous local martingale with zero quadratic variation is constant. Consequently, in the semimartingale setting, the residual cannot conceal a nontrivial finite variation direction that is invisible to covariation but exploitable by trading. No analogous statement holds for a general pathwise residual, which may well contain a smooth, nonconstant component with zero F\"ollmer quadratic variation. Such movement is invisible to the covariation-based growth characteristic, yet can generate arbitrarily large gains for suitable pathwise portfolios. Thus, the calculation of characteristics can be perfectly well-behaved, while attainable wealth levels remain unbounded.

Examples~\ref{ex:lfg-without-pointwise-boundedness} and~\ref{ex:multi-scenario-non-anticipative-separation} exhibit this phenomenon in complementary ways. The first uses a smooth return path and a finite-resolution Faber--Schauder trend extractor: the extracted trend and quadratic covariation vanish, yet portfolios constructed from the local slope of the residual generate arbitrarily large terminal wealth. The second removes the informational degeneracy of a singleton scenario set. An uncountable family of paths agrees up to a common branching time, while one sequence of jointly Borel-measurable, non-anticipative portfolio rules generates terminal wealth that diverges uniformly over all scenarios. Thus, the separation of the two layers is neither a technical artifact nor a consequence of advance knowledge along a single path; it records the loss of the local-martingale structure which, in semimartingale theory, prevents such hidden directional gains.

The paper develops a pathwise counterpart of the classical cornerstone theorem, with two parts. The {\it affirmative part} identifies the relations among finite growth opportunities, characteristic-based optimality, and driftless num\'eraire, and separately identifies viability with boundedness of attainable wealth. The {\it negative part} shows that these relations cannot in general be assembled into the single equivalence class familiar from the semimartingale theory. In this way, the pathwise formulation reveals not only what survives outside a probabilistic framework, but also the precise work performed by the local martingale structure in the classical result.

\smallskip
 
\noindent \emph{Preview}: The next section develops pathwise quadratic covariation, non-anticipative trend extraction, and finite-resolution characteristics, together with Faber--Schauder examples. Section~\ref{sec:portfolio-characteristics} introduces portfolio characteristics, wealth-generating F\"ollmer portfolios, growth optimality, and driftless num\'eraires. Section~\ref{sec:scenario-wise-viability} develops scenario-wise financing and pathwise viability, proves the equivalence with pointwise boundedness of terminal wealth, and concludes with the pathwise cornerstone theorem and two examples separating its two layers.

\section{Pathwise characteristics}

We start by developing the F\"ollmer theory of quadratic covariations along a given sequence of refining partitions of a fixed time-interval. We then use this theory to generate non-anticipative decompositions of continuous paths at finite resolution; these mimic the familiar semimartingale decompositions of stochastic analysis, but are completely devoid of probabilistic considerations.

\subsection{F\"ollmer's quadratic covariation along refining partitions}  \label{subsec:follmer-quadratic-covariation}

We recall from \cite{follmer1981} the notion of ``pathwise quadratic covariation'' along refining sequences of partitions of a given time-interval $[0,T]$. Throughout this subsection, $T>0$ is a fixed real number; and continuous functions $x:[0, T] \to \mathbb{R}^d$ are called ``paths''.

A \emph{refining sequence of partitions} of the interval $[0,T]$, is a sequence $\Pi=(\Pi_n)_{n\in\mathbb N}$, with $\Pi_n \subset \Pi_{n+1}$, $\forall \, n \in \N$ and
\begin{equation}    \label{def:refining-partitions}
    \Pi_n = \{t^{(n)}_k\}_{k=0}^{M_n+1}, \quad 0=t^{(n)}_0<t^{(n)}_1<\cdots<t^{(n)}_{M_n+1}=T, \quad |\Pi_n|:=\max_{1\le k\le M_n+1}(t^{(n)}_k-t^{(n)}_{k-1})\longrightarrow  0 
\end{equation}
as $n\to\infty$. In what follows, such a refining sequence $\Pi$ will be fixed, and the quantities introduced will be indexed by $\Pi$ in order to stress their dependence on this sequence.

\begin{definition}[Scalar Quadratic Variation]\label{def:scalar-QV}
    For each given continuous path $x\in C([0,T];\R)$ and $n\in\N$, we define the discrete quadratic variation measure
    \[
       {\bm \mu}_x^{(n)} := \sum_{k=1}^{M_n+1} \Big( x(t^{(n)}_{k})-x(t^{(n)}_{k-1}) \Big)^2 {\bm \delta}_{t^{(n)}_{k-1}}.
    \]
    We say that the path $x$ \emph{has finite quadratic variation} along $\Pi$, if the sequence $({\bm \mu}_x^{(n)})_{n \in \N}$ converges vaguely to a finite, non-atomic measure ${\bm \mu}_x$ on the Borel sets of $[0,T]$. \footnote{~That is, $\lim_{n \to \infty} \int_{[0,T]} f \, \d {\bm \mu}_x^{(n)} = \int_{[0,T]} f \, \d {\bm \mu}_x\,$ for every continuous $f: [0,T] \to \R.$} Then, the continuous, nondecreasing function
    \[
        [x]_\Pi(t):={\bm \mu}_x([0,t]), \qquad 0 \le t \le T.
    \]
    is the so-called ``F\"ollmer quadratic variation'' of the path $x$ over each interval $[0, t]$, along $\Pi$.
\end{definition}

We also note the interpretation
\[
    [x]_\Pi(t) = \lim_{n\to\infty} \sum_{\substack{k=1 \\ t^{(n)}_{k-1} \le t}}^{M_n+1}\Big( x(t^{(n)}_k \wedge t)- x(t^{(n)}_{k-1}) \Big)^2, \qquad 0 \le t \le T.
\]

\begin{definition}[Vector Quadratic Covariation] \label{def:vector-QC}
    Consider now a vector \(x=(x^1,\ldots,x^d)\in C([0,T];\mathbb R^d)\) of $d$ continuous functions on $[0, T]$. For each $n \in \N$ we define the discrete, matrix-valued covariation measure
    \[
       {\bm  \mu}^{(n)}_x := \sum_{k=1}^{M_n+1} \Big( x(t^{(n)}_k)- x(t^{(n)}_{k-1}) \Big) \Big( x(t^{(n)}_k)- x(t^{(n)}_{k-1}) \Big)^\top {\bm \delta}_{t^{(n)}_{k-1}}
    \]
 and say that the path \(x\) \emph{has finite quadratic covariation} along $\Pi$ if the entries of the sequence \(({\bm \mu}^{(n)}_x)_{n\in\N}\) converge vaguely to finite signed measures \({\bm \mu}^{ij}_x\), \(1\le i,j\le d\), and the limiting matrix-valued measure $\,{\bm \mu}_x=({\bm \mu}^{ij}_x)_{1\le i,j\le d}\,$ is nonnegative-definite (in the sense that ${\bm \mu}_x(B)$ is a symmetric nonnegative-definite matrix for every Borel subset $B$ of $[0,T]$) and continuous (in the sense that its diagonal measures ${\bm \mu}^{i,i}_x$, $i=1, \ldots, d$ are non-atomic).
\end{definition}

In this case, we consider the matrix of so-called ``F\"ollmer quadratic covariations''
\begin{equation}    \label{def:quadratic-covariation}
    [x,x]_\Pi(t) := \big([x^i,x^j]_\Pi(t)\big)_{1\le i,j\le d}, \qquad [x^i,x^j]_\Pi(t):={\bm \mu}^{ij}_x([0,t]); \qquad 0 \le t \le T,
\end{equation}
and note that the resulting matrix-valued path $[x,x]_\Pi$ is continuous, symmetric, with nonnegative-definite increments and
\[
    [\xi^\top x]_\Pi(t) = \xi^\top[x,x]_\Pi(t) \, \xi, \qquad \forall\,~~\xi\in\mathbb R^d, \qquad 0 \le t \le T.
\]
We note also that cross-variations in \eqref{def:quadratic-covariation} may be recovered by the polarization
\[
    [x^i,x^j]_\Pi = \frac12\Big([x^i+x^j]_\Pi-[x^i]_\Pi-[x^j]_\Pi\Big), \qquad 1 \le i, j \le d.
\]

\begin{definition}[The Class $Q_\Pi$ of Paths]  \label{def:Q-Pi}
    We denote by $Q_\Pi([0,T];\mathbb R^d)$ the collection of paths $x\in C([0,T];\mathbb R^d)$ with finite quadratic covariation along the sequence $\Pi$ of partitions. When $d=1$, we write simply $Q_\Pi([0,T])$.
\end{definition}

For a given path $x\in Q_\Pi([0,T];\mathbb R^d)$, each entry $[x^i, x^j]_\Pi$ of the matrix in \eqref{def:quadratic-covariation} is a continuous path of finite (first) variation. And because the matrix-valued measure ${\bm \mu}_x = \big({\bm \mu}_x^{i,j}\big)_{1\le i,j \le d}$ is nonnegative-definite, the total variations of the off-diagonal measures are dominated by the diagonal ones, in the sense
\begin{equation} \label{ineq:off-diagonal-domination}
    2\,\big| {\bm \mu}_x^{i,j} \big| \le \,{\bm \mu}_x^{i,i}+{\bm \mu}_x^{j,j}, \qquad 1 \le i,j \le d.
\end{equation}
This has a simple consequence, which we record in Lemma~\ref{lem:BV-zero-QV} below: ``adding a path of finite first variation does not change the quadratic variation''.

\begin{lemma}[Paths of Finite First Variation have Zero Quadratic Variation] \label{lem:BV-zero-QV}
    Every continuous path $y:[0,T]\to\mathbb R^d$ with finite first variation satisfies
    \[
        y \in Q_\Pi([0,T];\R^d), \qquad [y,y]_\Pi \equiv 0;
    \]
    and for every other path $x \in Q_\Pi([0,T];\mathbb R^d)$, we have
    \[
        [x,y]_\Pi\equiv0, \qquad [x+y,x+y]_\Pi=[x-y,x-y]_\Pi=[x,x]_\Pi.
    \]
\end{lemma}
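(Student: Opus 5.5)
The plan is to work directly with the discrete covariation measures and show that every contribution involving $y$ vanishes in total variation. Write $\Delta^{(n)}_k z := z(t^{(n)}_k)-z(t^{(n)}_{k-1})$ for any path $z$, and set
\[
\omega_y(\delta):=\sup\{|y(t)-y(s)|: |t-s|\le\delta\},\qquad V(y):=\text{total variation of } y \text{ on } [0,T].
\]

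\emph{Step 1: $[y,y]_\Pi\equiv 0$.} The total mass of each entry of ${\bm \mu}^{(n)}_y$ is bounded by
\[
\sum_k |\Delta^{(n)}_k y|^2 \le \omega_y(|\Pi_n|)\sum_k|\Delta^{(n)}_k y| \le \omega_y(|\Pi_n|)\, V(y).
\]
This bound tends to $0$ because $y$ is uniformly continuous on $[0,T]$ and $|\Pi_n|\to0$. Hence every entry of ${\bm \mu}^{(n)}_y$ converges to the zero measure in total variation, and so also vaguely. The zero matrix measure is trivially nonnegative-definite and non-atomic, so $y\in Q_\Pi([0,T];\R^d)$ with $[y,y]_\Pi\equiv0$.

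\emph{Step 2: the cross terms.} For $x\in Q_\Pi([0,T];\R^d)$, the discrete cross measures $\sum_k \Delta^{(n)}_k x^i\,\Delta^{(n)}_k y^j\,{\bm \delta}_{t^{(n)}_{k-1}}$ have total mass at most $\omega_x(|\Pi_n|)\,V(y^j)\to0$. Here only the continuity of $x$ is used, not its quadratic variation, so these measures also tend to zero in total variation. I interpret $[x,y]_\Pi$ as the vague limit of these discrete cross measures, equivalently as the off-diagonal block of the covariation of $(x,y)\in\R^{2d}$. With that reading, $[x,y]_\Pi\equiv0$.

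\emph{Step 3: $x\pm y$.} Expanding the products gives
\[
{\bm \mu}^{(n)}_{x\pm y}={\bm \mu}^{(n)}_x \pm C_n \pm C_n^\top + {\bm \mu}^{(n)}_y ,
\]
where $C_n$ is the discrete cross measure from Step 2. The last three terms tend to zero in total variation, and hence vaguely. So ${\bm \mu}^{(n)}_{x\pm y}\to{\bm \mu}_x$ vaguely, and the limit inherits nonnegative-definiteness and continuity from ${\bm \mu}_x$. Therefore $x\pm y\in Q_\Pi([0,T];\R^d)$ with $[x\pm y,x\pm y]_\Pi=[x,x]_\Pi$.

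There is no real obstacle. The only delicate point is making the notion $[x,y]_\Pi$ precise, since Definition~\ref{def:vector-QC} only defines covariation within a single vector path. I handle this by treating $(x,y)$ as an $\R^{2d}$-valued path. The total-variation estimates above show that the needed limits exist, and the inequality \eqref{ineq:off-diagonal-domination} is not needed.
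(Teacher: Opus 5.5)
Your proof is correct. Step 1 is the paper's argument: the largest increment times the total variation. The routes differ in how the cross term is handled.

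\textbf{The paper's route.} It works in the scalar case and bounds $\sum_k|\Delta x\,\Delta y|$ by Cauchy--Schwarz, $\big(\sum_k(\Delta x)^2\sum_k(\Delta y)^2\big)^{1/2}$. This uses that the quadratic sums of $x$ stay bounded, because $x\in Q_\Pi$, together with Step 1 for $y$. It then passes to $d\ge2$ by polarization.

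\textbf{Your route.} You bound the discrete cross measures by $\max_k|\Delta x|$ times the total variation of $y^j$. This needs only uniform continuity of $x$, so you get a stronger fact: the discrete cross measures vanish in total variation for every continuous $x$. Finite quadratic variation of $x$ enters only in Step 3, to identify the limit of ${\bm\mu}^{(n)}_{x\pm y}$ as ${\bm\mu}_x$.

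You also differ in three smaller ways:
\begin{itemize}
\item You work with the matrix-valued measures directly instead of reducing to the scalar case by polarization.
\item You verify explicitly that $x\pm y\in Q_\Pi([0,T];\R^d)$, with a limit that inherits nonnegative-definiteness and non-atomicity from ${\bm\mu}_x$. The paper leaves this step implicit.
\item You read $[x,y]_\Pi$ as the off-diagonal block of the covariation of the $\R^{2d}$-valued path $(x,y)$. This makes precise an object the paper uses without formally defining it.
\end{itemize}

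The paper's argument is shorter once the scalar theory is in hand. Yours is self-contained and slightly more general.
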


\begin{proof}
 We focus on the scalar case $d=1$;   the statement for $d \ge 2$ can be deduced from it    via polarization. Then, since the path $y:[0, T]\to \R$ has finite first variation, we obtain
    \[
        \sum_{k=1}^{M_n+1} \big(y(t^{(n)}_k) - y(t^{(n)}_{k-1})\big)^2 \le \Big(\max_{1\le k \le M_n+1}\big|y(t^{(n)}_k) - y(t^{(n)}_{k-1})\big| \Big)\, \breve{y}(T) \longrightarrow 0
    \]
    as $n \to \infty$, which gives $[y]_\Pi \equiv 0$. We have denoted here by $\breve{y}(T) := |y|_{\mathrm{TV};[0, T]}$ the total variation of the continuous path $y$ on $[0, T]$. Likewise, for any path $x:[0,T]\to\R$ in $Q_\Pi([0,T])$, the Cauchy--Schwarz inequality gives
    \begin{align*}
        \sum_{k=1}^{M_n+1}\Big|\big(x(t^{(n)}_k) &- x(t^{(n)}_{k-1})\big) \big(y(t^{(n)}_k) - y(t^{(n)}_{k-1})\big) \Big| 
        \\
        &\le \sqrt{ \sum_{k=1}^{M_n+1}\big(x(t^{(n)}_k) - x(t^{(n)}_{k-1})\big)^2 \sum_{k=1}^{M_n+1}\big(y(t^{(n)}_k) - y(t^{(n)}_{k-1})\big)^2} \longrightarrow 0
    \end{align*}
    as $n\to\infty$,   leading to $[x,y]_\Pi \equiv 0$.
\end{proof}

We can state now the following fundamental result. 

\begin{lemma}[The F\"ollmer Pathwise It\^o Rule \cite{follmer1981}]    \label{lem:follmer-ito}
    Consider a path $x\in Q_\Pi([0,T];\mathbb R^d)$ and a function $f: \R^d \to \R$ of class $C^2$. Then the pathwise It\^o integral
    \begin{equation}    \label{eq:Ito-rule}
        \int_0^t \nabla f\big(x(s)\big)\,\d^\Pi x(s), \qquad 0 \le t \le T
    \end{equation}
    is well-defined as the limit 
    \[
        \lim_{n\to\infty} \sum_{\substack{k=1 \\ t^{(n)}_{k-1} \le t}}^{M_n+1} \nabla f\big(x(t^{(n)}_{k-1})\big) \cdot \big(x(t^{(n)}_k \wedge t)-x(t^{(n)}_{k-1})\big)
    \]
of non-anticipative Riemann sums, and satisfies the change-of-variable formula
    \begin{equation*}
        f\big(x(t)\big)-f\big(x(0)\big) = \int_0^t\nabla f\big(x(s)\big)\,\d^\Pi x(s) + \frac12 \sum_{i=1}^d \sum_{j=1}^d \int_0^t D_{ij}^2 f\big(x(s)\big)\,\d [x^i,x^j]_\Pi(s), \quad 0 \le t \le T.
    \end{equation*}
\end{lemma}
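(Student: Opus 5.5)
The plan is Föllmer's original argument: a second-order Taylor expansion along each partition $\Pi_n$, followed by a limit. The remainder and quadratic sums will be shown to converge, and the first-order Riemann sums will then be forced to converge as a difference of convergent quantities. Fix $t\in[0,T]$ and write $x_k:=x(t^{(n)}_k\wedge t)$ for the indices with $t^{(n)}_{k-1}\le t$, so the sum telescopes to $f(x(t))-f(x(0))$. For each increment, Taylor's theorem with integral remainder gives
\[
f(x_k)-f(x_{k-1})=\nabla f(x_{k-1})\cdot(x_k-x_{k-1})+\tfrac12 (x_k-x_{k-1})^\top D^2 f(x_{k-1})(x_k-x_{k-1})+r_k ,
\]
with $|r_k|\le \omega\big(|x_k-x_{k-1}|\big)\,|x_k-x_{k-1}|^2$. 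Here $\omega$ is the modulus of continuity of $D^2 f$ on the compact set $x([0,T])$ enlarged by a unit ball.

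First, the remainder. Since $x$ is uniformly continuous and $|\Pi_n|\to0$, we have $\max_k|x_k-x_{k-1}|\to0$. Also $\sum_k|x_k-x_{k-1}|^2=\sum_i\int \d\bm\mu^{(n)}_{x^{ii}}$, restricted to $[0,t]$, is bounded in $n$ because vague convergence to finite measures gives bounded total mass. Hence $\sum_k r_k\to0$.

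Second, the quadratic sums. These equal $\sum_{i,j}\int_{[0,t]} D^2_{ij}f(x(s))\,\d\bm\mu^{(n),ij}_x(s)$, up to the single truncated last interval containing $t$, whose contribution vanishes by continuity of $x$. We then need
\[
\int_{[0,t]} g\,\d\bm\mu^{(n)}\longrightarrow\int_{[0,t]} g\,\d\bm\mu
\]
for continuous $g$ and a closed interval $[0,t]$. \textbf{This is the main technical step}, since vague convergence only handles continuous test functions on all of $[0,T]$. I would use that $\bm\mu^{ii}_x$ is non-atomic, so $\{t\}$ is a continuity set. Approximate $\mathbbm 1_{[0,t]}$ from above and below by continuous cutoffs; the squeeze works for the nonnegative diagonal measures (portmanteau). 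The off-diagonal entries follow by polarization, as in Definition~\ref{def:vector-QC}, because the $\bm\mu^{(n)}$ of $x^i\pm x^j$ are again nonnegative with non-atomic limits, bounded via \eqref{ineq:off-diagonal-domination}.

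Finally, rearranging the expansion shows that the non-anticipative Riemann sums $\sum_k\nabla f(x_{k-1})\cdot(x_k-x_{k-1})$ converge to
\[
f(x(t))-f(x(0))-\tfrac12\sum_{i,j}\int_0^t D^2_{ij}f(x(s))\,\d[x^i,x^j]_\Pi(s).
\]
This simultaneously defines the integral \eqref{eq:Ito-rule} and yields the change-of-variable formula.
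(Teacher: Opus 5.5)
Your proposal is correct. It is Föllmer's original argument: a second-order Taylor expansion, with the remainder controlled by the modulus of continuity of $D^2 f$ times the bounded quadratic sums, and the second-order sums passed to the limit by weak convergence on $[0,t]$. That limit uses non-atomicity of the diagonal measures, plus polarization through $x^i\pm x^j$ and \eqref{ineq:off-diagonal-domination} for the off-diagonal ones. This is exactly what the paper relies on: it states the lemma with a citation to \cite{follmer1981} and gives no proof of its own.
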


\begin{remark}  \label{rem:Ito-integrands}
    It is instructive to point out here, just how special this construction is. If $X$ is standard, scalar Brownian Motion on some filtered probability space $(\Omega, \mathcal F, \mathbb{P})$, $\mathbb F = (\mathcal F(t))_{0\le t \le T}$, the It\^o integral
    \[
        \int_0^t Y(s) \, \d X(s), \qquad 0 \le t \le T
    \]
    can be defined for every $\mathbb F$-progressively measurable $Y : [0,T]\times\Omega \to \R$ with $\int_0^T Y^2(s, \omega) \d s < \infty$, $\mathbb P$-a.e. $\omega \in \Omega$. In particular, $Y(s)$ can depend on the entire path $(X(u), \, 0 \le u \le s)$. 
    
    By contrast, the recipe in \eqref{eq:Ito-rule} works only for the very special integrands of the form $\,y(s) = \nabla f(x(s)), \, 0 \le s \le T\,,$ with $f$ of class $C^2$.
\end{remark}

The following extension of the It\^o--F\"ollmer formula is
\cite[Theorem~9]{schied2014}.

\begin{lemma}[It\^o--F\"ollmer Formula with Finite Variation Controls]  \label{lem:follmer-ito-fv-control}
    Let $x\in Q_\Pi([0,T];\mathbb R^d)$, let the components of $b=(b^1,\ldots,b^m)\in C([0,T];\mathbb R^m)$ be functions of finite first variation, and let $f\in C^{1,2}(\mathbb R^m\times\mathbb R^d)$. Then
    \[
        \int_0^t \nabla_x f\big(b(s),x(s)\big)^\top\,\d^\Pi x(s)
    \]
    exists as the uniform limit on $[0,T]$ of the corresponding left Riemann sums, and we have the change-of-variable formula
    \begin{align*}   
        f\big(b(t),x(t)\big)-f\big(b(0),x(0)\big) &= \sum_{\ell=1}^m \int_0^t \partial_{b_\ell} f\big(b(s),x(s)\big)\,\d b^\ell(s)
        +\int_0^t \nabla_x f\big(b(s),x(s)\big)^\top\,\d^\Pi x(s) 
        \\
        & \qquad +\frac12\sum_{i,j=1}^d \int_0^t \partial^2_{x_ix_j}f\big(b(s),x(s)\big) \,\d[x^i,x^j]_\Pi(s).
    \end{align*}
\end{lemma}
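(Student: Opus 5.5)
The plan is to imitate Föllmer's original proof of Lemma~\ref{lem:follmer-ito}: expand $f$ along the partition points and let the mesh go to zero. Here the expansion is first order in the finite-variation variable $b$ and second order in $x$. Fix $t\in[0,T]$ and write $t_k=t^{(n)}_k\wedge t$. Then $f(b(t),x(t))-f(b(0),x(0))$ equals the sum over $k$ of $f(b(t_k),x(t_k))-f(b(t_{k-1}),x(t_{k-1}))$. Each increment is split as
\[
\big[f(b(t_k),x(t_k))-f(b(t_{k-1}),x(t_k))\big]+\big[f(b(t_{k-1}),x(t_k))-f(b(t_{k-1}),x(t_{k-1}))\big].
\]

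For the first bracket I would use the mean value theorem in $b$:
\[
\sum_{\ell}\partial_{b_\ell}f(\tilde b_k,x(t_k))\,\big(b^\ell(t_k)-b^\ell(t_{k-1})\big),
\]
with $\tilde b_k$ on the segment between $b(t_{k-1})$ and $b(t_k)$. Continuity of $\partial_b f$ on the compact range of $(b,x)$ and uniform continuity of the paths show that these sums are Riemann--Stieltjes sums, up to an error of $o(1)$ times the total variation of $b$. They therefore converge to $\sum_\ell\int_0^t\partial_{b_\ell}f(b(s),x(s))\,\d b^\ell(s)$.

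For the second bracket I would use Taylor's formula to second order in $x$:
\[
\nabla_x f(b(t_{k-1}),x(t_{k-1}))^\top\Delta_k x+\tfrac12\,\Delta_k x^\top D_x^2 f(b(t_{k-1}),x(t_{k-1}))\,\Delta_k x+r_k .
\]
Here $|r_k|\le\varepsilon_n|\Delta_k x|^2$, where $\varepsilon_n\to0$ by uniform continuity of $D_x^2 f$ on compacts. Since $\sum_k|\Delta_k x|^2$ stays bounded by finite quadratic variation, the remainders vanish in the limit. The quadratic term is $\int g\,\d{\bm\mu}^{(n)}_x$ restricted to $[0,t]$, where $g(s)=D_x^2 f(b(s),x(s))$ is continuous. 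I would obtain its convergence to $\tfrac12\sum_{i,j}\int_0^t g_{ij}\,\d[x^i,x^j]_\Pi$ from vague convergence. Because the limit measures are non-atomic (with off-diagonal parts controlled by \eqref{ineq:off-diagonal-domination}), the indicator cutoff at $t$ does not matter: approximate $\mathbbm 1_{[0,t]}$ from above and below by continuous functions. The cross terms follow via polarization, as in Definition~\ref{def:vector-QC}.

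All other terms converge, so the left Riemann sums $\sum_k\nabla_x f(b(t_{k-1}),x(t_{k-1}))^\top\Delta_k x$ converge too. This proves existence of the pathwise integral and the formula simultaneously.

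The step I expect to be the main obstacle is upgrading pointwise convergence to \emph{uniform} convergence in $t$. I would prove that each of the three limit pieces converges uniformly. The $b$-sums and the Taylor remainders are bounded uniformly in $t$. For the quadratic term I would use that the distribution functions $t\mapsto{\bm\mu}^{(n)}([0,t])$ are monotone (diagonal case) and converge pointwise to a continuous limit, so a Dini/Pólya-type argument makes the convergence uniform. The off-diagonal entries are then handled by polarization into differences of monotone functions. The deviation caused by the last, truncated interval is at most $\sup_k|\Delta_k x|^2\to0$.
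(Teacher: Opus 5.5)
Your proposal is correct. It is the standard F\"ollmer Taylor-expansion argument: first order in the finite-variation control $b$ and second order in $x$, with remainders controlled by uniform continuity and the bounded discrete quadratic variation, and the quadratic term handled by weak convergence to non-atomic limits (using polarization and \eqref{ineq:off-diagonal-domination} for the cross terms). This is the scheme behind \cite[Theorem~9]{schied2014}, which the paper cites in place of a proof, and your Pólya step (most cleanly applied to the monotone maps $t\mapsto\int_{[0,t]}g^{\pm}\,\d\nu^{(n)}$, where $\nu^{(n)}$ are the nonnegative polarized discrete measures) supplies the uniform-in-$t$ convergence asserted in the statement.
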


\medskip
 
\subsection{Non-anticipative decompositions with finite-resolution}   \label{subsec:nonanticipative-decomposition}

In the standard setting of Mathematical Finance, the cumulative returns $R=(R^1,\ldots,R^d)$ of a fixed collection consisting of $d$ financial assets, admit a canonical decomposition
\begin{equation}    \label{eq:R-canonical-decomposition}
    R=A+M,
\end{equation}
where $A = (A_1, \ldots, A_d)$ is a vector of continuous processes with finite first variation on compact intervals, and $M=(M_1,\ldots,M_d)$ is a vector of continuous local martingales. The former are understood as the ``drift'', or ``trend'', components of the cumulative returns, and the latter as the ``noise'' components. 

In a purely pathwise setting, there is no such probabilistic picture. A continuous path $R$ does \emph{not} come equipped with a drift/noise decomposition of the form \eqref{eq:R-canonical-decomposition}. For this reason, we shall replace here the drift component by a \emph{non-anticipative trend characteristic}  extracted from past observations of the trajectory of returns $R$, and provide a pathwise analogue of the semimartingale decomposition \eqref{eq:R-canonical-decomposition}.

This will be done as follows: we fix an ``observation window'' length $h>0$, a ``trading horizon'' $T \gg h$, as well as a refining sequence $\Pi=(\Pi_n)_{n\in \mathbb N}$ of partitions of the interval $[0,T]$; and consider a continuous, $\R^d$-valued function $R$ defined on $[-h,T]$ and of finite quadratic covariation along $\Pi$ on $[0, T]$:
\begin{equation}    \label{def:return}
    R \in C([-h,T];\R^d), \qquad R|_{[0,T]} \in Q_\Pi([0,T];\R^d).
\end{equation}
We think of $R$ as the path, or trajectory, of $d$ financial assets' returns, which are observed over the time interval $[-h,T]$: we interpret $[-h,0)$ as a ``pre-trading observation period'', and $[0, T]$ as the actual trading period. And for each time $t\in[0,T]$ in the trading period, we consider the ``rescaled past window path'' $R^{t,h}  \in C([0,1];\mathbb R^d)$ given by
\begin{equation}    \label{rescaled-past-window-return-path}
    R^{t,h}(u):=R(t-h+hu),\qquad 0 \le u \le 1.
\end{equation}

\begin{definition}[Non-anticipative Trend Extractor]    \label{def:nonanticipative-trend-extractor}
    Given an $\R^d$-valued path $R$ as in \eqref{def:return}, a refining sequence $\Pi = (\Pi_n)_{n \in \N}$ of partitions of $[0, T]$, and an observation window $h>0$, a Borel-measurable map $\mathcal D:C([0,1];\mathbb R^d) \to \mathbb R^d$ is called \emph{non-anticipative trend extractor for} $(R,\Pi,h)$, if the so-called ``trend signal''
    \begin{equation}    \label{def:trend-signal}
        \alpha^{h,\mathcal D}(t) := \frac1h \,\mathcal D(R^{t,h}), \qquad 0 \le t \le T
    \end{equation}
    belongs to the space $\mathbb{L}^1([0,T];\mathbb R^d)$.
\end{definition}

The adjective ``non-anticipative'' reflects the fact that the signal in \eqref{def:trend-signal} is determined, for each $t \in [0, T]$, only on the basis of the ``past-window'' trajectory $(R(s), t-h \le s \le t)$, as is clear from \eqref{rescaled-past-window-return-path}. From the trend signal of \eqref{def:trend-signal} we introduce now the \emph{cumulative trend function}
\begin{equation}    \label{def:cumulative-trend-function}
    A^{h, \mathcal D}(t):=\int_0^t \alpha^{h,\mathcal D}(s)\, \d s, \qquad 0 \le t \le T
\end{equation}
as well as the \emph{residual function}
\begin{equation}    \label{def:residual-function}
    M^{h,\mathcal D}(t) := R(t)-R(0)-A^{h,\mathcal D}(t), \qquad 0 \le t \le T;
\end{equation}
and note that $A^{h,\mathcal D}(t)$ depends on the collection of past sliding windows, deployed up to time $t$,  hence on the restricted path $(R(u), -h \le u \le t)$; but {\it never} on values of $R(\cdot)$ after time $t$. The resulting decomposition
\begin{equation}    \label{eq:resulting-decomposition}
    R(t) = R(0) + A^{h,\mathcal D}(t)+M^{h,\mathcal D}(t),\qquad 0 \le t \le T
\end{equation}
of the continuous path $R $ at time $t$, consists of the component $A^{h, \mathcal D}(t)$ in \eqref{def:cumulative-trend-function} obtained by accumulating over $[0,t]$  the trend signal of \eqref{def:trend-signal} generated by the extractor $\mathcal D$, and of the residual $M^{h,\mathcal D}(t)$ in \eqref{def:residual-function}. Because $\alpha^{h, \mathcal D} $ belongs to $\,\mathbb{L}^1([0,T];\mathbb R^d)$, the function $A^{h,\mathcal D}$ is absolutely continuous relative to Lebesgue measure, therefore of finite first variation on the compact interval $[0,T]$. Applying Lemma~\ref{lem:BV-zero-QV} to the continuous, finite-variation path $R(0)+A^{h,\mathcal D}   =: B $, we deduce that the continuous path $M^{h,\mathcal D}  = R -B $ in \eqref{def:residual-function} has finite quadratic covariation along the sequence of partitions $\Pi$, i.e.,
\[
    M^{h, \mathcal D}   \in Q_\Pi([0,T];\R^d),
\]
and that
\begin{equation}    \label{eq:M-R-quadratic-covariation}
    C^{h, \mathcal D}  := \big[M^{h,\mathcal D},M^{h,\mathcal D} \big]_\Pi   =\big[R,R\big]_\Pi  \,.
\end{equation}
In other words, the residual process $M^{h,\mathcal D} $ inherits the entire ``second-order roughness'', i.e., the F\"ollmer quadratic covariation, of the original return path $R$.

The components of the resulting pair $(A^{h,\mathcal D},C^{h,\mathcal D})$ of \eqref{def:cumulative-trend-function}, \eqref{eq:M-R-quadratic-covariation} play now the role of \emph{finite-resolution pathwise characteristics}: the function $A^{h, \mathcal D} $ is the finite-variation trend characteristic induced by the chosen extractor $\mathcal D$, while $C^{h,\mathcal D} $ is the F\"ollmer quadratic covariation as in  \eqref{def:quadratic-covariation} carried by both the residual $M^{h,\mathcal D} $ and by the original return path $R$. 

\medskip
\noindent
\emph{Operational Clock}: We introduce now an ``operational clock'', very much in accordance with the role played by this notion in the continuous semimartingale setting of \cite{KaratzasKardaras2021}. Recalling the notation of \eqref{def:cumulative-trend-function}, \eqref{eq:M-R-quadratic-covariation}, we define the internal clock function
\begin{equation}    \label{def:O-clock}
    O^{h, \mathcal D}(t) := \sum_{i=1}^d \Big( \breve{A}^{h,\mathcal D}_i(t) + C^{h,\mathcal D}_{ii}(t) \Big), \qquad 0 \le t \le T,
\end{equation}
with $\breve{A}(t)$ denoting the total variation on $[0,t]$ of the continuous function $A:[0, T] \to \R$. The absolute continuity of $A^{h,\mathcal D}(\cdot)$, and the fact that the continuous, matrix-valued function $C^{h,\mathcal D}(\cdot)$ has nonnegative-definite increments, imply that $O^{h, \mathcal D}$ in \eqref{def:O-clock} is continuous and nondecreasing. In addition, the signed measure induced by each $A^{h, \mathcal D}_i  $ in \eqref{def:cumulative-trend-function} is absolutely continuous with respect to the one induced by $O^{h,\mathcal D} $; whereas, by the domination property \eqref{ineq:off-diagonal-domination}, the same holds for the signed covariation measure induced by each $C^{h, \mathcal D}_{ij}$.

\smallskip
We   define now one-sided Radon-Nikod\'ym density to introduce Borel-measurable ``rate'' functions in \eqref{def:non-anticipative-local-characteristics} below, with respect to the measure induced by $O^{h, \mathcal D}$, in a non-anticipative manner.

\begin{definition}[Non-anticipative Density]
\label{def:non-anticipative-one-sided-density}
    Let $E$ be a finite-dimensional Euclidean space, let $F:[0,T] \to E$ be continuous and of finite variation, and let $O:[0,T] \to [0,\infty)$ be continuous and nondecreasing.  For $r\in\mathbb Q\cap(0,\infty)$, set
    \[
        q_r^{F\mid O}(t) :=
        \begin{cases}
            \displaystyle
            \frac{F(t)-F((t-r)\vee0)}{O(t)-O((t-r)\vee0)},
            & \qquad O(t)>O((t-r)\vee0),\\[1.2em]
            \qquad 0, & \qquad O(t)=O((t-r)\vee0),
        \end{cases}
    \]
    and define the \emph{non-anticipative $O$-density} of $F$ as
    \begin{equation} \label{def:non-anticipative-density-operator}
        \mathrm D_O^-F(t) :=
        \begin{cases}
            \displaystyle
            \lim_{\substack{r\downarrow0\\ r\in\mathbb Q}}
            q_r^{F\mid O}(t),
            & \qquad \text{if this finite limit exists in $E$},\\[0.8em]
            \qquad 0, & \qquad \text{otherwise}.
        \end{cases}
    \end{equation}
\end{definition}

\begin{lemma}[Non-anticipative Radon--Nikod\'ym Representative]   \label{lem:non-anticipative-radon-nikodym-representative}
    Let $F$ and $O$ be as in Definition~\ref{def:non-anticipative-one-sided-density}, and suppose that the $E$-valued measure induced by $F$ is absolutely continuous with respect to the measure induced by $O$. Then $\mathrm D_O^-F$ of \eqref{def:non-anticipative-density-operator} is Borel-measurable, and
    \begin{equation}    \label{eq:non-anticipative-rn-representation}
        F(t)-F(0)=\int_0^t \mathrm D_O^-F(s)\,\d O(s), \qquad 0\le t\le T.
    \end{equation}
    Equivalently, $\mathrm D_O^-F$ is a version of $\d F/\d O$.

    If $E=\mathbb R^{d\times d}$, $F(0)=0$, and $F$ has symmetric nonnegative-definite increments, then $\mathrm D_O^-F(t)$ is symmetric and nonnegative definite for every $t\in[0,T]$.
\end{lemma}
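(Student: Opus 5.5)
We establish, in order, measurability, the representation \eqref{eq:non-anticipative-rn-representation}, and the matrix claim.

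\emph{Measurability.} For fixed rational $r>0$, the maps $t\mapsto F(t)-F((t-r)\vee0)$ and $t\mapsto O(t)-O((t-r)\vee0)$ are continuous. The set $\{O(t)>O((t-r)\vee0)\}$ is therefore open, and $q_r^{F\mid O}$ is Borel. Since the rationals in $(0,\infty)$ are countable, the set where $\lim_{r\downarrow0,\,r\in\mathbb Q}q_r^{F\mid O}(t)$ exists finitely is Borel: write it through $\limsup$ and $\liminf$ over $r\in\mathbb Q\cap(0,1/m)$ of each coordinate, as $m\to\infty$. The limit is Borel on that set, so $\mathrm D_O^-F$ is Borel.

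\emph{Representation.} Let $\mu$ be the nonnegative finite measure induced by $O$ on $[0,T]$; it has no atoms because $O$ is continuous. Let $\nu$ be the $E$-valued measure induced by $F$. By hypothesis $\nu\ll\mu$, so the Radon--Nikod\'ym theorem (applied coordinatewise) gives some $g\in\mathbb L^1(\mu;E)$ with $\nu=g\,\mu$. It suffices to show that $\mathrm D_O^-F=g$ $\mu$-a.e. For $t$ with $O(t)>O(t-r)$ we have
\[
q_r^{F\mid O}(t)=\frac{\nu((t-r,t])}{\mu((t-r,t])},
\]
which is the average of $g$ over the left interval $(t-r,t]$. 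I will invoke the Lebesgue differentiation theorem for a Radon measure on $\mathbb R$ with respect to the differentiation basis of one-sided intervals $(t-r,t]$. In one dimension this basis satisfies a Vitali-type covering property for arbitrary Radon measures: intervals containing their right endpoint can be handled by the Besicovitch covering theorem, or by the rising-sun lemma through a weak-type maximal inequality. Hence $\mu((t-r,t])^{-1}\int_{(t-r,t]}|g-g(t)|\,\d\mu\to0$ for $\mu$-a.e.\ $t$ as $r\downarrow0$ over all real $r$, in particular over rational $r$.

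A cleaner route is the time change $\tau(s)=\inf\{t:O(t)>s\}$, which converts the problem into ordinary Lebesgue differentiation. Two exceptional sets must be checked to be $\mu$-null. The first is the set of $t$ with $\mu((t-r,t])=0$ for some $r>0$, i.e.\ the union of the flat stretches of $O$ together with their right endpoints. Each flat stretch is $\mu$-null, and there are only countably many such stretches, each contributing at most one right endpoint, which is a single non-atomic point; so this set is $\mu$-null. The second is the point $t=0$, which is a single point and hence $\mu$-null. Consequently $\mathrm D_O^-F=g$ $\mu$-a.e., and \eqref{eq:non-anticipative-rn-representation} follows. I expect this step to be the main obstacle: making the one-sided differentiation rigorous for a general continuous $O$, either through the covering argument or through the time-change bookkeeping.

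\emph{Matrix claim.} Suppose $E=\mathbb R^{d\times d}$, $F(0)=0$, and $F$ has symmetric nonnegative-definite increments. Then each $q_r^{F\mid O}(t)$ is either $0$ or an increment of $F$ divided by a positive number, so it is symmetric and nonnegative definite. The cone of symmetric nonnegative-definite matrices is closed, so any finite limit of such quotients stays in the cone. In the remaining case $\mathrm D_O^-F(t)$ is set to $0$, which also lies in the cone. Hence $\mathrm D_O^-F(t)$ is symmetric and nonnegative definite for every $t\in[0,T]$, as claimed.
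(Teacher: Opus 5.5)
Your proposal is correct and follows essentially the same route as the paper. It gets Borel measurability of the rational quotients $q_r^{F\mid O}$ via the Cauchy criterion, identifies $\mathrm D_O^-F$ with $\d F/\d O$ almost everywhere by the one-sided Lebesgue differentiation theorem for the finite non-atomic measure induced by $O$ (applied componentwise), and uses closedness of the nonnegative-definite cone for the matrix claim.

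The differentiation step you flag is exactly what the paper simply cites. Your time-change sketch justifies it soundly, but the exceptional set must also include the $\mu$-null preimage under $O$ of the Lebesgue-null set where the time-changed $F$ is not differentiable. The time-change route is preferable to the centred Besicovitch theorem, which does not apply verbatim to the intervals $(t-r,t]$.
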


\begin{proof}
    Let $\mu_O$ be the finite Borel measure induced by $O$. By the one-sided Lebesgue differentiation theorem for finite Borel measures on $\mathbb R$, for every scalar component of $F$ the quotient in \eqref{def:non-anticipative-density-operator} converges to the corresponding Radon--Nikod\'ym derivative for $\mu_O$-almost every $t$.  Applying this componentwise gives \eqref{eq:non-anticipative-rn-representation}.

    For each rational $r>0$, the map $t\mapsto q_r^{F\mid O}(t)$ is Borel-measurable. The set on which the rational one-sided limit exists is a Borel set by the Cauchy criterion, and the limit on this set is Borel; the convention of assigning $0$ on its complement therefore yields a Borel-measurable map.

    In the matrix-valued case, whenever the denominator is positive, $q_r^{F\mid O}(t)$ is a positive scalar multiple of a symmetric nonnegative-definite increment of $F$.  The cone of symmetric nonnegative-definite matrices is closed, and the exceptional value is chosen to be the zero matrix.  The final assertion follows.
\end{proof}

We now use Lemma~\ref{lem:non-anticipative-radon-nikodym-representative} to select versions of the local characteristics defined pointwise. Namely, we set
\begin{equation}    \label{def:non-anticipative-local-characteristics}
    a^{h,\mathcal D}(t) :=\mathrm D_{O^{h,\mathcal D}}^-A^{h,\mathcal D}(t),
    \qquad
    c^{h,\mathcal D}(t) :=\mathrm D_{O^{h,\mathcal D}}^-C^{h,\mathcal D}(t).
\end{equation}
Then $a^{h,\mathcal D}$ and $c^{h,\mathcal D}$ are Borel measurable, $c^{h,\mathcal D}(t)$ is symmetric and nonnegative definite for every $t$, and
\begin{equation*}
    A^{h,\mathcal D}(t) = \int_0^t a^{h,\mathcal D}(s)\,\d O^{h,\mathcal D}(s),
    \qquad
    C^{h,\mathcal D}(t) = \int_0^t c^{h,\mathcal D}(s)\,\d O^{h,\mathcal D}(s).
\end{equation*}
Moreover, the definition of the clock gives
\[
    |a_i^{h,\mathcal D}(t)| \le 1, \qquad c^{h,\mathcal D}(t)\succeq0, \qquad \operatorname{Tr}\big(c^{h,\mathcal D}(t)\big) \le 1
\]
for every $t \in [0,T]$. Indeed, the same inequalities hold for every difference quotient in
\eqref{def:non-anticipative-density-operator}.

\begin{definition}[Non-anticipatively Selected Pathwise Local Characteristics]  \label{def:pathwise-local-characteristics-extractor}
    For a return path $R$, a partition sequence $\Pi$, a window length $h>0$, and a non-anticipative trend extractor $\mathcal D$, the pair $(a^{h,\mathcal D},c^{h,\mathcal D})$ in \eqref{def:non-anticipative-local-characteristics} is called the pair of \emph{non-anticipatively selected pathwise local characteristics} induced by $(\mathcal D,h,\Pi)$.
\end{definition}

\medskip

\subsection{Examples of non-anticipative trend extractors}  \label{subsec:FS-examples-nonanticipative-extractors}

We introduce now canonical trend extractors via the familiar Faber--Schauder representation of continuous paths.

Let us denote by $\Pi^*=(\Pi^*_n)_{n\in \mathbb{N}}$ the sequence of dyadic partitions of the interval $[0,1]$, i.e., with $t^{(n)}_k=k2^{-n}$, $k = 0, 1, \ldots, 2^n$ and $T=1$ in \eqref{def:refining-partitions}, and consider the dyadic Haar functions
\begin{equation*}
    H^{(n)}_{k}(u) := 2^{n/2} H(u2^n-k), \qquad n \in \N_0, \quad k\in I_n:=\{0,1,\dots,2^n-1\}
\end{equation*}
for
\[
    H(u):=
    \begin{cases}
    ~1, & u\in [0,\frac12)\\
    -1, & u\in [\frac12,1)\\
    ~0, & \text{otherwise}
    \end{cases}.
\]
With $\N_{-1}:=\{-1,0,1,\dots\}$, $I_{-1}:=\{0\}$ and $H^{(-1)}_{0}(t) \equiv 1 $ for $0 \le t \le 1$, the system $\{H^{(n)}_{k}(\cdot) : k\in I_n\}_{n \in \N_{-1}}$ is the \emph{Haar Orthonormal Basis} of $L^2([0,1];\mathbb R)$. In terms of this system, the associated Faber--Schauder functions are  
\begin{equation*}
    S^{(n)}_{k}(u):=\int_0^u H^{(n)}_{k}(s)\,\d s, \quad 0 \le u \le 1 \qquad \text{for } n\in\N_{-1},\ k\in I_n.
\end{equation*}

For any given   $x\in C([0,1];\mathbb R^d)$, we have then the corresponding Faber--Schauder expansion
\begin{equation}    \label{eq:FS-expansion}
    x(u)=x(0)+u\big(x(1)-x(0)\big) + \sum_{m \in \N_0} \sum_{k=0}^{2^m-1} \vartheta^{(x)}_{m,k} S^{(m)}_{k}(u), \qquad 0 \le u \le 1,
\end{equation}
where $\vartheta^{(x)}_{m,k} = \big(\vartheta^{(x,i)}_{m,k}\big)_{i=1, \ldots, d}$ is the vector of Faber--Schauder coefficients
\begin{equation*}
    \vartheta^{(x,i)}_{m,k} \,:= \, 2^{m/2} \bigg(2x_i\Big(\frac{2k+1}{2^{m+1}}\Big)-x_i\Big(\frac{k}{2^m}\Big)-x_i\Big(\frac{k+1}{2^m}\Big)\bigg),\qquad i=1,\ldots,d.
\end{equation*}
We write \eqref{eq:FS-expansion} as $x = P_Nx + \Phi_N x$, where
\begin{equation}    \label{def:PNx}
    (P_Nx)(u) := x(0) + u\big(x(1)-x(0)\big) + \sum_{m=0}^{N-1} \sum_{k=0}^{2^m-1} \vartheta^{(x)}_{m,k} S^{(m)}_{k}(u), \qquad 0 \le u \le 1
\end{equation}
is the \emph{$N$-Resolution Faber--Schauder projection} of the path $x$, and 
\[
    (\Phi_Nx)(u) := x(u)-(P_Nx)(u) = \sum_{m \ge N} \sum_{k=0}^{2^m-1} \vartheta^{(x)}_{m,k} S^{(m)}_{k}(u), \qquad 0 \le u \le 1
\]
the \emph{residual projection}. This contains no affine component and no mode of oscillation level coarser than $N$, so we interpret it as the fine-scale fluctuation, or ``noise-like component'', that remains after removing the finite-resolution trend $P_Nx$. We stress that this interpretation is purely pathwise, and involves no probabilistic considerations.

We apply this decomposition to the rescaled past-window return path $R^{t,h} = \big(R^{t,h}(u)\big)_{0 \le u \le 1}$ of \eqref{rescaled-past-window-return-path}, for a given trading time $t \in [0, T]$. In the resulting decomposition
\[
    R^{t,h}=P_NR^{t,h}+\Phi_NR^{t,h},
\]
the term $P_NR^{t,h}$ is piecewise-affine, and so has zero quadratic variation along dyadic partitions of the interval $[0, 1]$. Thus, whenever $R^{t,h}$ admits dyadic quadratic covariation, we have 
\[
    \big[\Phi_NR^{t,h}, \Phi_NR^{t,h} \big]_{\Pi^*} = \big[R^{t,h}, R^{t,h}\big]_{\Pi^*}, \qquad N \in \mathbb N.
\]

\noindent \emph{Remark}: This identity is \emph{local} in the window variable; the global covariation is given as in \eqref{eq:M-R-quadratic-covariation}, namely,
\[
    C^{h,\mathcal D} = \big[M^{h,\mathcal D}, M^{h,\mathcal D}\big]_\Pi = \big[R,R\big]_\Pi.
\]

\bigskip

Here are two examples of non-anticipative trend extractors generated by these 
expansions.

\begin{example}[Last-Cell Faber--Schauder Slope]    \label{ex:last-FS-slope}
    In the notation of \eqref{def:PNx}, the Last-Cell Faber--Schauder slope extractor at resolution $N$, is given by
    \begin{equation}    \label{eq:last-cell-extractor}
        \mathcal D_N^{\mathrm{last}}(x) := 2^N \bigg( \big(P_Nx \big)(1)-\big(P_Nx\big) \Big(1-\frac{1}{2^{N}}\Big) \bigg), \qquad  x\in C([0,1]; \mathbb R^d).
    \end{equation}
    Its associated calendar-time trend signal, in the manner of \eqref{def:trend-signal}, is then
    \begin{equation}    \label{eq:last-cell-signal}
        \alpha^{h, N}_{\mathrm{last}}(t) =  \frac1h \mathcal D_N^{\mathrm{last}}(R^{t,h}) = \frac{2^N}{h} \bigg( \big(P_NR^{t,h}\big)(1)-\big(P_NR^{t,h}\big)\Big(1-\frac{1}{2^N}\Big) \bigg), \qquad 0 \le t \le T.
    \end{equation}
    This non-anticipative trend extractor $\mathcal D_N^{\mathrm{last}}$ uses the slope of the Faber--Schauder reconstruction \emph{over the most recent dyadic cell} in the past-observation window. Since $P_Nx$ in \eqref{def:PNx} depends only on finitely-many values of the underlying path, the mapping $x \mapsto \mathcal D_N^{\mathrm{last}}(x)$ of \eqref{eq:last-cell-extractor}  is continuous under the topology of uniform convergence; as a consequence, for every continuous return path $R$, the signal $t\mapsto \alpha^{h,N}_{\mathrm{last}}(t)$ in \eqref{eq:last-cell-signal} is continuous, thus in $\mathbb{L}^1([0,T];\mathbb R^d)$.
    
    Consequently, \eqref{eq:last-cell-extractor} defines a non-anticipative trend extractor $\mathcal D_N^{\mathrm{last}}$ for $(R,\Pi,h)$ at resolution $N$, whenever $R|_{[0,T]}\in Q_\Pi([0,T];\mathbb R^d)$, as in Definition~\ref{def:nonanticipative-trend-extractor}.
\end{example}

\begin{example}[Weighted Faber--Schauder Slope] \label{ex:weighted-FS-slope}
    Consider now a ``weight function'' $w:  [0,1] \to [0, \infty)$ satisfying $\int_0^1 w(u)\,\d u=1$, and define
    \begin{equation}    \label{def:weighted-FS-slope}
        \mathcal D_N^w(x) := \int_0^1 w(u) \frac{\partial}{\partial u} \big(P_N x \big)(u)\,\d u, \qquad x \in C([0,1];\R^d).
    \end{equation}
    Here, the indicated derivative exists for Lebesgue-a.e. $u \in [0, 1]$, because the function $P_Nx$ in \eqref{def:PNx} is piecewise-affine. The associated calendar-time trend signal is then defined as 
    \begin{equation}    \label{def:weighted-FS-slope-trend}
        \alpha^{h,N}_w(t) = \frac1h \, \mathcal D_N^w(R^{t,h}), \qquad 0 \le t \le T.
    \end{equation}
    When $w(\cdot)$ is supported near $u=1$, this extractor emphasizes recent-past information; and the expressions \eqref{eq:last-cell-extractor}, \eqref{eq:last-cell-signal} correspond to the special choice 
    \[
        w(u)=2^N\mathbf 1_{(1-2^{-N},1]}(u)\,.
    \]

    Because $P_Nx$ in \eqref{def:PNx} depends only on finitely-many values of the underlying path, and $u \mapsto \frac{\partial}{\partial u}\big(P_Nx\big)(u)$ is a linear combination of finitely-many Haar functions, the recipe \eqref{def:weighted-FS-slope} defines a continuous linear functional of the Faber--Schauder coefficients up to level $N-1$. As a consequence, given any  continuous path $R$, the signal in \eqref{def:weighted-FS-slope-trend} is continuous and in $\mathbb{L}^1([0,T]; \mathbb R^d)$. Therefore, $\mathcal D_N^w$ is a non-anticipative trend extractor for $(R,\Pi,h)$ at resolution $N$, whenever $R|_{[0,T]}\in Q_\Pi([0,T];\mathbb R^d)$.
\end{example}

\medskip

\section{Portfolios and their characteristics}   \label{sec:portfolio-characteristics}

We are now in a position to introduce and study \emph{portfolios} in a financial market, whose cumulative asset returns are of the form \eqref{eq:resulting-decomposition} introduced and elaborated upon in the previous section. In particular, we study the \emph{return}, \emph{quadratic (co)variation}, and \emph{growth} characteristics of such portfolios, and introduce notions such as ``growth optimality'' and the ``num\'eraire property''. The development follows closely, and parallels, Chapters 1 and 2 of \cite{KaratzasKardaras2021}; and ``translates'' the probabilistic framework of that monograph to the pathwise setting adopted here.

\medskip

\subsection{Portfolios and Growth}  \label{subsec:portfolios-growth}

Throughout this subsection we fix an  ``observation  window'' of length $h>0$, a ``trading horizon" $[0,T]$ with $T   \gg   h$ as in subsection~\ref{subsec:nonanticipative-decomposition}, a return path $R\in C([-h,T];\mathbb R^d)$, a refining sequence $\Pi$ of partitions of $[0, T]$ as in \eqref{def:refining-partitions}, and a non-anticipative trend extractor $\mathcal D : C([0, 1];\R^d) \to \R^d$ for $(R,\Pi,h)$ as in Definition~\ref{def:nonanticipative-trend-extractor}. We recall also the corresponding finite-resolution decomposition
\[
    R(t) = R(0) + A^{h,\mathcal D}(t) + M^{h, \mathcal D}(t), \qquad 0 \le t \le T
\]
in the manner of \eqref{def:cumulative-trend-function}--\eqref{def:residual-function}; as well as the quadratic covariation $C^{h,\mathcal D} $ of \eqref{eq:M-R-quadratic-covariation}, the operational clock of \eqref{def:O-clock} associated with the pair $(A^{h,\mathcal D}, C^{h, \mathcal D})$, and the \emph{pathwise local characteristics} $(a^{h, \mathcal D}, c^{h,\mathcal D})$ from Definition~\ref{def:pathwise-local-characteristics-extractor}.

In terms of these characteristics, we define now \emph{portfolios} and their associated \emph{return}, \emph{covariation}, and \emph{growth} characteristics.

\begin{definition}[Portfolios]  \label{def:portfolios}
    A Borel-measurable function $\pi:[0,T]\to\mathbb R^d$ is called   \emph{portfolio relative to} $(\mathcal D,h,\Pi)$, if
    \[
        \int_0^T \Big(|\pi(t)^\top a^{h,\mathcal D}(t)|+\pi(t)^\top c^{h,\mathcal D}(t) \pi(t)\Big)\, \d O^{h,\mathcal D}(t)<\infty.
    \]
    We denote by $\mathfrak P^{h,\mathcal D}$ the collection of all portfolios.
\end{definition}

The terminology in this definition comes, of course, from the eventual interpretation of the components of the vector $\pi(t) = (\pi_1(t), \ldots, \pi_d(t))^{\top}$ as   ``proportions of current wealth'' invested at time $t$ in each of $d$ financial assets with local characteristics $(a^{h, \mathcal D}, c^{h, \mathcal D})$. We shall elaborate, and build, on this interpretation in Subsection~\ref{subsec:wealth-level}. In the meantime we develop, in the present subsection and the next, some purely formal but important consequences of this definition.

\smallskip
We note first, that the collection $\mathfrak P^{h, \mathcal D}$ is closed under finite linear combinations. Moreover, it is checked easily that, if $\pi\in\mathfrak P^{h,\mathcal D}$, and if $\rho:[0,T]\to\mathbb R^d$ is Borel-measurable and bounded, then $\pi+\rho \in\mathfrak P^{h, \mathcal D}$.

\begin{definition}[Cumulative Portfolio Returns and Covariations]    \label{def:portfolio-return-covariation}
    For portfolios $\pi,\rho\in\mathfrak P^{h, \mathcal D}$, we define the cumulative \emph{return} and cumulative \emph{covariation} characteristics
    \begin{equation}    \label{def:cumulative-return-covariation}
        A^{h, \mathcal D}_\pi(t) := \int_0^t\pi(s)^\top\, \d A^{h, \mathcal D}(s) = \int_0^t \pi(s)^\top a^{h,\mathcal D}(s)\,\d O^{h,\mathcal D}(s), \qquad 0 \le t \le T,
    \end{equation}
    \begin{equation}    \label{def:cumulative-covariation}
        C^{h, \mathcal D}_{\pi,\rho}(t) := \int_0^t \pi(s)^\top c^{h, \mathcal D}(s) \, \rho(s) \, \d O^{h,\mathcal D}(s), \qquad 0 \le t \le T
    \end{equation}
    respectively. We write $C^{h,\mathcal D}_\pi \equiv C^{h,\mathcal D}_{\pi,\pi}$ and note $A^{h, \mathcal D}_{e_i} \equiv A^{h,\mathcal D}_i$, $C^{h,\mathcal D}_{e_i,e_j} \equiv C^{h,\mathcal D}_{ij}$, where $e_i$ denotes the $i$-th unit vector in $\mathbb R^d$.
\end{definition}

It is important to note that the covariation characteristic $C^{h, \mathcal D}_{\pi, \rho}$ is well-defined as in \eqref{def:cumulative-covariation}; indeed,
\[
    \big|\pi(t)^\top c^{h,\mathcal D}(t) \rho(t)\big| \le \Big(\pi(t)^\top c^{h,\mathcal D}(t)\,\pi(t)\Big)^{1/2} \Big(\rho(t)^\top c^{h,\mathcal D}(t) \, \rho(t)\Big)^{1/2},
\]
because $c^{h,\mathcal D}(\cdot)$ is nonnegative definite $\d O^{h,\mathcal D}$-a.e., and the right-hand side is $\d O^{h,\mathcal D}$-integrable. In particular, $C^{h,\mathcal D}_{\pi, \rho}(\cdot)$ in \eqref{def:cumulative-covariation} is of finite first variation, while $C^{h,\mathcal D}_\pi(\cdot)$ is nondecreasing.

\begin{definition}[Portfolio Growth]    \label{def:portfolio-growth}
    For a given portfolio $\pi\in\mathfrak P^{h,\mathcal D}$, we define the associated \emph{growth rate}
    \begin{equation}    \label{def-gamma-pi}
        \gamma^{h,\mathcal D}_\pi(t) := \pi(t)^\top a^{h,\mathcal D}(t)-\frac12 \,\pi(t)^\top c^{h,\mathcal D}(t) \, \pi(t), \qquad 0 \le t \le T
    \end{equation}
    with respect to the clock $O^{h, \mathcal D}(\cdot)$, and the associated \emph{cumulative growth}
    \begin{equation}    \label{def-Gamma}
        \Gamma^{h, \mathcal D}_\pi(t) := \int_0^t \gamma^{h,\mathcal D}_\pi(s)\,\d O^{h, \mathcal D}(s) = A^{h,\mathcal D}_\pi(t)-\frac12 \,C^{h,\mathcal D}_\pi(t), \qquad 0 \le t \le T.
    \end{equation}
    And for any portfolio $\rho\in\mathfrak P^{h, \mathcal D}$, we define the \emph{relative cumulative growth}
    \begin{equation}    \label{def:relative-cumulative-growth}
        \Gamma^{h, \mathcal D}_{\pi | \rho}(t) := \Gamma^{h, \mathcal D}_\pi(t) - \Gamma^{h,\mathcal D}_\rho(t), \qquad 0 \le t \le T.
    \end{equation}
\end{definition}

The following representation for this quantity follows by direct computation.

\begin{lemma}[Relative Growth Identity]\label{lem:relative-growth-identity}
    For any two given portfolios $\pi,\rho$ in $\mathfrak P^{h, \mathcal D}$, the relative cumulative growth of \eqref{def:relative-cumulative-growth} is given as
    \[
        \Gamma^{h,\mathcal D}_{\pi | \rho}(t) = \int_0^t \big(\pi(s)-\rho(s)\big)^\top \big(a^{h,\mathcal D}(s) -c^{h,\mathcal D}(s) \rho(s)\big)\,\d O^{h,\mathcal D}(s)-\frac12 \, C^{h,\mathcal D}_{\pi-\rho}(t)
    \]
    for $0 \le t \le T$; or equivalently, at the level of $O^{h,\mathcal D}$-densities, as
    \[
        \gamma^{h, \mathcal D}_\pi(t) -\gamma^{h, \mathcal D}_\rho(t) = \big(\pi(t)-\rho(t)\big)^\top \bigg( a^{h, \mathcal D}(t) - c^{h,\mathcal D}(t)\rho(t) -\frac12 \,c^{h,\mathcal D}(t)\big(\pi(t)-\rho(t)\big)\bigg).
    \]
\end{lemma}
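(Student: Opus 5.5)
The plan is to verify the density-level identity pointwise for each fixed $t\in[0,T]$, and then integrate it against $\d O^{h,\mathcal D}$. Throughout, write $a=a^{h,\mathcal D}(t)$, $c=c^{h,\mathcal D}(t)$, $\pi=\pi(t)$, $\rho=\rho(t)$, and $\delta=\pi-\rho$. By Lemma~\ref{lem:non-anticipative-radon-nikodym-representative}, $c$ is symmetric for every $t$ (not merely a.e.), so the pointwise algebra is legitimate everywhere.

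The first step is the pointwise computation. Starting from \eqref{def-gamma-pi}, we have
\[
\gamma_\pi-\gamma_\rho=\delta^\top a-\tfrac12\big(\pi^\top c\pi-\rho^\top c\rho\big).
\]
Substituting $\pi=\rho+\delta$ and using the symmetry of $c$ gives
\[
\pi^\top c\pi-\rho^\top c\rho=2\delta^\top c\rho+\delta^\top c\delta.
\]
Hence
\[
\gamma_\pi-\gamma_\rho=\delta^\top\big(a-c\rho-\tfrac12 c\delta\big),
\]
which is exactly the second display of the statement.

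The second step is integrability. Definition~\ref{def:portfolios} makes $|\pi^\top a|$, $|\rho^\top a|$, $\pi^\top c\pi$ and $\rho^\top c\rho$ integrable against $\d O^{h,\mathcal D}$. The Cauchy--Schwarz bound recorded after Definition~\ref{def:portfolio-return-covariation} then makes $\delta^\top c\rho$ integrable. Since $\mathfrak P^{h,\mathcal D}$ is closed under linear combinations, $\delta\in\mathfrak P^{h,\mathcal D}$, so $C^{h,\mathcal D}_{\pi-\rho}$ is well defined and finite.

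The third step is integration. Integrate the pointwise identity over $[0,t]$ against $\d O^{h,\mathcal D}$. By linearity of the integral, the left side becomes $\Gamma^{h,\mathcal D}_\pi(t)-\Gamma^{h,\mathcal D}_\rho(t)=\Gamma^{h,\mathcal D}_{\pi|\rho}(t)$, using \eqref{def-Gamma} and \eqref{def:relative-cumulative-growth}. By \eqref{def:cumulative-covariation}, the term $-\tfrac12\int_0^t\delta^\top c\,\delta\,\d O^{h,\mathcal D}$ equals $-\tfrac12 C^{h,\mathcal D}_{\pi-\rho}(t)$. This yields the first display.

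No step is a real obstacle. The only point needing care is the second step: we must check that the integral is split only into individually integrable pieces, so that linearity applies with no $\infty-\infty$ ambiguity.
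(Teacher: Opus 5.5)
Your proposal is correct and is the same argument as the paper's, which states only that the identity ``follows by direct computation'': you expand the quadratic forms using the symmetry of $c^{h,\mathcal D}(t)$ to get the density identity, then integrate against $\d O^{h,\mathcal D}$. Your integrability check, using the Cauchy--Schwarz bound and $\pi-\rho\in\mathfrak P^{h,\mathcal D}$, correctly fills in the detail the paper leaves implicit, namely that the integral splits into finite pieces.
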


The expression \eqref{def-gamma-pi} for the instantaneous growth rate, suggests defining the \emph{maximal growth rate}, corresponding to a given pair $(h, \mathcal D)$, as the $[0,\infty]$-valued function
\begin{equation}    \label{def:maximal-growth-rate}
    g^{h,\mathcal D}(t) := \sup_{p\in\mathbb R^d} \Big(p^\top a^{h,\mathcal D}(t)-\frac12 \, p^\top c^{h,\mathcal D}(t) \,p \Big), \qquad 0 \le t \le T;
\end{equation}
and the \emph{cumulative maximal growth} as
\begin{equation}    \label{def:cumulative-maximal-growth}
    G^{h,\mathcal D}(t) := \int_0^t g^{h,\mathcal D}(s)\,\d O^{h,\mathcal D}(s), \qquad 0 \le t \le T.
\end{equation}

It is important to note here, that the supremum in \eqref{def:maximal-growth-rate} can be taken equivalently over $\mathbb{Q}^d$; which shows that the function $g^{h, \mathcal D}$ is Borel-measurable.

\begin{definition}[Locally Finite Growth]    \label{def:finite-resolution-locally-finite-growth}
    The return path $R \in C([-h, T];\R^d)$ is said to have \emph{locally finite growth} on $[0,T]$, relative to the triple $(\mathcal D,h,\Pi)$, if
    \[
        G^{h, \mathcal D}(T)<\infty.
    \]
\end{definition}

We digress now, in order to provide an elementary linear-algebraic characterization for the maximal growth rate $g^{h, \mathcal D}(\cdot)$ of \eqref{def:maximal-growth-rate}, in the spirit of \cite[Chapters 2.1.2, 2.1.3]{KaratzasKardaras2021}. For any symmetric, nonnegative definite $(d\times d)$-matrix $c$, we denote by $c^\dagger$ the symmetric,  nonnegative definite matrix that inverts $c$ on $\operatorname{Range}(c)$, and vanishes on $\operatorname{Ker}(c)$. To wit, if
\[
    c=U\operatorname{diag}(\lambda_1,\ldots,\lambda_d)\,U^\top
\]
is an orthogonal diagonalization of $c$ with $\lambda_i\ge0$; then we denote the pseudo-inverse of $c$ by
\[
    c^\dagger:=U\operatorname{diag}(\lambda_1^\dagger,\ldots,\lambda_d^\dagger) \, U^\top, \qquad \lambda_i^\dagger:= \begin{cases}
    1/\lambda_i, & \lambda_i>0,\\
    ~0, & \lambda_i=0,
\end{cases}
\]
In this manner, $cc^\dagger=c^\dagger c$ is the orthogonal projection onto $\operatorname{Range}(c)$; and $cc^\dagger a=a$ holds if, and only if, $a\in\operatorname{Range}(c)$. 

We have then the following elementary result.

\begin{lemma}[Pointwise Maximal Growth]\label{lem:pointwise-maximal-growth}
    With $a\in\mathbb R^d$ and symmetric, nonnegative-definite $(d\times d)$-matrix $c$, we have
    \[
        \sup_{p\in\mathbb R^d}\Big(p^\top a-\frac12 \,p^\top c\,p\Big)<\infty \qquad \iff \qquad a\in \operatorname{Range}(c).
    \]
    In that case, the above supremum equals $\frac12 a^{\top}c^{\dagger}a$, and $\{c^\dagger a+\zeta:\zeta\in \operatorname{Ker}(c)\}$ is the set of its maximizers; whereas, if $a\notin\operatorname{Range}(c)$, this supremum is infinite.
\end{lemma}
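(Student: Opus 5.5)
The plan is to use the orthogonal decomposition $\mathbb R^d=\operatorname{Range}(c)\oplus\operatorname{Ker}(c)$, which is valid because $c$ is symmetric. We write $a=a_R+a_K$ with $a_R:=cc^\dagger a\in\operatorname{Range}(c)$ and $a_K:=(I-cc^\dagger)a\in\operatorname{Ker}(c)$, and set $\phi(p):=p^\top a-\frac12 p^\top c\,p$. Both implications, the value of the supremum, and the set of maximizers should then follow from elementary quadratic-form manipulations.

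\emph{Case $a\notin\operatorname{Range}(c)$.} Here $a_K\neq 0$. We test along the ray $p=\lambda a_K$ with $\lambda>0$. Since $c\,a_K=0$, we get $\phi(\lambda a_K)=\lambda\,a_K^\top a=\lambda|a_K|^2$, using $a_K\perp a_R$. This tends to $+\infty$ as $\lambda\to\infty$, so the supremum is infinite. This gives the direction ``$\Rightarrow$'' by contraposition, together with the final assertion of the lemma.

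\emph{Case $a\in\operatorname{Range}(c)$.} Now $a=cc^\dagger a$, and we put $p^*:=c^\dagger a$. Completing the square gives, for every $p$,
\[
\phi(p)=\tfrac12 a^\top c^\dagger a-\tfrac12 (p-p^*)^\top c\,(p-p^*).
\]
To check this, expand the right-hand side and use $c\,p^*=cc^\dagger a=a$ together with $p^{*\top}c\,p^*=a^\top c^\dagger c\,c^\dagger a=a^\top c^\dagger a$. The second identity uses $c^\dagger c c^\dagger=c^\dagger$, which is immediate from the common diagonalization. Since $c\succeq0$, it follows that $\phi\le\frac12 a^\top c^\dagger a<\infty$, with equality exactly when $(p-p^*)^\top c\,(p-p^*)=0$.

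It remains to identify the maximizers. For $c\succeq 0$, the condition $v^\top c\,v=0$ is equivalent to $c\,v=0$; one sees this by writing $v^\top c v=|c^{1/2}v|^2$ or by using the eigen-decomposition. Hence the maximizers are exactly $p=c^\dagger a+\zeta$ with $\zeta\in\operatorname{Ker}(c)$.

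There is no real obstacle in this argument. The only points needing care are the pseudo-inverse identities $cc^\dagger c=c$ and $c^\dagger cc^\dagger=c^\dagger$, and the fact that $cc^\dagger$ is the orthogonal projection onto $\operatorname{Range}(c)$. All of these can be read off from $U\operatorname{diag}(\lambda_i)U^\top$ and $U\operatorname{diag}(\lambda_i^\dagger)U^\top$, since $\lambda_i\lambda_i^\dagger\in\{0,1\}$.
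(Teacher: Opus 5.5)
Your proof is correct and matches the paper's argument: in the case $a\notin\operatorname{Range}(c)$ the paper also moves along a ray $\lambda\zeta$ with $\zeta\in\operatorname{Ker}(c)$ and $\zeta^\top a\neq0$, and your $a_K=(I-cc^\dagger)a$ is a concrete choice of such $\zeta$; in the case $a\in\operatorname{Range}(c)$ it completes the square around $c^\dagger a$ exactly as you do. Your checks of the pseudo-inverse identities and of $v^\top c\,v=0\iff c\,v=0$ fill in steps the paper leaves implicit.
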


\begin{proof}
    Since $c$ is symmetric and nonnegative-definite, $\operatorname{Range}(c)^\perp=\operatorname{Ker}(c)$. If $a\notin\operatorname{Range}(c)$, there exists $\zeta\in \operatorname{Ker} (c)$ such that $\zeta^\top a\ne0$. Then, for $\lambda\in\mathbb R$,
    \[
        (\lambda \, \zeta)^\top a-\frac12(\lambda \, \zeta)^\top c \,(\lambda \, \zeta)=\lambda \, \zeta^\top a,
    \]
    which is unbounded above as $\lambda$ varies. Hence, the supremum is infinite.
    
    Suppose now that $a\in\operatorname{Range}(c)$. Then $cc^\dagger a=a$. For any $p\in\mathbb R^d$,
    \[
        p^\top a-\frac12p^\top c\,p=\frac12a^\top c^\dagger a-\frac12\big(p-c^\dagger a\big)^\top c\big(p-c^\dagger a\big).
    \]
    The last term is nonpositive, and equality holds precisely when $c(p-c^\dagger a)=0$, that is, when $p=c^\dagger a+\zeta$ for some $\zeta\in \operatorname{Ker} (c)$.
\end{proof}

Back to our setup, this result provides the following structural characterization of local  finiteness of growth in the sense of Definition~\ref{def:finite-resolution-locally-finite-growth}.

\begin{proposition}[Structural Characterization of Locally Finite Growth] \label{prop:structural-characterization-lfg}
    The return path $R \in C([-h,T];\R^d)$ has locally finite growth on $[0,T]$ relative to the triple $(\mathcal D, h, \Pi)$ if, and only if, $a^{h,\mathcal D}(t)\in\operatorname{Range}\big(c^{h,\mathcal D}(t)\big)$, $\d O^{h,\mathcal D}$-a.e. on $[0,T]$, and
    \[
        \int_0^T \big(a^{h,\mathcal D}(t)\big)^\top\big(c^{h,\mathcal D}(t)\big)^\dagger a^{h,\mathcal D}(t)\, \d O^{h,\mathcal D}(t) < \infty.
    \]
    In this case, the maximal growth rate of \eqref{def:maximal-growth-rate} is given by 
    \[
        g^{h,\mathcal D}(t) = \frac12 \big(a^{h, \mathcal D}(t)\big)^\top \big(c^{h,\mathcal D}(t)\big)^\dagger a^{h, \mathcal D}(t), \qquad \d O^{h,\mathcal D}\text{-a.e.},
    \]
    and   the cumulative maximal growth of \eqref{def:cumulative-maximal-growth} by
    \[
        G^{h, \mathcal D}(t) = \frac12 \int_0^t \big(a^{h,\mathcal D}(s)\big)^\top \big(c^{h,\mathcal D}(s)\big)^\dagger  a^{h,\mathcal D}(s)\, \d O^{h,\mathcal D}(s), \qquad 0 \le t \le T.
    \]
\end{proposition}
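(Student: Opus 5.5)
The plan is to apply Lemma~\ref{lem:pointwise-maximal-growth} pointwise, at every $t\in[0,T]$, with $a=a^{h,\mathcal D}(t)$ and $c=c^{h,\mathcal D}(t)$. This is legitimate because, by Lemma~\ref{lem:non-anticipative-radon-nikodym-representative} and the selection \eqref{def:non-anticipative-local-characteristics}, $c^{h,\mathcal D}(t)$ is symmetric and nonnegative definite for \emph{every} $t$, not merely almost everywhere. The lemma then gives, for every $t$,
\[
g^{h,\mathcal D}(t)=\tfrac12\, a^{h,\mathcal D}(t)^\top c^{h,\mathcal D}(t)^\dagger a^{h,\mathcal D}(t)\ \text{ if } a^{h,\mathcal D}(t)\in\operatorname{Range}(c^{h,\mathcal D}(t)),
\qquad g^{h,\mathcal D}(t)=\infty \text{ otherwise}.
\]
Measurability is already available: $g^{h,\mathcal D}$ is Borel as a supremum over $\mathbb Q^d$. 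For the integrand in the statement, I would write $c^\dagger=\lim_{\varepsilon\downarrow0}\big((c+\varepsilon I)^{-1}-\varepsilon^{-1}(I-cc^\dagger)\big)$, or simply argue that $a^\top c^\dagger a$ equals $2g^{h,\mathcal D}$ on the Borel set $B:=\{t: g^{h,\mathcal D}(t)<\infty\}$. The latter route avoids any measurability issue with the pseudo-inverse: on $B$ the integrand in the statement coincides with $2g^{h,\mathcal D}$.

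For ($\Rightarrow$), assume $G^{h,\mathcal D}(T)=\int_0^T g^{h,\mathcal D}\,\d O^{h,\mathcal D}<\infty$. Since $g^{h,\mathcal D}\ge0$ (take $p=0$), finiteness of the integral of a $[0,\infty]$-valued function forces $g^{h,\mathcal D}<\infty$ for $\d O^{h,\mathcal D}$-a.e.\ $t$. By the dichotomy of Lemma~\ref{lem:pointwise-maximal-growth}, this is exactly $a^{h,\mathcal D}(t)\in\operatorname{Range}(c^{h,\mathcal D}(t))$ a.e. On that full-measure set $g^{h,\mathcal D}=\frac12 a^\top c^\dagger a$, so the integral in the statement equals $2G^{h,\mathcal D}(T)<\infty$.

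For ($\Leftarrow$), the range condition holds off a $\d O^{h,\mathcal D}$-null set $N$, so $g^{h,\mathcal D}=\frac12 a^\top c^\dagger a$ on $N^c$. The value of $g^{h,\mathcal D}$ on $N$, possibly $+\infty$, does not affect the Lebesgue integral. Hence $G^{h,\mathcal D}(T)$ is half the finite integral in the statement.

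The same a.e.\ identity, integrated over $[0,t]$, yields the formula for $G^{h,\mathcal D}(t)$ and the a.e.\ formula for $g^{h,\mathcal D}$. The main (mild) obstacle is only bookkeeping: using the convention $0\cdot\infty=0$ and the nonnegativity of $g^{h,\mathcal D}$ to handle the null set where it may be infinite, and making sure the integrand is Borel measurable. Both are dealt with by working with $g^{h,\mathcal D}$ itself and identifying it with $\frac12 a^\top c^\dagger a$ on $B$.
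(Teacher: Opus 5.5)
Your proposal is correct and follows the same route as the paper. Finiteness of $G^{h,\mathcal D}(T)$ forces $g^{h,\mathcal D}<\infty$ for $\d O^{h,\mathcal D}$-a.e.\ $t$; Lemma~\ref{lem:pointwise-maximal-growth} turns this into the range condition together with $g^{h,\mathcal D}=\frac12 (a^{h,\mathcal D})^\top (c^{h,\mathcal D})^\dagger a^{h,\mathcal D}$; and the converse and the formula for $G^{h,\mathcal D}(t)$ follow by integrating this identity, while your additional bookkeeping (nonnegativity of $g^{h,\mathcal D}$, the null set where it may be infinite, and measurability via the Borel set $\{g^{h,\mathcal D}<\infty\}$) is sound and simply makes explicit what the paper's two-line proof leaves implicit.
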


\begin{proof}
    If \(G^{h,\mathcal D}(T)<\infty\), then \(g^{h,\mathcal D}<\infty\)\,, ~ \( \d O^{h,\mathcal D}\)-a.e.; Lemma~\ref{lem:pointwise-maximal-growth} then gives the range condition and the formula for \(g^{h,\mathcal D}\). The converse follows by integrating that formula.
\end{proof}

Under local finiteness of growth as in Definition~\ref{def:finite-resolution-locally-finite-growth}, the pseudo-inverse provides a ``canonical'' pointwise maximizer for \eqref{def:maximal-growth-rate} in the class   $\mathfrak{P}^{h, \mathcal D}$, namely,
\begin{equation}    \label{def:canonical-maximizer}
    \nu^{h,\mathcal D}(t) := \big(c^{h,\mathcal D}(t)\big)^\dagger a^{h,\mathcal D}(t), \qquad 0 \le t \le T.
\end{equation}
In the notation of \eqref{def-gamma-pi}, \eqref{def-Gamma}, \eqref{def:maximal-growth-rate} we have then
\begin{equation}    \label{eq:gamma_nu=g}
    \gamma^{h,\mathcal D}_{\nu^{h,\mathcal D}} \equiv g^{h,\mathcal D}, \quad \d O^{h, \mathcal D}\text{-a.e.}; \qquad \Gamma^{h, \mathcal D}_{\nu^{h,\mathcal D}}(t) \equiv G^{h,\mathcal D}(t), \qquad 0 \le t \le T.
\end{equation}

\begin{remark}
    The maximization of pointwise growth rate is unique only up to $c^{h,\mathcal D}$-null portfolios. More precisely, if a portfolio $\widetilde\nu \in \mathfrak P^{h, \mathcal D}$ satisfies
    \[
        c^{h,\mathcal D}(t) \, \widetilde\nu(t) = a^{h,\mathcal D}(t)\,, \quad \d O^{h,\mathcal D}\text{-a.e.},
    \]
    then $\widetilde\nu$ is also a pointwise maximizer, and
    \[
        \widetilde\nu(t) - \nu^{h, \mathcal D}(t) \in \operatorname{Ker} \big(c^{h,\mathcal D}(t)\big)\,, \quad \d O^{h,\mathcal D}\text{-a.e.}
    \]
    Every such maximizer has the same cumulative growth $\Gamma^{h,\mathcal D}_{\widetilde\nu} \equiv G^{h,\mathcal D} $ as $\nu^{h, \mathcal D} $; and the same covariation characteristic with every portfolio $\pi   \in \mathfrak P^{h, \mathcal D}$, namely, $C^{h,\mathcal D}_{\widetilde\nu,\pi}  \equiv C^{h,\mathcal D}_{\nu^{h,\mathcal D},\pi}$.
\end{remark} 

\medskip

\subsection{Growth optimality}    \label{subsec:finite-resolution-growth-optimality}

We develop now consequences of the local finiteness of growth condition $G^{h,\mathcal D}(T)<\infty$ in Definition~\ref{def:finite-resolution-locally-finite-growth}, in terms of the pathwise local characteristic $(a^{h,\mathcal D},c^{h,\mathcal D})$ of Definition~\ref{def:pathwise-local-characteristics-extractor} and of the growth characteristics in Subsection~\ref{subsec:portfolios-growth}.

\begin{definition}[Growth-Optimal Portfolio]  \label{def:finite-resolution-growth-optimal}
    A portfolio $\rho\in\mathfrak P^{h,\mathcal D}$ is called \emph{$(h,\mathcal D)$-growth-optimal}, if the relative cumulative growth $\Gamma^{h,\mathcal D}_{\pi|\rho} = \Gamma^{h,\mathcal D}_\pi-\Gamma^{h,\mathcal D}_\rho$ of \eqref{def:relative-cumulative-growth} is non-increasing on $[0,T]$ for every $\pi\in\mathfrak P^{h,\mathcal D}$.
\end{definition}

\begin{proposition}[Growth Optimality and Structural Condition]  \label{prop:finite-resolution-growth-structural-benchmark}
    In the setting of Subsections~\ref{subsec:nonanticipative-decomposition} and \ref{subsec:portfolios-growth}, the following conditions are equivalent:
    \begin{enumerate}[label=(\roman*)]
        \item The return path $R$ of \eqref{def:return} has locally finite growth on $[0,T]$; that is, $G^{h,\mathcal D}(T)<\infty$ as in Definition~\ref{def:finite-resolution-locally-finite-growth}.
        
        \item The structural conditions of Proposition~\ref{prop:structural-characterization-lfg} hold; namely,
        \[
            a^{h,\mathcal D}(t) \in \operatorname{Range} \big(c^{h,\mathcal D}(t)\big)\,, \qquad \d O^{h,\mathcal D}\text{-a.e. on }[0,T],
        \]
        \[
            \int_0^T \big(a^{h,\mathcal D}(t)\big)^\top \big(c^{h,\mathcal D}(t)\big)^\dagger a^{h,\mathcal D}(t)\,\d O^{h,\mathcal D}(t) < \infty.
        \]
        
        \item There exists a portfolio $\nu\in\mathfrak P^{h,\mathcal D}$ such that
        \begin{equation}    \label{eq:structural-condition}
            a^{h,\mathcal D}(t) = c^{h,\mathcal D}(t)\,\nu(t)\,, \qquad \d O^{h,\mathcal D}\text{-a.e. on }[0,T].
        \end{equation}
        
        \item There exists an $(h, \mathcal D)$-growth-optimal portfolio.
        
        \item There exists a portfolio $\nu\in\mathfrak P^{h,\mathcal D}$ such that, for every portfolio $\pi\in\mathfrak P^{h,\mathcal D}$, we have 
        \[
            \Gamma^{h,\mathcal D}_\nu(t)-\Gamma^{h,\mathcal D}_\pi(t) = \frac12 \,C^{h,\mathcal D}_{\pi-\nu}(t), \qquad 0\le t\le T\,.
        \]
    \end{enumerate}
\end{proposition}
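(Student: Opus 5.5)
The plan is to prove the cycle (i)$\Leftrightarrow$(ii)$\Rightarrow$(iii)$\Rightarrow$(v)$\Rightarrow$(iv)$\Rightarrow$(iii)$\Rightarrow$(i). Throughout, write $a,c,O$ for $a^{h,\mathcal D},c^{h,\mathcal D},O^{h,\mathcal D}$.

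\emph{(i)$\Leftrightarrow$(ii).} This is exactly Proposition~\ref{prop:structural-characterization-lfg}.

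\emph{(ii)$\Rightarrow$(iii).} Take the canonical maximizer $\nu=c^\dagger a$ of \eqref{def:canonical-maximizer}. It is Borel, since the pseudo-inverse is a Borel map on symmetric matrices; one way to see this is $c^\dagger=\lim_{\varepsilon\downarrow0}(c+\varepsilon I)^{-1}c\,(c+\varepsilon I)^{-1}$. The range condition gives $c\nu=cc^\dagger a=a$, $\d O$-a.e. To check $\nu\in\mathfrak P^{h,\mathcal D}$, note the two identities
\[
\nu^\top c\,\nu=a^\top c^\dagger c\,c^\dagger a=a^\top c^\dagger a, \qquad |\nu^\top a|=a^\top c^\dagger a .
\]
Both are integrable against $\d O$ by (ii).

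\emph{(iii)$\Rightarrow$(v).} Substitute $a=c\nu$ into the Relative Growth Identity, Lemma~\ref{lem:relative-growth-identity}, with $\rho=\nu$. The first integrand vanishes $\d O$-a.e., so
\[
\Gamma_{\pi|\nu}=-\tfrac12 C_{\pi-\nu}.
\]
This is (v). The integrability needed to split the integrals comes from $\pi,\nu\in\mathfrak P^{h,\mathcal D}$, closure of $\mathfrak P^{h,\mathcal D}$ under differences, and the Cauchy--Schwarz bound for $C_{\pi,\rho}$.

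\emph{(v)$\Rightarrow$(iv).} $C_{\pi-\nu}$ is nondecreasing, so $\Gamma_{\pi|\nu}=-\frac12C_{\pi-\nu}$ is non-increasing. Hence $\nu$ is growth-optimal.

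\emph{(iv)$\Rightarrow$(iii).} This is the main step. Let $\rho$ be growth-optimal and set $b:=a-c\rho$. Growth-optimality makes the measure
\[
\Big(\zeta^\top b-\tfrac12\zeta^\top c\,\zeta\Big)\,\d O
\]
nonpositive for each bounded Borel $\zeta$, using $\pi=\rho+\zeta\in\mathfrak P^{h,\mathcal D}$. A non-increasing absolutely continuous function has density $\le 0$ a.e., so for each fixed $q\in\mathbb Q^d$ and $\lambda\in\mathbb Q$, taking $\zeta\equiv\lambda q$ gives
\[
\lambda q^\top b-\tfrac12\lambda^2 q^\top c\,q\le 0
\]
off an $O$-null set. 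Take the union of these countably many null sets. Outside it, letting $\lambda\to0^\pm$ gives $q^\top b=0$ for all $q\in\mathbb Q^d$, hence $b=0$, so $a=c\rho$ $\d O$-a.e. Then $\nu:=\rho$ works. The obstacle here is only the measure-theoretic passage from "non-increasing for every portfolio" to a pointwise a.e. inequality. Constant test directions suffice, and they avoid any measurable-selection issue.

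\emph{(iii)$\Rightarrow$(i).} With $a=c\nu$ we have $a\in\operatorname{Range}(c)$, and Lemma~\ref{lem:pointwise-maximal-growth} gives
\[
g=\tfrac12a^\top c^\dagger a=\tfrac12\nu^\top cc^\dagger c\,\nu=\tfrac12\nu^\top c\,\nu ,
\]
which is $\d O$-integrable because $\nu\in\mathfrak P^{h,\mathcal D}$. Hence $G(T)<\infty$, closing the cycle.
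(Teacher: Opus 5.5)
Your proof is correct and follows essentially the same route as the paper. The common steps are the canonical maximizer $c^\dagger a$ for (ii)$\Rightarrow$(iii), the Relative Growth Identity with $\rho=\nu$ for (iii)$\Rightarrow$(v)$\Rightarrow$(iv), and a perturbation $\rho+\zeta$ of a growth-optimal $\rho$ for (iv)$\Rightarrow$(iii). Using countably many rational constant directions $\lambda q$ in that last step makes explicit the null-set bookkeeping that the paper's ``divide by $\varepsilon$ and let $\varepsilon\downarrow0$'' leaves implicit. Closing the cycle via (iii)$\Rightarrow$(i) with $g=\tfrac12\nu^\top c\,\nu$ relies on the same identity the paper uses for (iii)$\Rightarrow$(ii).
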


\begin{remark}
    Under any (therefore, all) of the above conditions, every portfolio with the property (iii) is given, up to $c^{h,\mathcal D}$-null portfolios, by
    \[
        \nu^{h,\mathcal D}(t) = \big(c^{h,\mathcal D}(t)\big)^\dagger a^{h,\mathcal D}(t), \qquad \d O^{h,\mathcal D}\text{-a.e.},
    \]
    as in \eqref{def:canonical-maximizer}; and 
    \[
        \Gamma^{h,\mathcal D}_{\nu^{h,\mathcal D}}(t) = G^{h,\mathcal D}(t), \qquad 0 \le t \le T
    \]
    holds in the manner of \eqref{eq:gamma_nu=g}. Moreover, for every portfolio $\pi\in\mathfrak P^{h,\mathcal D}$, Lemma~\ref{lem:relative-growth-identity} gives
    \begin{equation*}
        \Gamma^{h,\mathcal D}_{\nu^{h,\mathcal D}|\pi}(t) = \Gamma^{h, \mathcal D}_{\nu^{h, \mathcal D}}(t) - \Gamma^{h, \mathcal D}_{\pi}(t) = \frac12 \,C^{h,\mathcal D}_{\pi-\nu^{h,\mathcal D}}(t), \qquad 0 \le t \le T\,.
    \end{equation*}
\end{remark}

\begin{proof}
    The equivalence of {\rm(i)} and {\rm(ii)} follows from Proposition~\ref{prop:structural-characterization-lfg}.
    
    Now let us assume {\rm(ii)}, and consider $\nu^{h,\mathcal D}(t)$ as in \eqref{def:canonical-maximizer}. Since $a^{h,\mathcal D}\in\operatorname{Range}(c^{h,\mathcal D})\,,$ $\d O^{h,\mathcal D}$-a.e., we have
    \[
        c^{h,\mathcal D}\nu^{h,\mathcal D} = c^{h,\mathcal D} \big(c^{h,\mathcal D}\big)^\dagger a^{h,\mathcal D} = a^{h,\mathcal D}\,, \qquad \d O^{h,\mathcal D}\text{-a.e.}
    \]
    Furthermore,
    \[
        \int_0^T \big|\big(\nu^{h,\mathcal D}(t)\big)^\top a^{h,\mathcal D}(t)\big|\,\d O^{h,\mathcal D}(t) = \int_0^T \big( a^{h,\mathcal D}(t)\big)^\top \big(c^{h,\mathcal D}(t)\big)^\dagger  a^{h,\mathcal D}(t)\,\d O^{h,\mathcal D}(t) < \infty
    \]
    holds, as well as
    \[
        \int_0^T \big(\nu^{h,\mathcal D}(t)\big)^\top c^{h,\mathcal D}(t)\nu^{h,\mathcal D}(t)\, \d O^{h,\mathcal D}(t) = \int_0^T \big( a^{h,\mathcal D}(t)\big)^\top \big(c^{h,\mathcal D}(t)\big)^\dagger a^{h,\mathcal D}(t)\, \d O^{h,\mathcal D}(t) < \infty.
    \]
    Thus $\nu^{h,\mathcal D}\in\mathfrak P^{h,\mathcal D}$ and condition {\rm(iii)} holds.
    
    Conversely, assume {\rm(iii)} and let $\nu\in\mathfrak P^{h,\mathcal D}$ satisfy \eqref{eq:structural-condition}. Then $ a^{h,\mathcal D} \in \operatorname{Range} (c^{h,\mathcal D})\,,$ $~\d O^{h,\mathcal D}$-a.e. Moreover,
    \[
        \big( a^{h,\mathcal D}\big)^\top \big(c^{h,\mathcal D}\big)^\dagger a^{h,\mathcal D} = \nu^\top c^{h,\mathcal D}\nu\,, \qquad \d O^{h,\mathcal D}\text{-a.e.}
    \]
    Since $\nu\in\mathfrak P^{h,\mathcal D}$, the right-hand side is $O^{h,\mathcal D}$-integrable, and thus condition {\rm(ii)} holds. Hence, {\rm(ii)} and {\rm(iii)} are equivalent.
    
    Next, suppose again that {\rm(iii)} holds. Applying Lemma~\ref{lem:relative-growth-identity} with $\rho=\nu$ gives, for every $\pi\in\mathfrak P^{h,\mathcal D}$,
    \[
        \Gamma^{h,\mathcal D}_{\pi|\nu}(t) = \int_0^t \big(\pi(s)-\nu(s)\big)^\top \big( a^{h,\mathcal D}(s)-c^{h,\mathcal D}(s)\nu(s)\big)\,\d O^{h,\mathcal D}(s) - \frac12 \,C^{h,\mathcal D}_{\pi-\nu}(t).
    \]
    The first term on the right-hand side vanishes, hence
    \[
        \Gamma^{h,\mathcal D}_{\pi|\nu}(t) = \Gamma^{h,\mathcal D}_\pi(t) - \Gamma^{h,\mathcal D}_\nu(t) = -\frac12 \,C^{h,\mathcal D}_{\pi-\nu}(t), \qquad 0 \le t \le T.
    \]
    This proves {\rm(v)}. Since $C^{h,\mathcal D}_{\pi-\nu}$ is nondecreasing, the identity also implies that $\Gamma^{h,\mathcal D}_{\pi|\nu}$ is nonincreasing for every $\pi\in\mathfrak P^{h,\mathcal D}$, and {\rm(iv)} follows. Conversely, condition {\rm(v)} immediately implies {\rm(iv)}.
    
    \smallskip
    It remains to prove that {\rm(iv)} implies {\rm(iii)}. Let $\rho\in\mathfrak P^{h,\mathcal D}$ be $(h, \mathcal D)$-growth-optimal. Fix a bounded Borel map $\, \xi:[0,T]\to\mathbb R^d \,$ and $\varepsilon>0$. Because $\mathfrak P^{h,\mathcal D}$ is stable under bounded perturbations, $\rho+\varepsilon\xi\in\mathfrak P^{h,\mathcal D}$; and by Lemma~\ref{lem:relative-growth-identity},
    \[
        \Gamma^{h,\mathcal D}_{\rho+\varepsilon\xi|\rho}(t)
        = \int_0^t \Big( \varepsilon \xi^\top(s) \big( a^{h,\mathcal D}(s) - c^{h,\mathcal D}(s)\rho(s)\big) - \frac{\varepsilon^2}{2}\xi^\top(s) c^{h,\mathcal D}(s) \xi(s) \Big) \,\d O^{h,\mathcal D}(s).
    \]
    Since $\rho$ is $(h, \mathcal D)$-growth-optimal, the left-hand side is nonincreasing in $t$. Therefore, we have 
    \[
        \varepsilon\xi^\top (a^{h,\mathcal D}-c^{h,\mathcal D}\rho) - \frac{\varepsilon^2}{2}\xi^\top c^{h,\mathcal D}\xi \le 0\,, \quad \d O^{h,\mathcal D}\text{-a.e.}
    \]
    Dividing by $\varepsilon$ and sending $\varepsilon\downarrow0$ yields $\xi^\top(a^{h,\mathcal D}-c^{h,\mathcal D}\rho) \le 0$; whereas, replacing $\xi$ by $-\xi$ gives $\xi^\top(a^{h,\mathcal D}-c^{h,\mathcal D}\rho) = 0$. Since $\xi$ was arbitrary, taking $\xi$ successively equal to the coordinate vectors yields $a^{h,\mathcal D}=c^{h,\mathcal D}\rho$, $\d O^{h,\mathcal D}$-a.e.
as claimed. \end{proof}

\medskip

\subsection{Wealth-level interpretation}  \label{subsec:wealth-level}

We need to justify the ``portfolio'' terminology introduced in Subsection~\ref{subsec:portfolios-growth}. We shall do this by pairing functions $\pi: [0, T] \to \R^d$, in a suitable subclass $\, \mathfrak W^{h, \mathcal D}$ of $\, \mathfrak P^{h, \mathcal D}$ as in Definition~\ref{def:portfolios}, with the strictly positive ``wealth levels'' $X_{\pi}(t), 0 \le t \le T$ they generate starting with  capital $X_{\pi}(0)=1$. The interpretation here, is that $\pi_i(t)$ stands for the proportion of the wealth $X_{\pi}(t)>0$ that gets invested at time $t\in[0,T]$ in the $i$-th asset, whose return path $R_i(t), 0 \le t \le T$ is the $i$-th component of $R$ in \eqref{def:return}.

In order for this effort to be successful, we need to select the subclass $\mathfrak W^{h, \mathcal D} \subset \mathfrak P^{h, \mathcal D}$ in a manner that allows certain F\"ollmer integrals, as in Lemma~\ref{lem:follmer-ito}, to be well-defined. More precisely, we need the following notion.

\begin{definition}[Basic Admissible F\"ollmer Integrands]   \label{def:basic-admissible-follmer-integrands}
    A map $\pi:[0,T]\to\mathbb R^d$ is called \emph{basic admissible F\"ollmer integrand for $R$}, if there exist an integer $m\ge1$, a continuous finite-variation path $B:[0,T]\to\mathbb R^m$, and a function $f\in C^{1,2}(\mathbb R^m\times\mathbb R^d)$, such that
    \begin{equation}    \label{eq:basic-admissible-representation}
        \pi(t)=\nabla_x f(B(t),R(t)),   \qquad 0\le t\le T.
    \end{equation}
    The collection of all such integrands is denoted by $\mathfrak A_\Pi(R)$.
\end{definition}

We stress again, in the spirit of Remark~\ref{rem:Ito-integrands}, the {\it very} special character of the recipe \eqref{eq:basic-admissible-representation}, as gradient of a $C^{1,2}$ function evaluated at the current values $(B(t), R(t))$ of the paths $B, R$.

\begin{definition}[Wealth-generating F\"ollmer Portfolios]  \label{def:follmer-integrable-portfolios}
    Let $\mathbb D_\Pi:=\bigcup_{n\ge1}\Pi_n$.  The class $\mathfrak W^{h,\mathcal D}$ of \emph{wealth-generating F\"ollmer portfolios} is the finite $\mathbb D_\Pi$-pasting and finite-point modification closure of $\mathfrak A_\Pi(R)$ in Definition~\ref{def:basic-admissible-follmer-integrands}.  Explicitly, $\pi\in\mathfrak W^{h,\mathcal D}$ if there exist $0=u_0<u_1<\cdots<u_L=T$, $u_1,\ldots,u_{L-1}\in\mathbb D_\Pi$, a finite set $F\subset\mathbb D_\Pi$, and $\pi^1,\ldots,\pi^L$ in the collection $\mathfrak A_\Pi(R)$ of Definition~\ref{def:basic-admissible-follmer-integrands}, such that, for every $t\notin F$,
    \begin{equation*}
        \pi(t)=\sum_{\ell=1}^L\mathbf1_{I_\ell}(t)\pi^\ell(t), \qquad I_\ell:=
        \begin{cases}
            [u_{\ell-1},u_\ell),&\ell<L,
            \\ [u_{L-1},T],&\ell=L.
        \end{cases}
    \end{equation*}
    The convention at a switching time assigns the increment immediately after that time to the new portfolio. The values on $F$ are unrestricted. For a continuous integrator, changing an integrand at finitely-many fixed times does not alter its F\"ollmer integral, because the affected increments vanish with the mesh.
\end{definition}

\begin{proposition}[Calculus for Wealth-generating Portfolios]  \label{prop:wealth-generating-follmer-calculus}
    The class $\mathfrak W^{h,\mathcal D}$ is a linear subspace of $\mathfrak P^{h,\mathcal D}$, contains every constant portfolio, and is closed under finite pasting at times in $\mathbb D_\Pi$. For every $\pi,\rho\in\mathfrak W^{h,\mathcal D}$, the F\"ollmer integrals
    \begin{equation}    \label{def:R-pi-M-pi}
        R_\pi(t):=\int_0^t\pi(s)^\top\,\d^\Pi R(s), \qquad M_\pi^{h,\mathcal D}(t):=\int_0^t\pi(s)^\top\,\d^\Pi M^{h,\mathcal D}(s)
    \end{equation}
    exist as uniform limits on $[0,T]$, and with the notation of \eqref{def:cumulative-return-covariation}, \eqref{def:R-pi-M-pi}:
    \begin{equation}    \label{eq:R-pi-decomposition}
        R_\pi=A_\pi^{h,\mathcal D}+M_\pi^{h,\mathcal D}, \qquad
        [R_\pi,R_\rho]_\Pi = [M_\pi^{h,\mathcal D},M_\rho^{h,\mathcal D}]_\Pi = C_{\pi,\rho}^{h,\mathcal D}.
    \end{equation}
\end{proposition}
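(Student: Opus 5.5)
The plan is to settle the structural claims about $\mathfrak W^{h,\mathcal D}$ first, and then to reduce the integral identities to the basic case $\pi=\nabla_x f(B,R)$ on a single subinterval. On such a subinterval, Lemma~\ref{lem:follmer-ito-fv-control} and Lemma~\ref{lem:BV-zero-QV} do the work.

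\emph{Membership in $\mathfrak P^{h,\mathcal D}$.} Each basic integrand $\pi^\ell=\nabla_x f(B,R)$ is continuous on $[0,T]$, because $f\in C^{1,2}$ and $B$, $R$ are continuous. Hence it is bounded, and Borel. A pasted portfolio agrees, off a finite set $F$, with a finite sum of indicators times bounded continuous maps. On $F$ it takes arbitrary values, but $F$ is $\d O^{h,\mathcal D}$-null because $O^{h,\mathcal D}$ is continuous. So, after an inessential modification on a null set, $\pi$ is bounded. Since $|a_i^{h,\mathcal D}|\le1$, $\operatorname{Tr}(c^{h,\mathcal D})\le1$ and $O^{h,\mathcal D}(T)<\infty$, the integrability condition of Definition~\ref{def:portfolios} follows.

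Strictly, the values on $F$ must still be Borel. If they are arbitrary vectors this is automatic, since $F$ is finite. Boundedness on $F$ is not needed, because $F$ is $O$-null.

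\emph{Structural properties of $\mathfrak W^{h,\mathcal D}$.} Constants belong to $\mathfrak A_\Pi(R)$: take $f(b,x)=p^\top x$. The class $\mathfrak A_\Pi(R)$ is a linear space: for $\nabla_x f_1(B_1,R)$ and $\nabla_x f_2(B_2,R)$, stack $B=(B_1,B_2)$ and use $f(b_1,b_2,x)=\lambda f_1(b_1,x)+\mu f_2(b_2,x)$. For two pasted portfolios, take the common refinement of their switching grids and the union of their exceptional sets; both remain finite subsets of $\mathbb D_\Pi$. On each cell of the refinement the combination is basic, which gives linearity. Closure under finite $\mathbb D_\Pi$-pasting follows in the same way.

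\emph{Existence of the integrals and the decomposition.} Fix a cell $[u_{\ell-1},u_\ell]$ with endpoints in $\mathbb D_\Pi$. Both endpoints lie in $\Pi_n$ for all large $n$, so the Riemann sums along $\Pi_n$ split exactly at these points. On the cell, apply Lemma~\ref{lem:follmer-ito-fv-control} to $f_\ell(B_\ell,R)$. This shows that $\int \pi^{\ell\top}\d^\Pi R$ exists as a uniform limit of left Riemann sums, after restricting the partitions and $Q_\Pi$-properties to a subinterval, which is routine because the endpoints are partition points. Summing over the finitely many cells gives $R_\pi$. The points of $F$ affect only finitely many increments, each of which tends to zero uniformly by continuity of $R$.

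For $M^{h,\mathcal D}=R-R(0)-A^{h,\mathcal D}$, note that $\int\pi^\top \d A^{h,\mathcal D}$ exists as a Riemann--Stieltjes limit, uniformly in $t$. This holds because $\pi$ is piecewise continuous with finitely many jumps, and $A^{h,\mathcal D}$ is continuous and of finite variation. The limit equals $A_\pi^{h,\mathcal D}$ by \eqref{def:cumulative-return-covariation}. Subtracting sums gives the existence of $M_\pi^{h,\mathcal D}$ and the identity $R_\pi=A_\pi^{h,\mathcal D}+M_\pi^{h,\mathcal D}$.

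\emph{Covariation identities.} Since $A^{h,\mathcal D}_\pi$ is continuous and of finite variation, Lemma~\ref{lem:BV-zero-QV} and polarization reduce everything to computing $[R_\pi,R_\rho]_\Pi$. The key fact needed is the pathwise isometry
\[
[R_\pi,R_\rho]_\Pi(t)=\int_0^t\pi^\top \d[R,R]_\Pi\,\rho .
\]
Combined with \eqref{eq:M-R-quadratic-covariation} and the density representation of $C^{h,\mathcal D}$, this yields $C^{h,\mathcal D}_{\pi,\rho}$.

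This isometry is the main obstacle, since Lemma~\ref{lem:follmer-ito-fv-control} does not state it. I would prove it cell by cell. On a cell, the It\^o--F\"ollmer formula writes
\[
R_\pi=f(B,R)-\bigl(\text{finite-variation terms}\bigr),
\]
so $[R_\pi]_\Pi=[f(B,R)]_\Pi$ by Lemma~\ref{lem:BV-zero-QV}.

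Then a second-order Taylor expansion gives
\[
f(B,R)(t_k)-f(B,R)(t_{k-1})=\nabla_x f^\top\Delta R+\Delta B\text{-terms}+O(|\Delta R|^2).
\]
The $O(|\Delta R|^2)$ remainder and the $\Delta B$ cross terms are negligible in squared sums, because $\max|\Delta R|\to0$, $\sum|\Delta R|^2$ is bounded, and $B$ has finite variation. What remains is
\[
\sum(\nabla_x f^\top\Delta R)^2\to\int\nabla_x f^\top \d[R,R]_\Pi\,\nabla_x f,
\]
by vague convergence of ${\bm\mu}^{(n)}_R$ tested against the continuous integrand, with non-atomicity handling the endpoints. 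This argument gives finite quadratic variation of $R_\pi$ in the sense of Definition~\ref{def:scalar-QV}. Concatenating cells is valid because the switching points are eventually partition points and the limit measures are non-atomic. Polarization then yields the mixed $[R_\pi,R_\rho]_\Pi$.
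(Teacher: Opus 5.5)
Your proof is correct. Its skeleton matches the paper's: basic integrands first, then exact splitting of Riemann sums at switching times in $\mathbb D_\Pi$, then finite-point modifications, then boundedness for membership in $\mathfrak P^{h,\mathcal D}$. The differences lie in how you obtain the residual integral and the covariation identities.

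The paper treats $M^{h,\mathcal D}$ as an integrator in its own right. It enlarges the finite-variation control to $\widetilde B=(B,A^{h,\mathcal D})$ and writes $\pi=\nabla_m\widetilde f(\widetilde B,M^{h,\mathcal D})$ with $\widetilde f(b,a,m)=f(b,R(0)+a+m)$, so Lemma~\ref{lem:follmer-ito-fv-control} yields $M^{h,\mathcal D}_\pi$ directly. It then cites Schied's quadratic-covariation transformation rule \cite[Proposition~12]{schied2014} for both $[R_\pi,R_\rho]_\Pi$ and $[M^{h,\mathcal D}_\pi,M^{h,\mathcal D}_\rho]_\Pi$.

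You proceed differently in three places:
\begin{enumerate}
\item You obtain $M^{h,\mathcal D}_\pi$ by subtracting the Riemann--Stieltjes sums against $A^{h,\mathcal D}$.
\item You transfer the covariations from $R_\pi$ to $M^{h,\mathcal D}_\pi$ by Lemma~\ref{lem:BV-zero-QV} and polarization, using $R_\pi+R_\rho=R_{\pi+\rho}$.
\item You prove the isometry $[R_\pi]_\Pi=\int\pi^\top\d[R,R]_\Pi\,\pi$ by hand. The It\^o--F\"ollmer formula shows that $R_{\pi^\ell}-f(B,R)$ has finite variation, and a second-order Taylor estimate together with vague convergence of ${\bm\mu}^{(n)}_R$ identifies $[f(B,R)]_\Pi$.
\end{enumerate}

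The paper's route is shorter but rests on an external result. Yours is self-contained, using only the two lemmas stated in the paper plus elementary estimates. It also makes explicit a point the paper passes over. Carrying the quadratic-variation identity across switching times needs more than exact splitting of the Riemann sums; it needs the limiting measures not to charge the switching times. For that step it is cleanest to run your Taylor argument for each basic $\pi^\ell$ on all of $[0,T]$, where plain vague convergence against continuous test functions suffices. Only afterwards restrict the resulting nonnegative measures $\sum_k(\pi^\ell(t_{k-1})^\top\Delta R_k)^2\,{\bm\delta}_{t_{k-1}}$ to the cells, using non-atomicity. This avoids any portmanteau argument for the signed off-diagonal measures.

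Your remark that $F$ is $\d O^{h,\mathcal D}$-null is harmless but unnecessary: $F$ is finite, so the pasted map is bounded anyway.
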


\begin{proof}
    For a basic F\"ollmer integrand as in Definition~\ref{def:basic-admissible-follmer-integrands}, existence of the integral with respect to $R$ follows from Lemma~\ref{lem:follmer-ito-fv-control}. Since $R=R(0)+A^{h,\mathcal D}+M^{h,\mathcal D}$, the same integrand is admissible for $M^{h,\mathcal D}$ after enlarging the finite-variation control: if $\pi=\nabla_x f(B,R)$, set
    \[
        \widetilde B:=(B,A^{h,\mathcal D}), \qquad \widetilde f(b,a,m):=f(b,R(0)+a+m).
    \]
    Then $\pi=\nabla_m\widetilde f(\widetilde B,M^{h,\mathcal D})$. Linearity of the Riemann sums and classical integration against the finite-variation path $A^{h,\mathcal D}$ give the first identity in \eqref{eq:R-pi-decomposition}. The quadratic-covariation transformation rule \cite[Proposition~12]{schied2014} for admissible F\"ollmer integrands gives the second identity of \eqref{eq:R-pi-decomposition}. All these statements extend to finite $\mathbb D_\Pi$-pastings because every sufficiently fine partition contains each switching time and the sums then split exactly at those times. A finite-point modification changes each Riemann sum in only finitely-many terms; uniform continuity of $R$ and $M^{h,\mathcal D}$ makes their total contribution converge uniformly to zero.
    
    A basic admissible integrand is continuous and hence bounded on $[0,T]$; a finite pasting is bounded as well.  The definition of the clock, therefore, implies membership in $\mathfrak P^{h,\mathcal D}$. Linearity follows by combining the finite-variation controls and adding the generating functions.
\end{proof}

We introduce now, for any given $Y \in Q_\Pi([0,T])$ with $Y(0)=0$ as in Definition~\ref{def:Q-Pi}, the \emph{pathwise Dol\'eans exponential}
\[
    \mathcal E^\Pi(Y)(t):=\exp\Big(Y(t)-\frac12\,[Y]_\Pi(t)\Big),\qquad 0 \le t \le T.
\]
Whenever a continuous function $Z:[0,T]\to(0,\infty)$ can be represented as $Z=\mathcal E^\Pi(Y)$ for some $Y\in Q_\Pi([0,T])$ with $Y(0)=0$, we call $Y$ \emph{pathwise Dol\'eans logarithm} of $Z$, and write $Y=\mathcal L^\Pi(Z)$. This path is unique and admits the explicit representation
\[
    \mathcal L^\Pi(Z) = \log Z+\frac12[\log Z]_\Pi.
\]
Indeed, if $Z=\mathcal E^\Pi(Y)$, then $\log Z=Y-\frac12[Y]_\Pi$, and Lemma~\ref{lem:BV-zero-QV} gives $[\log Z]_\Pi=[Y]_\Pi$.

\begin{definition}[Wealth Generated by F\"ollmer-Integrable Portfolios]\label{def:wealth-follmer-integrable-portfolio}
    For any given $\pi\in\mathfrak W^{h,\mathcal D}$ as in Definition~\ref{def:follmer-integrable-portfolios}, the \emph{wealth} $X_{\pi}(\cdot)$ generated by $\pi$ starting with capital $X_\pi(0)=1$, is
    \begin{equation}    \label{def:wealth}
        X_\pi(t):=\mathcal E^\Pi(R_\pi)(t)=\exp\Big(R_\pi(t)-\frac12 \, C_\pi^{h,\mathcal D}(t)\Big), \qquad 0 \le t \le T.
    \end{equation}
\end{definition}

From the decomposition $R_\pi=A_\pi^{h,\mathcal D}+M_\pi^{h,\mathcal D}$ in \eqref{eq:R-pi-decomposition}, and \eqref{def-Gamma}, the log-wealth is given as
\begin{equation}    \label{eq:log-X-pi}
    \log X_\pi(t) = \Gamma_\pi^{h,\mathcal D}(t) + M_\pi^{h,\mathcal D}(t),\qquad 0 \le t \le T.
\end{equation}
This expression highlights $\Gamma^{h,\mathcal D}_{\pi}$ as the finite-variation growth characteristic of logarithmic wealth, and $M^{h,\mathcal D}_{\pi}$ as the residual F\"ollmer integral component.

\begin{proposition}[Self-financing Identity]    \label{prop:self-financing}
    For every portfolio $\pi\in\mathfrak W^{h,\mathcal D}$, both F\"ollmer integrals in the following self-financing identity
    \begin{equation}\label{eq:pathwise-self-financing}
        X_\pi(t)=1+\int_0^t X_\pi(s)\,\d^\Pi R_\pi(s)
        =1+\int_0^t X_\pi(s)\pi(s)^\top\,\d^\Pi R(s), \qquad 0\le t\le T,
    \end{equation}
    are well-defined. Moreover, $X_\pi$ is the unique solution, in the class of admissible F\"ollmer integrands for the scalar path $R_\pi$, of the first integral equation in \eqref{eq:pathwise-self-financing} with initial value $X_\pi(0)=1$.
\end{proposition}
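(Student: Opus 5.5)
The plan is to reduce everything to the It\^o--F\"ollmer formula with finite-variation controls, Lemma~\ref{lem:follmer-ito-fv-control}, working first on a single basic piece $\pi=\nabla_x f(B,R)\in\mathfrak A_\Pi(R)$ and then pasting at the finitely many switching times in $\mathbb D_\Pi$.

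\emph{Well-definedness.} On such a piece, $R_\pi$ decomposes as follows. Take $\widetilde B:=(B,C^{h,\mathcal D}_\pi)$, which is continuous and of finite variation. By Lemma~\ref{lem:follmer-ito-fv-control} applied to $f$ itself, $R_\pi = f(B,R)-f(B(0),R(0)) - (\text{finite-variation terms})$. Hence $R_\pi$ is a $C^{1,2}$ function of $(\widetilde B',R)$ for a suitable enlarged finite-variation control $\widetilde B'$ that also carries these Lebesgue--Stieltjes integrals. Then $X_\pi=\exp(R_\pi-\tfrac12 C^{h,\mathcal D}_\pi)$ is itself of the form $\phi(\widetilde B'',R)$ with $\phi\in C^{1,2}$.

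It follows that $X_\pi\pi = \nabla_x g(\widetilde B'',R)$ for $g:=\exp(\cdots)$ written in terms of $f$. Concretely, set
\[
g(b,c,\beta,x):=\exp\bigl(f(b,x)-\beta-\tfrac12 c\bigr),
\]
where $\beta$ collects the finite-variation corrections; then $\nabla_x g = g\,\nabla_x f$. So $X_\pi\pi\in\mathfrak A_\Pi(R)$, and the second integral in \eqref{eq:pathwise-self-financing} exists by Lemma~\ref{lem:follmer-ito-fv-control}. The first integral, against the scalar path $R_\pi$, is handled by associativity of F\"ollmer integrals. Specifically, $X_\pi = \psi(\widehat B,R_\pi)$ with $\psi(b,y)=\exp(y-\tfrac12 b)$ and $\widehat B=C^{h,\mathcal D}_\pi$, and $R_\pi\in Q_\Pi$ with $[R_\pi]_\Pi=C^{h,\mathcal D}_\pi$ by Proposition~\ref{prop:wealth-generating-follmer-calculus}. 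Thus $X_\pi=\partial_y\psi(\widehat B,R_\pi)$ is a basic admissible integrand for $R_\pi$.

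\emph{The identity.} Applying Lemma~\ref{lem:follmer-ito-fv-control} to $\psi(\widehat B,R_\pi)$ gives
\[
X_\pi(t)-1=-\tfrac12\int_0^t X_\pi\,\d C_\pi^{h,\mathcal D}+\int_0^t X_\pi\,\d^\Pi R_\pi+\tfrac12\int_0^t X_\pi\,\d[R_\pi]_\Pi .
\]
The first and last terms cancel, which yields the first equality. For the second equality, I would use the associativity rule
\[
\int X_\pi\,\d^\Pi R_\pi=\int X_\pi\pi^\top\,\d^\Pi R
\]
from \cite{schied2014} (the companion of Proposition~12 used in Proposition~\ref{prop:wealth-generating-follmer-calculus}), valid for admissible integrands. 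Alternatively, it can be proved directly by applying Lemma~\ref{lem:follmer-ito-fv-control} to $g(\widetilde B'',R)$: this expresses $X_\pi-1$ as $\int g\,\nabla_xf^\top\d^\Pi R$ plus finite-variation terms, and those terms are checked to cancel.

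\emph{Pasting.} For pasted portfolios, the partitions eventually contain each $u_\ell$, so the Riemann sums split exactly. The identity is then proved inductively on each $[u_{\ell-1},u_\ell]$, starting from the value $X_\pi(u_{\ell-1})$ and using multiplicativity of $\mathcal E^\Pi$ across intervals. Finite-point modifications do not affect the integrals, as noted after Definition~\ref{def:follmer-integrable-portfolios}.

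\emph{Uniqueness.} Let $Y$ be another admissible integrand for $R_\pi$ solving the equation, and consider the ratio $Z:=Y/X_\pi$. Apply the pathwise It\^o formula to $Z=h(\text{controls},R_\pi)$: since $Y$ is of the form $\partial_y\phi(\bar B,R_\pi)$, one can write $Y=\phi_y$ and use Lemma~\ref{lem:follmer-ito-fv-control} on the function $(\bar B,\widehat B,y)\mapsto\phi_y(\bar B,y)/\psi(\widehat B,y)$. The $\d^\Pi R_\pi$ terms and the $\d[R_\pi]_\Pi$ terms should cancel, giving $Z\equiv Z(0)=1$.

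\textbf{Main obstacle.} The hardest step is this last one. It requires that $Y$ itself be a $C^{1,2}$ function of controls and $R_\pi$ to which Lemma~\ref{lem:follmer-ito-fv-control} applies, so that we can differentiate the product $Y\cdot X_\pi^{-1}$. This needs a product rule for two admissible integrands, plus the covariation identity $[Y,X_\pi^{-1}]_\Pi=-\int YX_\pi^{-1}\,\d[R_\pi]_\Pi$. I would try to obtain both from \cite[Proposition~12]{schied2014}, combined with the fact that the integral $\int Y\,\d^\Pi R_\pi$ has covariation $\int Y\,\d[R_\pi]_\Pi$ with $R_\pi$.
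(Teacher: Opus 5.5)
Your existence argument and both equalities follow the paper's route. The paper also writes $X_\pi=\partial_yF(Q,R_\pi)$ with $F(q,y)=\exp(y-q/2)$ and $Q=[R_\pi]_\Pi=C^{h,\mathcal D}_\pi$, and obtains the first equality from Lemma~\ref{lem:follmer-ito-fv-control}. This needs no induction over pieces, since Proposition~\ref{prop:wealth-generating-follmer-calculus} already covers pasted $\pi$. The second equality comes from the associativity rule of \cite{schied2014}, applied segment by segment. Your alternative via $g=\exp(f-\beta-\tfrac12 c)$ is correct and slightly more self-contained. On a basic piece it shows $X_\pi\pi\in\mathfrak A_\Pi(R)$, and the finite-variation terms cancel exactly: $\beta$ absorbs $\int\partial_bf\,\d B+\tfrac12\sum_{i,j}\int\partial^2_{x_ix_j}f\,\d[R^i,R^j]_\Pi$, while $\tfrac12\int X_\pi\,\d C^{h,\mathcal D}_\pi$ absorbs $\tfrac12\sum_{i,j}\int X_\pi\pi_i\pi_j\,\d[R^i,R^j]_\Pi$. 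So associativity is not needed there.

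The gap is in uniqueness, which the paper does not prove but imports from \cite[Proposition~3.1]{Hirai}. You cannot apply Lemma~\ref{lem:follmer-ito-fv-control} to $(\bar B,\widehat B,y)\mapsto\phi_y(\bar B,y)/\psi(\widehat B,y)$. The reason is that $\phi\in C^{1,2}$ only makes $\phi_y$ jointly continuous and $C^1$ in $y$; it need not be differentiable in $b$ or twice differentiable in $y$. Even with extra smoothness, the $\d^\Pi R_\pi$-integrand would be $(\phi_{yy}-\phi_y)/\psi$, which the equation does not make zero. The equation only ties $\int(\phi_{yy}-\phi_y)\,\d^\Pi R_\pi$ to the other terms as integrals; for smooth $R_\pi$, any continuous integrand yields a finite-variation integral. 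So there is no termwise cancellation.

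A working route uses the equation, not the generating function, for the It\^o step. Both $Y-1=\int Y\,\d^\Pi R_\pi$ and $Y-1+R_\pi=\int(Y+1)\,\d^\Pi R_\pi$ are F\"ollmer integrals of admissible integrands. Hence \cite[Proposition~12]{schied2014} and polarization give $(Y,R_\pi)\in Q_\Pi([0,T];\R^2)$, with $[Y]_\Pi=\int Y^2\,\d C^{h,\mathcal D}_\pi$ and $[Y,R_\pi]_\Pi=\int Y\,\d C^{h,\mathcal D}_\pi$. Now apply Lemma~\ref{lem:follmer-ito-fv-control} to $F(c,u,y)=u\,e^{-y+c/2}$, with control $C^{h,\mathcal D}_\pi$ and integrator $(Y,R_\pi)$. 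The $\d C^{h,\mathcal D}_\pi$-terms sum to $(\tfrac12-1+\tfrac12)\int Z\,\d C^{h,\mathcal D}_\pi=0$. What remains is the limit of $\sum_kX_\pi^{-1}(t^{(n)}_{k-1})\bigl(\Delta Y_k-Y(t^{(n)}_{k-1})\Delta R_{\pi,k}\bigr)$, with increments over $[t^{(n)}_{k-1},t^{(n)}_k]$.

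Only at this point is $Y=\phi_y(\bar B,R_\pi)$ used. Express $Y-1$ through Lemma~\ref{lem:follmer-ito-fv-control} for $\phi$, and Taylor-expand $\phi$ to second order. After weighting by $X_\pi^{-1}$, the contributions $\tfrac12\phi_{yy}(\Delta R_{\pi,k})^2$ and $\tfrac12\int\phi_{yy}\,\d C^{h,\mathcal D}_\pi$ both converge to $\tfrac12\int X_\pi^{-1}\phi_{yy}\,\d C^{h,\mathcal D}_\pi$. Hence the limit is zero and $Z\equiv1$. Without this argument, or the citation to Hirai, your uniqueness claim remains unproved.
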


\begin{proof}
    Fix $\pi\in\mathfrak W^{h,\mathcal D}$ and set $Y:=R_\pi$, $Q:=[Y]_\Pi=C_\pi^{h,\mathcal D}$, where the last identity follows from Proposition~\ref{prop:wealth-generating-follmer-calculus}. Thus $Y\in Q_\Pi([0,T])$, the path $Q$ is continuous and of finite variation, and $Y(0)=Q(0)=0$.

    Consider the function $F(q,y):=\exp(y-q/2)$ for $(q,y)\in\mathbb R^2$. Then $F\in C^{1,2}(\mathbb R^2)$ and by Definition~\ref{def:wealth-follmer-integrable-portfolio}, $X_\pi(t)=F\big(Q(t),Y(t)\big) = \partial_yF\big(Q(t),Y(t)\big)$. Hence $X_\pi$ is a basic admissible F\"ollmer integrand for the scalar integrator $Y$, in the sense of Definition~\ref{def:basic-admissible-follmer-integrands} with $d=1$.

    Applying Lemma~\ref{lem:follmer-ito-fv-control} with the finite variation control $Q$ and the integrator $Y$ gives
    \begin{align*}
        X_\pi(t)-1 &= -\frac12\int_0^tX_\pi(s)\,\d Q(s) +\int_0^tX_\pi(s)\,\d^\Pi Y(s) +\frac12\int_0^tX_\pi(s)\,\d[Y]_\Pi(s)
        \\ &= \int_0^tX_\pi(s)\,\d^\Pi Y(s),
    \end{align*}
    because $Q=[Y]_\Pi$. Since $Y=R_\pi$, this proves the first equality in \eqref{eq:pathwise-self-financing}.

    The definition \eqref{def:R-pi-M-pi} of $R_\pi$, together with the associativity rule for the F\"ollmer integral \cite[Theorem~13]{schied2014}, therefore gives
    \[
        \int_0^tX_\pi(s)\,\d^\Pi R_\pi(s) = \int_0^tX_\pi(s)\pi(s)^\top\,\d^\Pi R(s).
    \]
    For a portfolio obtained by finite $\mathbb D_\Pi$-pasting, the same conclusion follows by applying associativity on each basic segment: every sufficiently fine partition contains all switching times, so the corresponding Riemann sums split at those times. Finite-point modifications do not affect the limits because the integrators are continuous. This proves the second equality in \eqref{eq:pathwise-self-financing}.

    Finally, uniqueness within the class of admissible F\"ollmer integrands follows from the uniqueness result for the homogeneous linear F\"ollmer integral equation in \cite[Proposition~3.1]{Hirai}, applied to the scalar integrator $Y=R_\pi$.
\end{proof}

\smallskip

\noindent \emph{Interpretation}: The equation \eqref{eq:pathwise-self-financing} highlights the significance of $X_{\pi}(\cdot)$ as the ``wealth'' generated by investing in a financial market, whose $d$ assets have corresponding cumulative returns $R_1, \cdots, R_d$, according to the portfolio weights $\,\pi_1, \cdots, \pi_d\,$. Note that these latter need not be positive or satisfy $\sum_{i=1}^d \pi_i(\cdot) \equiv 1$; meaning, that ``short-selling'' of assets, and ``keeping cash under the mattress'' (i.e., access to a money-market account with zero interest), are both allowed.

\medskip

We express now the relative performance of a given portfolio $\pi \in \mathfrak W^{h, \mathcal D}$, when measured relative to that of a ``baseline'' portfolio $\rho\in\mathfrak W^{h, \mathcal D}$. In what follows, we define for $\pi,\rho\in\mathfrak P^{h,\mathcal D}:$
\begin{equation}   \label{def:B-relative-drift}
    B_{\pi|\rho}^{h,\mathcal D}(t) := A_{\pi-\rho}^{h,\mathcal D}(t) - C_{\pi-\rho,\rho}^{h,\mathcal D}(t) = \int_0^t \big(\pi(s)-\rho(s)\big)^\top \big( a^{h,\mathcal D}(s)-c^{h,\mathcal D}(s)\rho(s) \big)\,\d O^{h,\mathcal D}(s).
\end{equation}

\begin{proposition}[Relative Wealth Decomposition]  \label{prop:relative-wealth-decomposition}
    For any two given portfolios $\pi,\rho\in\mathfrak W^{h,\mathcal D}$ we have
    \[
        \log\frac{X_\pi(t)}{X_\rho(t)} = B^{h,\mathcal D}_{\pi|\rho}(t) - \frac12 C_{\pi-\rho}^{h,\mathcal D}(t) + M^{h, \mathcal D}_{\pi-\rho}(t), \qquad 0 \le t \le T
    \]
    or equivalently
    \[
        \frac{X_\pi}{X_\rho} = \mathcal E^\Pi\big(B^{h,\mathcal D}_{\pi|\rho}+M_{\pi-\rho}^{h,\mathcal D}\big), \qquad \mathcal L^\Pi\Big(\frac{X_\pi}{X_\rho}\Big) = B^{h,\mathcal D}_{\pi|\rho}+M_{\pi-\rho}^{h,\mathcal D}\,.
    \]
\end{proposition}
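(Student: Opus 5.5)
The plan is to work entirely at the level of logarithms and use the log-wealth identity \eqref{eq:log-X-pi} for each of the two portfolios. By Proposition~\ref{prop:wealth-generating-follmer-calculus}, $\mathfrak W^{h,\mathcal D}$ is a linear subspace, so $\pi-\rho\in\mathfrak W^{h,\mathcal D}$. Linearity of the left Riemann sums (which converge uniformly for each of $\pi$ and $\rho$) then gives $M^{h,\mathcal D}_\pi-M^{h,\mathcal D}_\rho=M^{h,\mathcal D}_{\pi-\rho}$. Subtracting the two instances of \eqref{eq:log-X-pi} yields
\[
\log\frac{X_\pi}{X_\rho}=\Gamma^{h,\mathcal D}_\pi-\Gamma^{h,\mathcal D}_\rho+M^{h,\mathcal D}_{\pi-\rho}=\Gamma^{h,\mathcal D}_{\pi|\rho}+M^{h,\mathcal D}_{\pi-\rho}.
\]
Next I will invoke Lemma~\ref{lem:relative-growth-identity}. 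Its first term is exactly the integral defining $B^{h,\mathcal D}_{\pi|\rho}$ in \eqref{def:B-relative-drift}, so $\Gamma^{h,\mathcal D}_{\pi|\rho}=B^{h,\mathcal D}_{\pi|\rho}-\frac12 C^{h,\mathcal D}_{\pi-\rho}$. This proves the first display. The only bookkeeping needed is the bilinearity of $C^{h,\mathcal D}_{\cdot,\cdot}$ and the identity $A_{\pi-\rho}=A_\pi-A_\rho$, both of which are immediate from \eqref{def:cumulative-return-covariation} and \eqref{def:cumulative-covariation} because all the integrands are $\d O^{h,\mathcal D}$-integrable.

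For the exponential form, set $Y:=B^{h,\mathcal D}_{\pi|\rho}+M^{h,\mathcal D}_{\pi-\rho}$, so that $Y(0)=0$. The path $B^{h,\mathcal D}_{\pi|\rho}$ is continuous and of finite variation, being an integral against $\d O^{h,\mathcal D}$ with integrable density. Proposition~\ref{prop:wealth-generating-follmer-calculus} gives $M^{h,\mathcal D}_{\pi-\rho}\in Q_\Pi([0,T])$ with $[M^{h,\mathcal D}_{\pi-\rho}]_\Pi=C^{h,\mathcal D}_{\pi-\rho}$. Lemma~\ref{lem:BV-zero-QV} then yields $Y\in Q_\Pi([0,T])$ and $[Y]_\Pi=C^{h,\mathcal D}_{\pi-\rho}$. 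Consequently
\[
\mathcal E^\Pi(Y)=\exp\Big(Y-\tfrac12 C^{h,\mathcal D}_{\pi-\rho}\Big)=\frac{X_\pi}{X_\rho},
\]
and the uniqueness and explicit formula for the pathwise Dol\'eans logarithm stated just before Definition~\ref{def:wealth-follmer-integrable-portfolio} give $\mathcal L^\Pi(X_\pi/X_\rho)=Y$.

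The only point that needs care, and the one I expect to be the main obstacle, is the quadratic-variation claim for $M^{h,\mathcal D}_{\pi-\rho}$. This follows from the second identity in \eqref{eq:R-pi-decomposition}, applied with both portfolios equal to $\pi-\rho$, and that application is legitimate only because $\pi-\rho$ again lies in $\mathfrak W^{h,\mathcal D}$. Membership of $\pi-\rho$ in this class holds after taking the union of the switching times and exceptional finite sets of $\pi$ and $\rho$, and a common refinement of their pastings is again a finite $\mathbb D_\Pi$-pasting. Once this is secured, everything else is algebra.
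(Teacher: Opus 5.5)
Your proof is correct and is essentially the paper's argument. The paper expands $\log(X_\pi/X_\rho)=R_\pi-R_\rho-\frac12(C^{h,\mathcal D}_\pi-C^{h,\mathcal D}_\rho)$ directly using $C^{h,\mathcal D}_\pi-C^{h,\mathcal D}_\rho=C^{h,\mathcal D}_{\pi-\rho}+2C^{h,\mathcal D}_{\pi-\rho,\rho}$; this is the same bilinearity computation packaged in Lemma~\ref{lem:relative-growth-identity}, which you invoke instead. The paper then obtains the exponential form from Lemma~\ref{lem:BV-zero-QV}, as you do, and your explicit check that $\pi-\rho\in\mathfrak W^{h,\mathcal D}$ (so that $[M^{h,\mathcal D}_{\pi-\rho}]_\Pi=C^{h,\mathcal D}_{\pi-\rho}$) fills in a step the paper leaves implicit.
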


\begin{proof}
    By Definition~\ref{def:wealth-follmer-integrable-portfolio},
    \[
        \log\frac{X_\pi(t)}{X_\rho(t)} = R_\pi(t)-R_\rho(t)-\frac12\big(C_\pi^{h,\mathcal D}(t)-C_\rho^{h,\mathcal D}(t)\big).
    \]
    Using $R_\eta=A_\eta^{h,\mathcal D}+M_\eta^{h,\mathcal D}$ for $\eta=\pi,\rho$, we get $R_\pi-R_\rho=A_{\pi-\rho}^{h,\mathcal D}+M_{\pi-\rho}^{h,\mathcal D}$. Moreover, the bilinearity of $C_{\cdot,\cdot}^{h,\mathcal D}$ gives
    \[
        C_\pi^{h,\mathcal D}-C_\rho^{h,\mathcal D}=C_{\pi-\rho}^{h,\mathcal D}+2C_{\pi-\rho,\rho}^{h,\mathcal D}\,.
    \]
    Therefore
    \[
        \log\frac{X_\pi(t)}{X_\rho(t)} = M_{\pi-\rho}^{h,\mathcal D}(t)+A_{\pi-\rho}^{h,\mathcal D}(t)-C_{\pi-\rho,\rho}^{h,\mathcal D}(t)-\frac12 \,C_{\pi-\rho}^{h,\mathcal D}(t),
    \]
    which proves the logarithmic identity. Since $B^{h,\mathcal D}_{\pi|\rho}$ is of finite variation, it has zero quadratic variation and zero covariation with $M_{\pi-\rho}^{h,\mathcal D}\,$; therefore, 
    \[
        \big[M_{\pi-\rho}^{h,\mathcal D}+B^{h,\mathcal D}_{\pi|\rho}\big]_{\Pi} = \big[M_{\pi-\rho}^{h,\mathcal D}\big]_{\Pi} = C_{\pi-\rho}^{h,\mathcal D} 
    \]
as well as
    \[
        \frac{X_\pi}{X_\rho} = \exp\Big(M_{\pi-\rho}^{h,\mathcal D}+B^{h,\mathcal D}_{\pi|\rho}-\frac12 C_{\pi-\rho}^{h,\mathcal D}\Big) 
        = \mathcal E^\Pi\Big(M_{\pi-\rho}^{h,\mathcal D}+B^{h,\mathcal D}_{\pi|\rho}\Big).
    \]
\end{proof}

\begin{proposition} [Driftless Num\'eraire Representation]  \label{prop:driftless-numeraire}
    A portfolio $\nu\in\mathfrak W^{h,\mathcal D}$ satisfies 
    \[
        a^{h,\mathcal D}(t) = c^{h,\mathcal D}(t) \, \nu(t), \qquad \d O^{h,\mathcal D}\text{-a.e. } t \in [0,T]
    \]
    as in \eqref{eq:structural-condition} if, and only if, it has in the notation of \eqref{def:B-relative-drift} the \emph{driftless num\'eraire property}
    \begin{equation}    \label{def:driftless-property}
        B^{h,\mathcal D}_{\pi|\nu} \equiv 0, \qquad \forall \, \pi\in\mathfrak P^{h,\mathcal D}.
    \end{equation}
    In this case, for every $\pi \in \mathfrak W^{h, \mathcal D}$,
    \[
        \frac{X_\pi}{X_\nu} = \mathcal E^\Pi\big(M_{\pi-\nu}^{h,\mathcal D}\big), \qquad \log\frac{X_\pi(t)}{X_\nu(t)} = -\frac12 \, C_{\pi-\nu}^{h,\mathcal D}(t) + M_{\pi-\nu}^{h,\mathcal D}(t), \qquad 0 \le t \le T.
    \]
\end{proposition}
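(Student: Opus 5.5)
The argument reduces to the explicit formula \eqref{def:B-relative-drift} for the relative drift. With $\rho=\nu$ it gives
\[
    B^{h,\mathcal D}_{\pi|\nu}(t)=\int_0^t \big(\pi(s)-\nu(s)\big)^\top\big(a^{h,\mathcal D}(s)-c^{h,\mathcal D}(s)\nu(s)\big)\,\d O^{h,\mathcal D}(s),
\]
which is well defined for every $\pi\in\mathfrak P^{h,\mathcal D}$. Indeed, $\nu\in\mathfrak W^{h,\mathcal D}\subset\mathfrak P^{h,\mathcal D}$ by Proposition~\ref{prop:wealth-generating-follmer-calculus}, so the integrand is $\d O^{h,\mathcal D}$-integrable. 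This follows from the definition of $\mathfrak P^{h,\mathcal D}$ together with the Cauchy--Schwarz bound for $c^{h,\mathcal D}$ stated after Definition~\ref{def:portfolio-return-covariation}.

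\emph{Forward direction.} Suppose $a^{h,\mathcal D}=c^{h,\mathcal D}\nu$ holds $\d O^{h,\mathcal D}$-a.e. Then the integrand above vanishes $\d O^{h,\mathcal D}$-a.e., so $B^{h,\mathcal D}_{\pi|\nu}\equiv0$ for every $\pi\in\mathfrak P^{h,\mathcal D}$. This is exactly \eqref{def:driftless-property}.

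\emph{Converse.} Assume \eqref{def:driftless-property} and argue as in the last step of Proposition~\ref{prop:finite-resolution-growth-structural-benchmark}. Fix a bounded Borel map $\xi:[0,T]\to\R^d$. Since $\mathfrak P^{h,\mathcal D}$ is stable under bounded perturbations, $\pi:=\nu+\xi\in\mathfrak P^{h,\mathcal D}$. Then
\[
    0\equiv B^{h,\mathcal D}_{\nu+\xi|\nu}(t)=\int_0^t\xi(s)^\top\big(a^{h,\mathcal D}-c^{h,\mathcal D}\nu\big)(s)\,\d O^{h,\mathcal D}(s),\qquad 0\le t\le T.
\]
Take $\xi:=\operatorname{sgn}\big((a^{h,\mathcal D}-c^{h,\mathcal D}\nu)_i\big)\,e_i$, which is bounded and Borel. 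Evaluating at $t=T$ gives
\[
    \int_0^T\big|(a^{h,\mathcal D}-c^{h,\mathcal D}\nu)_i\big|\,\d O^{h,\mathcal D}=0 ,
\]
so each coordinate vanishes $\d O^{h,\mathcal D}$-a.e. One could alternatively use $\xi=e_i$ and the absolute-continuity uniqueness of densities. Either way, the structural condition \eqref{eq:structural-condition} follows.

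\emph{Wealth representation.} For $\pi\in\mathfrak W^{h,\mathcal D}$ we have $\pi,\nu\in\mathfrak W^{h,\mathcal D}$, so Proposition~\ref{prop:relative-wealth-decomposition} applies:
\[
    \log\frac{X_\pi}{X_\nu}=B^{h,\mathcal D}_{\pi|\nu}-\tfrac12C^{h,\mathcal D}_{\pi-\nu}+M^{h,\mathcal D}_{\pi-\nu},
    \qquad
    \frac{X_\pi}{X_\nu}=\mathcal E^\Pi\big(B^{h,\mathcal D}_{\pi|\nu}+M^{h,\mathcal D}_{\pi-\nu}\big).
\]
Substituting $B^{h,\mathcal D}_{\pi|\nu}\equiv0$ gives both displayed identities. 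The exponential form is legitimate because $\pi-\nu\in\mathfrak W^{h,\mathcal D}$ by linearity, so $M^{h,\mathcal D}_{\pi-\nu}$ exists, lies in $Q_\Pi$ with $[M^{h,\mathcal D}_{\pi-\nu}]_\Pi=C^{h,\mathcal D}_{\pi-\nu}$, and vanishes at $0$, by Proposition~\ref{prop:wealth-generating-follmer-calculus}.

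The only step needing care is the converse, specifically checking that the test portfolios $\nu+\xi$ are admissible. Because \eqref{def:driftless-property} quantifies over all of $\mathfrak P^{h,\mathcal D}$ rather than over $\mathfrak W^{h,\mathcal D}$, bounded Borel perturbations may be used directly. No continuity or F\"ollmer-integrability is required, so no approximation of signs by continuous functions is needed.
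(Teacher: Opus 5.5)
Your proof is correct and follows the paper's argument essentially step for step: the forward direction by inspecting the integrand in \eqref{def:B-relative-drift}, the converse by testing with bounded Borel perturbations $\nu+\xi\in\mathfrak P^{h,\mathcal D}$, and the wealth identities by substituting $B^{h,\mathcal D}_{\pi|\nu}\equiv0$ into Proposition~\ref{prop:relative-wealth-decomposition}. The only cosmetic difference is that you take $\xi=\operatorname{sgn}\big((a^{h,\mathcal D}-c^{h,\mathcal D}\nu)_i\big)e_i$ to obtain the $\d O^{h,\mathcal D}$-integral of the absolute value directly, whereas the paper uses $\xi=e_i\mathbf 1_E$ for arbitrary Borel sets $E$.
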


\begin{proof}
    By \eqref{def:B-relative-drift}, we have 
    \[
        B^{h,\mathcal D}_{\pi|\nu}(t) = \int_0^t \big(\pi(s)-\nu(s)\big)^\top \big(a^{h, \mathcal D}(s) -c^{h, \mathcal D}(s) \nu(s)\big)\,\d O^{h, \mathcal D}(s), \qquad 0 \le t \le T.
    \]
    If $a^{h, \mathcal D} = c^{h, \mathcal D}\nu\,,$ $\d O^{h, \mathcal D}$-a.e., then $B^{h,\mathcal D}_{\pi|\nu}\equiv0$ for every $\pi\in\mathfrak P^{h, \mathcal D}$.

    Conversely, suppose that $B^{h,\mathcal D}_{\pi|\nu}\equiv0$ for every $\pi\in\mathfrak P^{h, \mathcal D}$. Since $\mathfrak P^{h, \mathcal D}$ is stable under bounded Borel perturbations, we have $\nu+\xi\in\mathfrak P^{h, \mathcal D}$ for every bounded Borel map $\xi:[0,T]\to\mathbb R^d$. Hence
    \[
        0 = B^{h,\mathcal D}_{\nu+\xi|\nu}(t) = \int_0^t \xi^\top(s) \big(a^{h, \mathcal D}(s)-c^{h, \mathcal D}(s)\,\nu(s)\big)\,\d O^{h, \mathcal D}(s), \qquad 0 \le t \le T.
    \]
    Taking $\xi=e_i\mathbf 1_E$, where $e_i$ is the $i$-th unit vector and $E\subset[0,T]$ is a Borel set, gives
    \[
        \int_E \Big(a_i^{h,\mathcal D}(s)-\big(c^{h,\mathcal D}(s) \nu(s)\big)_i\Big)\,\d O^{h,\mathcal D}(s) = 0.
    \]
    Since this holds for every $i=1,\ldots,d$ and every Borel set $E$, we obtain
    $\,
        a^{h, \mathcal D}=c^{h, \mathcal D} \nu \,,~ \d O^{h, \mathcal D}\text{-a.e.}
   $
    The last claim follows from Proposition~\ref{prop:relative-wealth-decomposition}.
\end{proof}

\begin{proposition}[Wealth-Level Version of the Structural Condition]   \label{prop:structural-condition-driftless-numeraire}
    Assume that the class $\mathfrak W^{h, \mathcal D}$ of F\"ollmer-integrable portfolios in Definition \ref{def:follmer-integrable-portfolios}, contains the portfolio $\nu^{h,\mathcal D}$ of \eqref{def:canonical-maximizer}. Then, the equivalent conditions of Proposition~\ref{prop:finite-resolution-growth-structural-benchmark}, are also equivalent to the existence of a portfolio in $\,\mathfrak W^{h,\mathcal D}$ with the driftless num\'eraire property \eqref{def:driftless-property}.
\end{proposition}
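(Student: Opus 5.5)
The plan is to prove the two implications separately, using Proposition~\ref{prop:driftless-numeraire} as the bridge between the structural condition \eqref{eq:structural-condition} and the driftless num\'eraire property \eqref{def:driftless-property}, and Proposition~\ref{prop:finite-resolution-growth-structural-benchmark} to connect the structural condition with the other equivalent conditions.

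\emph{Driftless num\'eraire implies the equivalent conditions.} Suppose some $\nu\in\mathfrak W^{h,\mathcal D}$ has the driftless num\'eraire property. By the ``only if'' direction of Proposition~\ref{prop:driftless-numeraire}, $a^{h,\mathcal D}=c^{h,\mathcal D}\nu$ holds $\d O^{h,\mathcal D}$-a.e. Proposition~\ref{prop:wealth-generating-follmer-calculus} gives $\mathfrak W^{h,\mathcal D}\subset\mathfrak P^{h,\mathcal D}$, so $\nu$ is a portfolio in the sense of Definition~\ref{def:portfolios}. Hence condition (iii) of Proposition~\ref{prop:finite-resolution-growth-structural-benchmark} holds, and with it all the equivalent conditions. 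This direction does not use the assumption $\nu^{h,\mathcal D}\in\mathfrak W^{h,\mathcal D}$.

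\emph{The equivalent conditions imply a driftless num\'eraire.} Assume the equivalent conditions hold, in particular (ii). The proof of (ii)$\Rightarrow$(iii) in Proposition~\ref{prop:finite-resolution-growth-structural-benchmark} shows that the canonical maximizer $\nu^{h,\mathcal D}=(c^{h,\mathcal D})^\dagger a^{h,\mathcal D}$ of \eqref{def:canonical-maximizer} satisfies $c^{h,\mathcal D}\nu^{h,\mathcal D}=c^{h,\mathcal D}(c^{h,\mathcal D})^\dagger a^{h,\mathcal D}=a^{h,\mathcal D}$, $\d O^{h,\mathcal D}$-a.e. This uses the range condition together with the fact that $cc^\dagger$ is the orthogonal projection onto $\operatorname{Range}(c)$. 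By hypothesis $\nu^{h,\mathcal D}\in\mathfrak W^{h,\mathcal D}$, so the ``if'' direction of Proposition~\ref{prop:driftless-numeraire} gives $B^{h,\mathcal D}_{\pi|\nu^{h,\mathcal D}}\equiv0$ for all $\pi\in\mathfrak P^{h,\mathcal D}$. Thus $\nu^{h,\mathcal D}$ is the required driftless num\'eraire in $\mathfrak W^{h,\mathcal D}$.

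\emph{The only delicate point.} The standing assumption refers to the specific Borel function $\nu^{h,\mathcal D}$ defined pointwise through the non-anticipatively selected characteristics. The structural identity, however, only needs to hold $\d O^{h,\mathcal D}$-a.e. Both Proposition~\ref{prop:driftless-numeraire} and the definition \eqref{def:B-relative-drift} of $B^{h,\mathcal D}_{\pi|\nu}$ involve only $\d O^{h,\mathcal D}$-integrals, so a.e.\ equalities suffice and no pointwise version issue arises. The remaining ingredients are the inclusion $\mathfrak W^{h,\mathcal D}\subset\mathfrak P^{h,\mathcal D}$ and the a.e.\ identity $cc^\dagger a=a$, both of which are already established.
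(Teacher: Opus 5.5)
Your proposal is correct and takes the same approach as the paper, whose proof simply says the result ``follows directly'' from Propositions~\ref{prop:finite-resolution-growth-structural-benchmark} and~\ref{prop:driftless-numeraire}. You spell out the two directions, noting the inclusion $\mathfrak W^{h,\mathcal D}\subset\mathfrak P^{h,\mathcal D}$ and that the hypothesis $\nu^{h,\mathcal D}\in\mathfrak W^{h,\mathcal D}$ is needed only for the converse.
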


\begin{proof}
    Follows directly from Propositions~\ref{prop:finite-resolution-growth-structural-benchmark}, \ref{prop:driftless-numeraire}.
\end{proof}

\medskip

\section{Scenario-wise pathwise viability}\label{sec:scenario-wise-viability}

We formulate now, in a scenario-wise setting, a pathwise analogue of the notion of ``viability" from \cite{KaratzasKardaras2021}. This proscribes ``very egregious arbitrage'', namely, the possibility of financing a non-trivial future liability stream starting with arbitrarily small initial capital. The single-path construction developed in the previous section provides pathwise local characteristics and self-financing wealth processes along every fixed return path, and characterizes driftless num\'eraire portfolios whenever the structural condition holds and the corresponding portfolio belongs to the wealth-generating class.

Viability as in \cite{KaratzasKardaras2021}, however, is {\it not}  a single-path concept: it concerns the possibility of financing nontrivial withdrawal streams using non-anticipative trading rules, {\it uniformly over a class of possible future scenarios}. Thus, we introduce here a scenario-set of return paths, and  introduce the notion of robust pathwise financing capital.  

\medskip

\subsection{Scenario sets and non-anticipative trading rules}   \label{subsec:scenario-sets}

We fix a window length $h>0$, a trading horizon $T>0$, a refining sequence of partitions $\Pi=(\Pi_n)_{n \in \mathbb N}$ of $[0,T]$, and a non-anticipative trend extractor ${\cal D}$. With the notation 
\begin{equation}    \label{def:Omega-Pi}
    \Omega_\Pi:=\Big\{\omega\in C([-h,T];\mathbb R^d):\omega \big|_{[0,T]}\in Q_\Pi([0,T];\mathbb R^d)\Big\}
\end{equation}
as in Definition~\ref{def:Q-Pi}, a ``scenario set" is a subset $\,\Omega\subset\Omega_\Pi$. The coordinate return path $R( \omega)$ is
\[    
    R(s, \omega):=\omega(s),\qquad s\in[-h,T],\quad \omega\in\Omega\,,
\]
and the scenario-set $\Omega$ is equipped with the canonical information flow
\[
    \mathcal F^\Omega (t):=\sigma\big(R(s):-h\le s\le t\big),\qquad t\in[0,T]\,.
\]
We recall that $[-h,0)$ is the pre-trading observation interval; trading takes place only on $[0,T]$.

Let the fixed Borel-measurable map  $\mathcal D:C([0,1];\mathbb R^d)\to\mathbb R^d$ be a  non-anticipative trend extractor for $(R(\omega),\Pi,h)$ for every $\omega\in\Omega$\,; then the construction of Subsection~\ref{subsec:nonanticipative-decomposition} yields the pathwise objects
\[
    A^{h,{\cal D}}(\omega),\qquad M^{h,{\cal D}}(\omega),\qquad C^{h,{\cal D}}(\omega),\qquad O^{h,{\cal D}}(\omega),\qquad  a^{h,{\cal D}}(\omega),\qquad c^{h,{\cal D}}(\omega),
\]
where
\[
    R(t,\omega)-R(0,\omega)=A^{h,{\cal D}}(t,\omega)+M^{h,{\cal D}}(t,\omega),\qquad t\in[0,T],
\]
and
\[
    C^{h,{\cal D}}(\omega)=\big[M^{h,{\cal D}}(\omega),M^{h,{\cal D}}(\omega)\big]_\Pi=\big[ R(\omega),R(\omega)\big]_\Pi.
\]
Thus, the same window length,   partition sequence, and trend extractor are used for every scenario. We write then $\mathfrak P^{h,{\cal D}}(\omega)$ and $\mathfrak W^{h,{\cal D}}(\omega)$ for the single-path portfolio classes constructed from the return path $R(\omega)$ in Definitions~\ref{def:portfolios} and \ref{def:follmer-integrable-portfolios}, respectively.

\begin{lemma}[Joint Measurability and Non-anticipativity]   \label{lem:joint-measurable-non-anticipative-characteristics}
    Equip $\Omega\subset C([-h,T];\mathbb R^d)$ with the subspace topology inherited from the topology of uniform convergence on compact intervals, and with the canonical $\sigma$-field generated by the coordinate maps. Then the maps
    \[
        (t,\omega)\longmapsto A^{h,\mathcal D}(t,\omega),\quad C^{h,\mathcal D}(t,\omega),\quad O^{h,\mathcal D}(t,\omega)
    \]
    and the non-anticipatively selected versions
    \[
        (t,\omega)\longmapsto a^{h,\mathcal D}(t,\omega),\quad c^{h,\mathcal D}(t,\omega)
    \]
    of Definition~\ref{def:pathwise-local-characteristics-extractor}, are jointly Borel measurable on $[0,T]\times\Omega$.

    These are also non-anticipative: if $\omega,\widetilde\omega\in\Omega$ agree on $[-h,t]$, then
    \[
        a^{h,\mathcal D}(t,\omega) = a^{h,\mathcal D}(t,\widetilde\omega), \qquad c^{h,\mathcal D}(t,\omega) = c^{h,\mathcal D}(t,\widetilde\omega);
    \]
    thus for every fixed $t$, the maps $\omega\mapsto a^{h,\mathcal D}(t,\omega)$ and $\omega\mapsto c^{h,\mathcal D}(t,\omega)$ are $\mathcal F^\Omega(t)$-measurable, and the same conclusions hold for
    \begin{equation}        \label{eq:3.9}
        \nu^{h,\mathcal D}(t,\omega) := \big(c^{h,\mathcal D}(t,\omega)\big)^\dagger a^{h,\mathcal D}(t,\omega).
    \end{equation}
\end{lemma}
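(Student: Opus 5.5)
The plan is to build the objects in the order in which they are constructed, checking joint measurability and non-anticipativity at each stage. The underlying fact is that on $C([-h,T];\mathbb R^d)$ the Borel $\sigma$-field of the uniform topology coincides with the $\sigma$-field generated by the coordinate maps. Any construction from countably many coordinates via Borel operations, limits along rational sequences, or suprema over rationals is therefore Borel. I will also use repeatedly that a map $(t,\omega)\mapsto \Phi(t,\omega)$ which is continuous in $t$ for each $\omega$ and Borel in $\omega$ for each $t$ is jointly Borel (Carath\'eodory).

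\emph{Trend.} The map $(t,\omega)\mapsto R^{t,h}(\omega)\in C([0,1];\mathbb R^d)$ is continuous, since the shift $t\mapsto \omega(t-h+h\,\cdot)$ is continuous in $t$ by uniform continuity of $\omega$ and $1$-Lipschitz in $\omega$. Consequently $\alpha^{h,\mathcal D}(t,\omega)=h^{-1}\mathcal D(R^{t,h}(\omega))$ is jointly Borel. Its integral $A^{h,\mathcal D}(t,\omega)$ is Borel in $\omega$ by Fubini, where I approximate by truncations $\alpha\mathbf 1_{\{|\alpha|\le k\}}$ and use monotone convergence since integrability holds scenario-wise. It is also continuous in $t$, hence jointly Borel.

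\emph{Covariation and clock.} For each fixed $t$, I write $C^{h,\mathcal D}(t,\omega)$ as the limit of the discrete sums $\sum_{k}(\omega(t^{(n)}_k\wedge t)-\omega(t^{(n)}_{k-1}))(\cdot)^\top$ over $t^{(n)}_{k-1}\le t$. This limit exists by the interpretation following Definition~\ref{def:scalar-QV}, applied through polarization, because the limiting measure is non-atomic. Each sum is a Borel function of finitely many coordinates, so $C^{h,\mathcal D}(t,\cdot)$ is Borel; continuity in $t$ then gives joint measurability. For the total variation $\breve A_i(t,\omega)$, I take the supremum over rational partitions of $[0,t]$, which is enough by continuity; this is a countable supremum and is continuous in $t$. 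So $O^{h,\mathcal D}$ is jointly Borel.

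\emph{Local rates.} The difference quotients $q_r^{F\mid O}(t,\omega)$ are jointly Borel for each rational $r$. The set on which $\lim_{r\downarrow0,\,r\in\mathbb Q}q_r$ exists is jointly Borel by the Cauchy criterion over rationals, exactly as in the proof of Lemma~\ref{lem:non-anticipative-radon-nikodym-representative} but with $\omega$ as an additional variable. Hence $a^{h,\mathcal D}$ and $c^{h,\mathcal D}$ are jointly Borel. For $\nu^{h,\mathcal D}$, I will use that $c\mapsto c^\dagger$ is Borel on the symmetric nonnegative-definite cone. The map is discontinuous, so I will avoid diagonalization and use the formula $c^\dagger=\lim_{\varepsilon\downarrow0}(c^2+\varepsilon I)^{-1}c$, a pointwise limit of continuous maps. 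Composition then gives measurability of $\nu^{h,\mathcal D}$.

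\emph{Non-anticipativity.} If $\omega=\widetilde\omega$ on $[-h,t]$, then for every $s\le t$ the windows $R^{s,h}$ agree, so $A(s,\cdot)$ agrees for $s\le t$. The discrete covariation sums at $s\le t$ only involve $\omega$ on $[0,s]$, so $C(s,\cdot)$ and the variations, hence $O(s,\cdot)$, agree for $s\le t$. The backward quotients $q_r$ at time $t$ use only $F(t)$ and $F((t-r)\vee0)$, so the selected limits coincide; this is precisely why the one-sided density was chosen. The equality for $\nu^{h,\mathcal D}$ follows at once. Finally, $\mathcal F^\Omega(t)$-measurability follows from a Doob-type factorization. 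A map $\Phi(t,\cdot)$ that is Borel and depends only on $\omega|_{[-h,t]}$ equals $\Phi(t,\omega^{t})$, where $\omega^{t}$ is the path stopped at $t$. The stopping map $\omega\mapsto\omega^{t}$ is $\mathcal F^\Omega(t)$-measurable, and $\omega^t$ is again a scenario path, continuous with the required covariation. The main obstacle is a subtlety here: for this argument to work, the stopped path must lie in $\Omega$. If it does not, I will instead extend $\Phi(t,\cdot)$ via the same explicit countable formulas, which are defined on all of $\Omega_\Pi$ up to a Borel exceptional set. These formulas read only coordinates at times $\le t$, so they exhibit $\mathcal F^\Omega(t)$-measurability directly without invoking stopping.
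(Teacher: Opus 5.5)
Your proposal is correct and follows essentially the same route as the paper. The shared steps are: continuity of $(t,\omega)\mapsto R^{t,h}(\omega)$ for the trend; Borel discrete covariation sums that read only values up to time $t$ for $C^{h,\mathcal D}$; jointly Borel rational one-sided quotients plus the Cauchy criterion for $a^{h,\mathcal D}$ and $c^{h,\mathcal D}$; and Borel measurability of $c\mapsto c^\dagger$ (your formula $c^\dagger=\lim_{\varepsilon\downarrow0}(c^2+\varepsilon I)^{-1}c$ makes this explicit), with the Carath\'eodory argument and rational-partition total variation filling in details the paper leaves implicit (the paper simply uses $\breve A_i^{h,\mathcal D}(t)=\int_0^t|\alpha_i^{h,\mathcal D}(s)|\,\d s$).

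Your fallback for $\mathcal F^\Omega(t)$-measurability, building each object directly from coordinates at times $\le t$, is exactly what the paper's argument relies on. Since stopped paths need not lie in a general $\Omega$ (e.g.\ in Example~\ref{ex:multi-scenario-non-anticipative-separation}), you can drop the stopping route and use the fallback throughout.
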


\begin{proof}
    The map $(t,\omega)\mapsto R^{t,h}(\omega)$ from $[0,T]\times\Omega$ into $C([0,1];\mathbb R^d)$ is continuous under the topology of uniform convergence on compact intervals. Since $\mathcal D$ is Borel-measurable, the trend signal is jointly Borel-measurable; hence its time integral $A^{h,\mathcal D}$ is jointly Borel-measurable.

    For each $n$, the discrete quadratic-covariation sum along $\Pi_n$ is a jointly Borel-measurable function of $(t,\omega)$, and uses only values of $\omega$ up to time $t$. Its pointwise limit on $\Omega_\Pi$ is therefore jointly Borel-measurable and non-anticipative; this limit is $C^{h,\mathcal D}$.  Since
    \[
        \breve{A}_i^{h,\mathcal D}(t,\omega) = \int_0^t |\alpha_i^{h,\mathcal D}(s,\omega)|\,\d s,
    \]
    the clock $O^{h,\mathcal D}$ has the same properties. Each rational one-sided quotient in Definition~\ref{def:non-anticipative-one-sided-density} is jointly Borel-measurable and uses only the history up to $t$; taking the rational limit, and using the fixed zero convention, preserves both properties.  Finally, the pseudo-inverse is a Borel-measurable map on the finite-dimensional space of symmetric matrices, which proves the assertion for $\nu^{h,\mathcal D}$.
\end{proof}

\begin{definition}[Scenario-wise Wealth-generating Portfolios]  \label{def:scenario-wise-follmer-integrable}
    A map $\pi:[0,T]\times\Omega\to\mathbb R^d$ is a \emph{scenario-wise wealth-generating F\"ollmer portfolio}, if
    \begin{enumerate}[label=(\roman*)]
        \item $(t,\omega)\mapsto\pi(t,\omega)$ is jointly Borel-measurable;
        \item for every $t$, the map $\omega\mapsto\pi(t,\omega)$ is $\mathcal F^\Omega(t)$-measurable;
        \item for every $\omega\in\Omega$, the path $\pi(\cdot,\omega)$ belongs to the concrete class $\mathfrak W^{h,\mathcal D}(\omega)$ of Definition~\ref{def:follmer-integrable-portfolios} constructed from $R(\cdot,\omega)$.
    \end{enumerate}
    The collection of all such rules is denoted by $\mathfrak W_\Omega^{h,\mathcal D}$.
\end{definition}

For $\pi,\rho\in\mathfrak W_\Omega^{h,\mathcal D}$ and
$u\in\mathbb D_\Pi$, define their concatenation at time $u$ by
\begin{equation*}
    (\pi\otimes_u\rho)(t,\omega) := \pi(t,\omega)\mathbf 1_{[0,u)}(t) + \rho(t,\omega)\mathbf 1_{[u,T]}(t), \qquad (t,\omega)\in[0,T]\times\Omega.
\end{equation*}
The convention at $u$ assigns the increment immediately after the
switching time to the new portfolio $\rho$ in the left-point Riemann sums.

\begin{corollary}[Pasting at Partition Times]   \label{cor:pasting-partition-times}
    Let $\pi,\rho\in\mathfrak W_\Omega^{h,\mathcal D}$ and $u\in\mathbb D_\Pi$.  Then $\pi\otimes_u\rho$ belongs to $\mathfrak W_\Omega^{h,\mathcal D}$; and for every $t$ and $\omega$,
    \begin{align*}
        R_{\pi\otimes_u\rho}(t,\omega) &=R_\pi(t\wedge u,\omega) +R_\rho(t,\omega)-R_\rho(t\wedge u,\omega),
        \\
        M_{\pi\otimes_u\rho}^{h,\mathcal D}(t,\omega) & = M_\pi^{h,\mathcal D}(t\wedge u,\omega)+M_\rho^{h,\mathcal D}(t,\omega)-M_\rho^{h,\mathcal D}(t\wedge u,\omega),
        \\
        X_{\pi\otimes_u\rho}(t,\omega) &=X_\pi(t\wedge u,\omega) \frac{X_\rho(t,\omega)}{X_\rho(t\wedge u,\omega)}.
    \end{align*}
    Similar splitting identities hold for $A$, $C$, and all cross covariations.
\end{corollary}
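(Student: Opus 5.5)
The plan is to verify the three requirements of Definition~\ref{def:scenario-wise-follmer-integrable} for $\eta:=\pi\otimes_u\rho$ separately, and then to derive the splitting identities from exact splitting of the left-point Riemann sums at $u$.

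\emph{Measurability and non-anticipativity.} Joint Borel measurability of $\eta$ is immediate, since $\mathbf 1_{[0,u)}(t)$ and $\mathbf 1_{[u,T]}(t)$ are Borel in $t$ and $\pi,\rho$ are jointly Borel. For fixed $t$, the map $\omega\mapsto\eta(t,\omega)$ equals either $\pi(t,\cdot)$ or $\rho(t,\cdot)$, depending only on whether $t<u$. Both are $\mathcal F^\Omega(t)$-measurable, which gives (i) and (ii).

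\emph{Membership in $\mathfrak W^{h,\mathcal D}(\omega)$.} Fix $\omega$. By Definition~\ref{def:follmer-integrable-portfolios}, $\pi(\cdot,\omega)$ is a finite $\mathbb D_\Pi$-pasting of basic integrands $\pi^1,\dots,\pi^L$ along switching times $u_\ell$, with exceptional set $F$. Similarly, $\rho(\cdot,\omega)$ is a pasting of $\rho^1,\dots,\rho^{L'}$ along times $u'_\ell$, with exceptional set $F'$. We take the merged grid consisting of the $u_\ell$ with $u_\ell<u$, the point $u$ itself, and the $u'_\ell$ with $u'_\ell>u$. All of these lie in $\mathbb D_\Pi$. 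On each subinterval of this grid, $\eta$ agrees with a single basic integrand off the finite set $(F\cap[0,u))\cup(F'\cap[u,T])\subset\mathbb D_\Pi$. This is precisely the closure property built into Definition~\ref{def:follmer-integrable-portfolios}, and Proposition~\ref{prop:wealth-generating-follmer-calculus} already records closure under finite pasting at times in $\mathbb D_\Pi$.

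\emph{Splitting identities.} Since $u\in\Pi_{n_0}$ for some $n_0$, refinement gives $u\in\Pi_n$ for all $n\ge n_0$. For such $n$, the left-point Riemann sum for $\int_0^t\eta^\top\,\d^\Pi R$ splits exactly into two pieces. The terms with $t^{(n)}_{k-1}<u$ involve only $\pi$ and increments inside $[0,u]$. The terms with $t^{(n)}_{k-1}\ge u$ involve only $\rho$. Here the convention that the increment starting at $u$ uses $\rho(u)$ matters. The first piece is the Riemann sum for $R_\pi(t\wedge u)$. The second is the difference of the Riemann sums for $R_\rho(t)$ and $R_\rho(t\wedge u)$. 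Passing to the uniform limits, which exist by Proposition~\ref{prop:wealth-generating-follmer-calculus}, gives the formula for $R_\eta$. The identical argument with integrator $M^{h,\mathcal D}$ gives the formula for $M^{h,\mathcal D}_\eta$.

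For $A$ and $C$, the splitting is immediate from the integral definitions \eqref{def:cumulative-return-covariation} and \eqref{def:cumulative-covariation}. This is because $\eta=\pi$ on $[0,u)$, $\eta=\rho$ on $[u,T]$, and the single point $u$ is $\d O^{h,\mathcal D}$-null since $O^{h,\mathcal D}$ is continuous. For example, $C_\eta(t)=C_\pi(t\wedge u)+C_\rho(t)-C_\rho(t\wedge u)$, and cross covariations split in the same way by bilinearity.

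Substituting these splittings into \eqref{def:wealth} gives
\[
\log X_\eta(t)=\Big(R_\pi-\tfrac12 C_\pi\Big)(t\wedge u)+\Big(R_\rho-\tfrac12C_\rho\Big)(t)-\Big(R_\rho-\tfrac12C_\rho\Big)(t\wedge u).
\]
Exponentiating yields the multiplicative formula for $X_\eta$.

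The main obstacle is bookkeeping rather than analysis. One must check that the merged pasting respects the left-closed interval convention at $u$. One must also check that the finite exceptional sets stay inside $\mathbb D_\Pi$. Everything else reduces to exact splitting of the Riemann sums once the mesh contains $u$.
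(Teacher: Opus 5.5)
Your proposal is correct and follows the paper's own proof: measurability and non-anticipativity are immediate, membership comes from the finite $\mathbb D_\Pi$-pasting closure, the left-point Riemann sums split exactly once $u\in\Pi_n$, the Lebesgue--Stieltjes integrals for $A$ and $C$ split at $u$, and the wealth identity follows by exponentiation; you simply write out the bookkeeping in more detail. One small fix is needed in the degenerate case $u=T$: the closed last interval $[u_{L-1},T]$ of your merged grid assigns $\pi^L(T)$, while $(\pi\otimes_u\rho)(T)=\rho(T)$, so add $u$ to the exceptional set, which is allowed because $u\in\mathbb D_\Pi$.
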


\begin{proof}
    Measurability and non-anticipativity are immediate.  Pathwise membership follows directly from the finite $\mathbb D_\Pi$-pasting closure in Definition~\ref{def:follmer-integrable-portfolios}. For all sufficiently large $n$, the refining partition $\Pi_n$ contains $u$, so every defining Riemann sum splits exactly at $u$.  Passing to the uniform limits gives the integral identities; the characteristic and covariation identities follow by splitting their Lebesgue--Stieltjes integrals; and the wealth identity follows by exponentiation.
\end{proof}

\medskip

\subsection{Pathwise viability} \label{subsec:pathwise-viability}

We introduce now cumulative withdrawal streams and their pathwise financing capital. A cumulative withdrawal stream is required to be non-anticipative and nondecreasing in time. We allow RCLL (i.e., Right-Continuous, with Left-Limits) withdrawal streams, so that lump-sum payments, including payments at the terminal time, are included in the framework.

\begin{definition}[Cumulative Capital Withdrawal Stream]    \label{def:cumulative-withdrawal-stream}
    A mapping $K:[0,T]\times\Omega\to[0,\infty)$ is called  \emph{cumulative withdrawal stream,} if the following hold:
    \begin{enumerate}[label=(\roman*)]
        \item $K (0, \omega)=0$ for every scenario $\omega\in\Omega$.
        \item For every scenario $\omega\in\Omega$, the path $t\mapsto K (t, \omega)$ is RCLL and nondecreasing on $[0,T]$.
        \item For every $t\in[0,T]$, the map $\omega\mapsto K (t, \omega)$ is $\mathcal F^\Omega (t)$-measurable.
    \end{enumerate}
    We denote by $\mathcal K_\Omega$ the collection of all cumulative capital withdrawal streams.
\end{definition}

Let $\pi\in\mathfrak W_\Omega^{h,\mathcal D}$ and $x\ge0$. The wealth generated by initial capital $x$ without withdrawals, is $xX_\pi$. If a cumulative withdrawal stream $K$ is paid out, while the remaining wealth continues to be invested according to the portfolio $\pi$, the  resulting wealth after withdrawals is given by
\begin{equation}  \label{eq:pathwise-wealth-withdrawal}
    V_{x,\pi,K}(t,\omega) := X_\pi(t,\omega) \Bigg( x-\int_{(0,t]}\frac{1}{X_\pi(s,\omega)}\,\d K (s, \omega) \Bigg), \qquad t\in[0,T].
\end{equation}
The integral in \eqref{eq:pathwise-wealth-withdrawal} is the pathwise Lebesgue--Stieltjes integral with respect to the finite positive measure induced by the nondecreasing RCLL  path $K(\cdot,\omega)$. It is well-defined since $X_\pi(\cdot,\omega)$ is continuous and strictly positive.

The formula \eqref{eq:pathwise-wealth-withdrawal} gives a pathwise solution of the self-financing wealth equation with cumulative withdrawals. Indeed, suppressing $\omega$ from the notation momentarily, and setting
\[
    H(t):=x-\int_{(0,t]}\frac{1}{X_\pi (s)}\, \d K(s),
\]
we have $V_{x,\pi,K}=X_\pi H$. The path $H$ is RCLL and of finite variation, with
\[
    \d H (t)=-\frac{1}{X_\pi (t)}\, \d K (t)\,.
\]
Since $X_\pi$ is continuous, the pathwise integration-by-parts formula gives
\begin{equation*}
    \d V_{x,\pi,K} (t) = H (t-)\,\d X_\pi (t)+X_\pi (t)\,\d H (t) = \frac{V_{x,\pi,K}(t-)}{X_\pi (t)}\,\d X_\pi (t)-\d K (t), \quad V_{x,\pi,K}(0)=x.
\end{equation*}
In particular, $\,\Delta V_{x,\pi,K} (t)=-\Delta K (t)\,$ holds, so a lump-sum withdrawal produces an equal downward jump in the remaining wealth.

\begin{definition}[Scenario-wise Financeability]    \label{def:scenario-wise-financeability}
    Let $x\ge0$ and $K\in\mathcal K_\Omega$.     We say that the capital withdrawal stream $K$  can be \emph{financed starting with  initial capital $x$\,,} if there exists a portfolio $\pi\in\mathfrak W_\Omega^{\,h,{\cal D}}$ with
    \[
        V_{x,\pi,K}(t,\omega)\ge0, \qquad \forall\,t\in[0,T],\quad \forall\,\omega\in\Omega.
    \]
\end{definition}
    
\medskip
In light of \eqref{eq:pathwise-wealth-withdrawal}, this requirement can be cast equivalently as 
\[
    \int_{(0,t]}\frac{1}{X_\pi(s,\omega)}\,\d K (s, \omega)\le x, \qquad \forall\,t\in[0,T],\quad \forall\,\omega\in\Omega
\]
and, since the left-hand side is nondecreasing in $t$, is equivalent to
\begin{equation*}
    \int_{(0,T]}\frac{1}{X_\pi(s,\omega)}\,\d K (s, \omega)\le x, \qquad \forall\,\omega\in\Omega.
\end{equation*} 

\begin{definition}[Scenario-wise Financing Capital]    \label{def:scenario-wise-financing-capital}
    For $x\ge0$, we define now the collection
    \begin{equation}  \label{eq:3.5}
        \mathcal K_\Omega(x) := \big\{ K\in\mathcal K_\Omega: \text{$K$ can be financed from initial capital $x$} \big\}
    \end{equation} 
    and, in terms of it, the {\it scenario-wise financing capital of $K\in\mathcal K_\Omega$} as
    \begin{equation} \label{eq:3.6}
        x_\Omega(K) := \inf\big\{ x\ge0: K\in\mathcal K_\Omega(x) \big\}.
    \end{equation} 
\end{definition}
   
In words, $\mathcal K_\Omega(x)$ is the collection of future capital withdrawal streams that can be financed starting with initial capital $x\ge 0\,;$ and $ x_\Omega(K)$ the infimal initial capital starting with which a given future capital withdrawal stream $K \in\mathcal K_\Omega$ can be financed.
Now, a portfolio $\pi\in\mathfrak W_\Omega^{h,{\cal D}}$ can finance a given $K\in\mathcal K_\Omega$ starting   from initial capital $x$ if, and only if, 
\[
    \sup_{\omega\in\Omega} \int_{(0,T]} \frac{1}{X_\pi(s,\omega)}\,\d K (s, \omega) \le x.
\]
Consequently, and with the convention that the infimum of the empty set is $ + \infty\,,$ the   scenario-wise financing capital of $K\in\mathcal K_\Omega$ in \eqref{eq:3.6} is given by 
\[
    x_\Omega(K) = \inf_{\pi\in\mathfrak W_\Omega^{h,\mathcal D}} \sup_{\omega\in\Omega} \,\int_{(0,T]} \frac{1}{X_\pi(s,\omega)}\,\d K(s, \omega).
\]

The following definition is the robust pathwise counterpart of the viability principle articulated in \cite{KaratzasKardaras2021}; this posits that \emph{it should not be possible to finance a nontrivial cumulative withdrawal stream, starting with arbitrarily small initial capital}.

\begin{definition}[Scenario-wise Pathwise Viability]    \label{def:scenario-wise-pathwise-viability}
    The scenario market $(\Omega,\mathfrak W_\Omega^{\,h,{\cal D}})$ is called \emph{pathwise viable} on $[0,T]$ if, for every $K\in\mathcal K_\Omega\,$,
    \[
        x_\Omega(K)=0 \qquad\text{implies} \qquad K\equiv0 \quad\text{on }[0,T]\times\Omega.
    \]
    Equivalently, and in the notation of \eqref{eq:3.5}, if
    \[
        \mathcal K_\Omega(0+) := \bigcap_{\varepsilon>0}\mathcal K_\Omega(\varepsilon)=\{0\}.
    \]
\end{definition}

\subsection{Pointwise boundedness of wealth and pathwise viability}\label{subsec:pointwise-boundedness-viability}

We characterize now pathwise viability in terms of {\it pointwise boundedness of attainable wealth}. This notion is a strengthening of the ``boundedness-in-probability"  in Proposition 2.22 of \cite{KaratzasKardaras2021}.

We start by noting that the zero portfolio $\,\pi_0 \equiv 0\,$ belongs to $\mathfrak W_\Omega^{\,h,{\cal D}}$. Indeed, it is Borel measurable and non-anticipative, and its restriction to every scenario belongs to the corresponding linear single-path class $\mathfrak W^{\,h,{\cal D}}(\omega)$. Moreover,
\[
    R_{\pi_0}\equiv0,\qquad M_{\pi_0}\equiv0,\qquad X_{\pi_0}\equiv1.
\]
For $\pi\in\mathfrak W_\Omega^{\,h,{\cal D}}$ and $u\in\mathbb D_\Pi$, define the portfolio stopped at $u$ by
\[
    \pi^u:=\pi\otimes_u \pi_0\,.
\]
Corollary~\ref{cor:pasting-partition-times} gives that, for $\,t\in[0,T],~ \omega\in\Omega\,,$ this satisfies
\[
    R_{\pi^u}(t,\omega)=R_\pi(t\wedge u,\omega), \quad M_{\pi^u}^{h,\mathcal D}(t,\omega)=M_\pi^{h,\mathcal D}(t\wedge u,\omega),\quad X_{\pi^u}(t,\omega)=X_\pi(t\wedge u,\omega),
\] 
Thus, if trading stops at a partition time,  the wealth attained at that time is preserved.

\begin{definition}[Scenario-wise Pointwise Boundedness] \label{def:scenario-wise-pointwise-boundedness}
    The collection of attainable levels  of terminal wealth   is called \emph{scenario-wise pointwise bounded,} if
    \begin{equation} \label{eq:3.7}
        \mathcal B (T, \omega) := \sup_{\pi\in\mathfrak W_\Omega^{\,h,{\cal D}}}X_\pi(T,\omega)<\infty, \qquad \forall\,\omega\in\Omega.
    \end{equation}
\end{definition}

The next lemma shows that terminal pointwise boundedness also controls wealth attained before the terminal time.

\begin{lemma}[Terminal and Running Pointwise Boundedness]\label{lem:terminal-running-pointwise-boundedness}
    For every $\omega\in\Omega$,
    \[
        \sup_{\pi\in\mathfrak W_\Omega^{h,\mathcal D}}X_\pi(T,\omega) = \sup_{\pi\in\mathfrak W_\Omega^{h,\mathcal D}} \sup_{t\in[0,T]}X_\pi(t,\omega).
    \]
    Consequently, the terminal wealth family is scenario-wise pointwise bounded if, and only if,
    \[
        \sup_{\pi\in\mathfrak W_\Omega^{h,\mathcal D}} \sup_{t\in[0,T]}X_\pi(t,\omega) <\infty, \qquad \forall\,\omega\in\Omega.
    \]
\end{lemma}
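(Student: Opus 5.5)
The inequality ``$\le$'' is immediate, since $X_\pi(T,\omega)\le\sup_{t\in[0,T]}X_\pi(t,\omega)$ for every $\pi\in\mathfrak W_\Omega^{h,\mathcal D}$. The work lies in the reverse inequality. The plan is to approximate an arbitrary running wealth level $X_\pi(t,\omega)$ by terminal wealth levels of portfolios that stop trading at partition times. These stopped portfolios are the $\pi^u=\pi\otimes_u\pi_0$ introduced just before Definition~\ref{def:scenario-wise-pointwise-boundedness}.

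Fix $\omega\in\Omega$, $\pi\in\mathfrak W_\Omega^{h,\mathcal D}$ and $t\in[0,T]$. For each $u\in\mathbb D_\Pi$, Corollary~\ref{cor:pasting-partition-times} gives $\pi^u\in\mathfrak W_\Omega^{h,\mathcal D}$. The identity recorded after that corollary then yields $X_{\pi^u}(T,\omega)=X_\pi(T\wedge u,\omega)=X_\pi(u,\omega)$. Hence
\[
    \sup_{\rho\in\mathfrak W_\Omega^{h,\mathcal D}}X_\rho(T,\omega)\;\ge\;X_\pi(u,\omega)\qquad\text{for every }u\in\mathbb D_\Pi.
\]
Because $|\Pi_n|\to0$, the set $\mathbb D_\Pi$ is dense in $[0,T]$. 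The wealth $X_\pi(\cdot,\omega)=\exp\big(R_\pi(\cdot,\omega)-\tfrac12 C^{h,\mathcal D}_\pi(\cdot,\omega)\big)$ is continuous, since $R_\pi$ is a uniform limit of continuous Riemann sums along a continuous integrator and $C_\pi^{h,\mathcal D}$ is a Lebesgue--Stieltjes integral against the continuous clock. Taking $u_k\in\mathbb D_\Pi$ with $u_k\to t$, we obtain $X_\pi(t,\omega)=\lim_k X_\pi(u_k,\omega)\le\sup_\rho X_\rho(T,\omega)$. Taking suprema over $t$ and $\pi$ gives ``$\ge$'', and hence the equality. The stated equivalence of the two boundedness notions follows at once, scenario by scenario.

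The only delicate point is the endpoint convention in the pasting. When $u=T$, the stopped portfolio is simply $\pi$ itself, so nothing needs checking there. When $u\in(0,T)$, Corollary~\ref{cor:pasting-partition-times} requires $u\in\mathbb D_\Pi$, which is exactly why we restrict to partition times and then appeal to density and continuity rather than stopping at an arbitrary $t$. The main thing to verify carefully is that $X_\pi(\cdot,\omega)$ is indeed continuous on all of $[0,T]$ for pasted portfolios, including across switching and modification times. This holds because Proposition~\ref{prop:wealth-generating-follmer-calculus} gives $R_\pi$ as a uniform limit on $[0,T]$, and finite-point modifications do not affect that limit.
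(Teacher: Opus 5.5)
Your proposal is correct and follows essentially the same route as the paper: stop $\pi$ at partition times $u\in\mathbb D_\Pi$ via $\pi^u=\pi\otimes_u\pi_0$ so that $X_{\pi^u}(T,\omega)=X_\pi(u,\omega)$, then pass to arbitrary $t$ using density of $\mathbb D_\Pi$ and continuity of $X_\pi(\cdot,\omega)$. One small imprecision: for $u=T$, $\pi\otimes_T\pi_0$ is not literally $\pi$ but a finite-point modification of it (it vanishes at $T$), which is harmless since you may simply use $\pi$ itself there.
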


\begin{proof}
    The left-hand side is bounded above by the right-hand side. Conversely, fix $\pi\in\mathfrak W_\Omega^{h,\mathcal D}$, $t\in[0,T]$, and $\omega\in\Omega$. Since $\mathbb D_\Pi$ is dense in $[0,T]$, there exists a sequence $(u_m)_{m\in\mathbb N}\subset\mathbb D_\Pi$ such that $u_m\to t$. By continuity of $X_\pi(\cdot,\omega)$,
    \[
        X_\pi(t,\omega) = \lim_{m\to\infty}X_\pi(u_m,\omega).
    \]
    But for each $m \in \N$, the stopped portfolio $\pi^{u_m}$ belongs to $\mathfrak W_\Omega^{h,\mathcal D}$ and satisfies $X_{\pi^{u_m}}(T,\omega)=X_\pi(u_m,\omega)$, therefore 
    \[
        X_\pi(t,\omega) \le \sup_{\rho\in\mathfrak W_\Omega^{h,\mathcal D}}X_\rho(T,\omega).
    \]
    Taking the supremum over $\pi$ and $t$ proves the reverse inequality.
\end{proof}

Here is the main result of this subsection; it is the analogue of Proposition 2.22 in \cite{KaratzasKardaras2021}.

\begin{theorem}[Pointwise Boundedness and Pathwise Viability]   \label{thm:pointwise-boundedness-viability}
    The following  are equivalent:
    \begin{enumerate}[label=(\roman*)]
        \item The scenario market $(\Omega,\mathfrak W_\Omega^{\,h,{\cal D}})$ is pathwise viable on $[0,T]$.
        \item The terminal wealth family is scenario-wise pointwise bounded, i.e., \eqref{eq:3.7} holds.
    \end{enumerate}
\end{theorem}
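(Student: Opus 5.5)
The proof splits into the two implications. The direction (ii)$\Rightarrow$(i) should use Lemma~\ref{lem:terminal-running-pointwise-boundedness} to get a budget-type lower bound on the discounted withdrawals. The direction (i)$\Rightarrow$(ii) goes by contraposition, using a withdrawal stream concentrated on a single scenario and paid as a lump sum at time $T$.

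\emph{(ii)$\Rightarrow$(i).} Assume \eqref{eq:3.7}, and let $K\in\mathcal K_\Omega$ satisfy $x_\Omega(K)=0$. Then for every $\varepsilon>0$ there is $\pi_\varepsilon\in\mathfrak W_\Omega^{h,\mathcal D}$ with
\[
\sup_{\omega\in\Omega}\int_{(0,T]}\frac{1}{X_{\pi_\varepsilon}(s,\omega)}\,\d K(s,\omega)\le\varepsilon .
\]
Fix $\omega\in\Omega$ and set $\mathcal B^*(\omega):=\sup_{\pi}\sup_{t\in[0,T]}X_\pi(t,\omega)$. By Lemma~\ref{lem:terminal-running-pointwise-boundedness}, $\mathcal B^*(\omega)=\mathcal B(T,\omega)<\infty$. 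Hence $1/X_{\pi_\varepsilon}(s,\omega)\ge 1/\mathcal B^*(\omega)$ for all $s$. Because $K(\cdot,\omega)$ is nondecreasing with $K(0,\omega)=0$, this gives
\[
K(T,\omega)\le \varepsilon\,\mathcal B^*(\omega).
\]
Letting $\varepsilon\downarrow0$ yields $K(T,\omega)=0$, and monotonicity then gives $K(\cdot,\omega)\equiv0$. Since $\omega$ was arbitrary, $K\equiv0$, which is pathwise viability.

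\emph{(i)$\Rightarrow$(ii).} Suppose (ii) fails, so there is $\omega^*\in\Omega$ with $\mathcal B(T,\omega^*)=\infty$. Define
\[
K(t,\omega):=\mathbf 1_{\{T\}}(t)\,\mathbf 1_{\{\omega^*\}}(\omega).
\]
For each $\omega$ the path $K(\cdot,\omega)$ starts at $0$, is nondecreasing, and is RCLL, since its only jump is at the terminal time. For $t<T$ it vanishes, so the measurability requirement is only at $t=T$, where we need $\{\omega^*\}\in\mathcal F^\Omega(T)$. This holds because, by continuity of paths,
\[
\{\omega^*\}=\bigcap_{q\in\mathbb Q\cap[-h,T]\,\cup\{T\}}\{\omega:\ \omega(q)=\omega^*(q)\},
\]
a countable intersection of coordinate events. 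Thus $K\in\mathcal K_\Omega$ and $K\not\equiv0$.

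Given $\varepsilon>0$, choose $\pi\in\mathfrak W_\Omega^{h,\mathcal D}$ with $X_\pi(T,\omega^*)\ge1/\varepsilon$. Then
\[
\sup_{\omega}\int_{(0,T]}\frac{\d K(s,\omega)}{X_\pi(s,\omega)}=\frac{1}{X_\pi(T,\omega^*)}\le\varepsilon,
\]
since the integral vanishes for $\omega\ne\omega^*$. Hence $K\in\mathcal K_\Omega(\varepsilon)$ for every $\varepsilon>0$, so $x_\Omega(K)=0$ with $K\not\equiv0$, contradicting (i).

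The only delicate point is checking that this $K$ is a legitimate withdrawal stream, namely the $\mathcal F^\Omega(T)$-measurability of the singleton and the RCLL property of the terminal lump sum. Both are handled above: the first by the countable rational-time intersection, the second because the definition explicitly allows terminal lump-sum payments. Everything else is a direct consequence of Lemma~\ref{lem:terminal-running-pointwise-boundedness} and of the formula for $x_\Omega(K)$ following Definition~\ref{def:scenario-wise-financing-capital}.
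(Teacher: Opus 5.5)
Your proof is correct and follows the paper's argument essentially step for step. For (ii)$\Rightarrow$(i) you use the budget bound $\int_{(0,T]} X_\pi(s,\omega)^{-1}\,\d K(s,\omega) \ge K(T,\omega)/\mathcal B(T,\omega)$ from Lemma~\ref{lem:terminal-running-pointwise-boundedness}, stated directly as $x_\Omega(K)=0\Rightarrow K\equiv0$ rather than as $x_\Omega(K)>0$ for nonzero $K$; for the converse you use the singleton terminal lump-sum stream $\mathbf 1_{\{T\}}(t)\,\mathbf 1_{\{\omega^*\}}(\omega)$, whose measurability comes from the countable rational-time intersection, exactly as the paper does.
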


\begin{proof}
    Assume first that the terminal wealth family is scenario-wise pointwise bounded. Let $K\in\mathcal K_\Omega$ be nonzero; there exists then $\omega_0\in\Omega$ such that
$\,
        K (T, \omega_0)>0.
    \,$ 
    For every $\pi\in\mathfrak W_\Omega^{h,\mathcal D}$, Lemma~\ref{lem:terminal-running-pointwise-boundedness} gives
    \[
        \sup_{t\in[0,T]}X_\pi(t,\omega_0) \le \mathcal B (T, \omega_0)\,,
    \]
thus
    \[
        \int_{(0,T]} \frac{1}{X_\pi(s,\omega_0)} \, \d K(s, \omega_0) \ge \frac{K (T, \omega_0)}{\sup_{t\in[0,T]}X_\pi(t,\omega_0)} \ge \frac{K (T, \omega_0)}{\mathcal B (T, \omega_0)}
    \]
and
    \[
        x_\Omega(K) = \inf_{\pi\in\mathfrak W_\Omega^{\,h,{\cal D}}} \sup_{\omega\in\Omega} \,\int_{(0,T]} \frac{1}{X_\pi(s,\omega)}\,\d K (s, \omega) \ge \frac{K (T, \omega_0)}{\mathcal B (T, \omega_0)} > 0.
    \]
    Therefore $x_\Omega(K)=0$ implies $K\equiv0$, and the scenario market is pathwise viable.

    Conversely, suppose that scenario-wise pointwise boundedness fails. Then there exists $\omega_0\in\Omega$ such that
    \[
        \sup_{\pi\in\mathfrak W_\Omega^{\,h,{\cal D}}}X_\pi(T,\omega_0)=\infty.
    \]
    For every $n\in\mathbb N$, choose $\pi^n\in\mathfrak W_\Omega^{\,h,{\cal D}}$ such that
  $\,
        X_{\pi^n}(T,\omega_0)\ge n\,,
    \,$
    and define the terminal lump-sum withdrawal stream
    \[
        K^{\omega_0} (t, \omega) := \mathbf 1_{\{T\}}(t)\,\, \mathbf 1_{\{\omega_0\}}(\omega)\,.
    \]
    This stream is nonnegative, nondecreasing, and RCLL. Moreover, $\{\omega_0\}\in\mathcal F^\Omega (T)$: since paths in $\Omega$ are continuous,
    \[
        \{\omega_0\} = \bigcap_{q\in\mathbb Q\cap[-h,T]} \Big\{ \omega\in\Omega: R(q,\omega)=R(q,\omega_0) \Big\},
    \]
thus $K^{\omega_0}\in\mathcal K_\Omega$. We set now
    \[
        x_n:=\frac{1}{X_{\pi^n}(T,\omega_0)}\le\frac1n.
    \]
    Before time $T$, no withdrawal is made; at time $T$, we have 
    \[
        V_{x_n,\pi^n,K^{\omega_0}}(T,\omega_0) = x_n\, X_{\pi^n}(T,\omega_0)-1 = 0\,;
    \]
    while, for $\omega\ne\omega_0$,
    \[
        V_{x_n,\pi^n,K^{\omega_0}}(T,\omega) = x_n\, X_{\pi^n}(T,\omega)>0.
    \]
    Thus
    $\,
        K^{\omega_0}\in\mathcal K_\Omega(x_n), ~~ \forall\,n\in\mathbb N
    \,,$
    and consequently 
    $\,
        x_\Omega\big(K^{\omega_0}\big)=0.
    \,$ 
    Since $K^{\omega_0}$ is nonzero, pathwise viability fails.
\end{proof}

\begin{remark}[Pathwise Analogue of Boundedness-in-Probability] \label{rem:pathwise-boundedness-analogue}
    Scenario-wise pointwise boundedness is stronger than boundedness in probability~(e.g., \cite[Proposition~2.22]{KaratzasKardaras2021}): it requires boundedness separately on every scenario, and does not discard an exceptional set. 
    
    This strengthening is natural in the present robust pathwise setting, where no probability measure is specified and viability is required uniformly over $\Omega$. The reverse implication in Theorem~\ref{thm:pointwise-boundedness-viability} relies on the availability of scenario-specific terminal lump-sum withdrawal streams.
\end{remark}

\subsection{A pathwise finite-resolution cornerstone theorem}   \label{subsec:pathwise-finite-resolution-cornerstone}

We summarize the two pathwise equivalence results established above. The first concerns the growth structure and the existence of a driftless num\'eraire portfolio. The second concerns scenario-wise financing, and identifies pathwise viability with pointwise boundedness of attainable terminal wealth levels. It is a very distinct feature of the theory developed here, that these two equivalence classes are logically distinct in the general pathwise setting.

\begin{definition}[Canonical Num\'eraire Admissibility]   \label{def:canonical-numeraire-admissibility}
    For the non-anticipatively selected scenario-wise local characteristics $a^{h,\mathcal D}$ and $c^{h,\mathcal D}$ of Definition~\ref{def:pathwise-local-characteristics-extractor}, we say that \emph{canonical num\'eraire admissibility} holds on $\Omega$ if, whenever
    \begin{equation}    \label{eq:3.8}
        G^{h,{\cal D}}(T,\omega)<\infty,\qquad \forall\,\omega\in\Omega,
    \end{equation} 
    the canonical portfolio $\nu^{h,{\cal D}} (t, \omega) := \big(c^{h,{\cal D}}(t, \omega)\big)^\dagger a^{h,{\cal D}}(t, \omega)$ of \eqref{eq:3.9} belongs to the collection $\mathfrak W_\Omega^{\,h,{\cal D}}$ of Definition~\ref{def:scenario-wise-follmer-integrable}.
\end{definition}

A direct sufficient condition for canonical num\'eraire admissibility is available from the definition of the basic admissible F\"ollmer integrands. Namely, suppose that, whenever \eqref{eq:3.8} holds, there exist an integer $m\ge1$, a jointly Borel-measurable and non-anticipative map
\[
    B:[0,T]\times\Omega\longrightarrow\mathbb R^m,
\]
whose path $B(\cdot,\omega)$ is continuous and of finite first variation for every $\omega\in\Omega$, and a function $f\in C^{1,2}(\mathbb R^m\times\mathbb R^d)$ such that
    \begin{equation}    \label{eq:3.8.a}
    \nu^{h,\mathcal D}(t,\omega) = \nabla_x f\big(B(t,\omega),R(t,\omega)\big), \qquad (t,\omega)\in[0,T]\times\Omega.
    \end{equation} 
Then, for every $\omega\in\Omega$, the path $\nu^{h,\mathcal D}(\cdot,\omega)$ is a basic admissible F\"ollmer integrand for $R(\cdot,\omega)$ as in Definition~\ref{def:basic-admissible-follmer-integrands}. Joint Borel measurability and non-anticipativity therefore imply $\nu^{h,\mathcal D}\in\mathfrak W_\Omega^{h,\mathcal D}$, and canonical num\'eraire admissibility holds on $\Omega$.

The preceding condition is expressed directly in the admissible-integrand form. We record next a more concrete, although stronger, sufficient condition which can often be checked from the time regularity of the canonical rule.

\begin{proposition}[A Finite Variation Criterion]   \label{prop:finite-variation-canonical-admissibility}
    Assume that \eqref{eq:3.8} holds; that $\nu^{h,\mathcal D}$ is jointly Borel-measurable and non-anticipative; and that, after changing its values at finitely-many deterministic times in $\mathbb D_\Pi$ if necessary, each path $\nu^{h,\mathcal D}(\cdot,\omega)$ is a finite $\mathbb D_\Pi$-pasting of continuous finite-variation paths. Then $\nu^{h,\mathcal D} \in\mathfrak W_\Omega^{h,\mathcal D}$, hence canonical num\'eraire admissibility holds on $\Omega$.
\end{proposition}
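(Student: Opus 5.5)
The plan is to reduce the claim to the sufficient condition \eqref{eq:3.8.a} stated just before the proposition, or more precisely to Definition~\ref{def:scenario-wise-follmer-integrable}. Conditions (i) and (ii) there, joint Borel measurability and non-anticipativity, are assumed in the proposition (and are also provided by Lemma~\ref{lem:joint-measurable-non-anticipative-characteristics}). So everything comes down to condition (iii): for each fixed $\omega\in\Omega$, the path $\nu^{h,\mathcal D}(\cdot,\omega)$ must belong to $\mathfrak W^{h,\mathcal D}(\omega)$. This membership is purely single-path, so we fix $\omega$ and suppress it.

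The key observation is that a continuous finite-variation path is itself a basic admissible F\"ollmer integrand. Let $\beta:[0,T]\to\mathbb R^d$ be continuous with finite first variation. In Definition~\ref{def:basic-admissible-follmer-integrands}, take $m=d$, the control $B:=\beta$, and
\[
    f(b,x):=b^\top x, \qquad (b,x)\in\mathbb R^d\times\mathbb R^d .
\]
This $f$ is smooth, hence of class $C^{1,2}$, and $\nabla_x f(b,x)=b$. Therefore $\nabla_x f(B(t),R(t))=\beta(t)$, so $\beta\in\mathfrak A_\Pi(R)$. Lemma~\ref{lem:follmer-ito-fv-control} then ensures that the F\"ollmer integral $\int_0^t \beta(s)^\top\,\d^\Pi R(s)$ exists. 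This also matches the classical intuition that finite-variation integrands against paths in $Q_\Pi$ are integrable via Riemann sums.

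By hypothesis there are switching times $0=u_0<u_1<\cdots<u_L=T$ with $u_1,\dots,u_{L-1}\in\mathbb D_\Pi$, and continuous finite-variation paths $\beta^1,\dots,\beta^L$. Outside a finite set $F\subset\mathbb D_\Pi$ of modification times, we have $\nu^{h,\mathcal D}(t)=\sum_\ell \mathbf 1_{I_\ell}(t)\beta^\ell(t)$. One point needs care here: each $\beta^\ell$ is given only on $[u_{\ell-1},u_\ell]$ (or on $I_\ell$), whereas the control must be defined on all of $[0,T]$. We therefore extend each $\beta^\ell$ continuously by constants outside its interval; the extension keeps finite variation. With the pieces $\pi^\ell:=\nabla_x f(\beta^\ell,R)=\beta^\ell\in\mathfrak A_\Pi(R)$, this is exactly the explicit form in Definition~\ref{def:follmer-integrable-portfolios}. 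The values on $F$ are unrestricted by that definition, which absorbs the allowed finite-point modification.

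Hence $\nu^{h,\mathcal D}(\cdot,\omega)\in\mathfrak W^{h,\mathcal D}(\omega)$ for every $\omega$, so $\nu^{h,\mathcal D}\in\mathfrak W_\Omega^{h,\mathcal D}$ and canonical num\'eraire admissibility holds.

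The main obstacle is bookkeeping rather than analysis. We have to make sure that the pasting times and modification set may depend on $\omega$ while still lying in $\mathbb D_\Pi$. This is permitted, because Definition~\ref{def:scenario-wise-follmer-integrable}(iii) is imposed pathwise, and the cross-scenario requirements are only measurability and non-anticipativity, which are assumed. We also need the one-sided convention at switching times to agree with the right-continuous assignment $I_\ell=[u_{\ell-1},u_\ell)$; any discrepancy at the $u_\ell$ is again a finite-point modification in $\mathbb D_\Pi$.
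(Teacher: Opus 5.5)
Your proposal is correct and follows essentially the same route as the paper. For fixed $\omega$, each continuous finite-variation piece $\beta^\ell$ becomes a basic admissible integrand $\nabla_x f(B,R)$ with $B=\beta^\ell$ and $f(b,x)=b^\top x$; membership in $\mathfrak W^{h,\mathcal D}(\omega)$ then follows from the finite $\mathbb D_\Pi$-pasting and finite-point-modification closure, with measurability and non-anticipativity supplied by hypothesis. Your extra bookkeeping, namely extending each piece to all of $[0,T]$ and noting that the switching times may depend on $\omega$, makes explicit details the paper leaves implicit.
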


\begin{proof}
    Fix $\omega\in\Omega$ and let $b^\ell(\cdot,\omega)$ be the continuous finite-variation path representing the $\ell$th piece of the finite pasting. On this piece, read         \eqref{eq:3.8.a} with 
    \[
        B=b^\ell(\cdot,\omega), \qquad f(b,x)=b^\top x.
    \]
    Then $\nabla_x f(b,x)=b$, so each piece is a basic admissible F\"ollmer integrand. The conclusion follows from the finite $\mathbb D_\Pi$-pasting and finite-point modification closure.
\end{proof}

\begin{corollary}[A Calendar-Time Criterion]    \label{cor:calendar-time-canonical-admissibility}
    Assume that \eqref{eq:3.8} holds; and that for every $\omega\in\Omega$ there is a continuous,  symmetric, nonnegative-definite matrix path $\Sigma^{h,\mathcal D}(\cdot,\omega)$ with
    \[
        C^{h,\mathcal D}(t,\omega) = \int_0^t \Sigma^{h,\mathcal D}(s,\omega)\,\d s,
    \]
    and write $\alpha^{h,\mathcal D}$ for the calendar-time trend signal in \eqref{def:trend-signal}. Suppose that the maps $\alpha^{h,\mathcal D}$ and $\Sigma^{h,\mathcal D}$ are jointly Borel-measurable and non-anticipative; that, for every $\omega\in\Omega$, both paths
    \[
        \alpha^{h,\mathcal D}(\cdot,\omega) \quad\text{and}\quad \Sigma^{h,\mathcal D}(\cdot,\omega)
    \]
    are continuous and of finite first variation; and that, for every $\omega\in\Omega$, the path $\Sigma^{h,\mathcal D}(\cdot,\omega)$ has constant rank. Assume also that
    \[
       {\bm o} (t,\omega):=\sum_{i=1}^d|\alpha_i^{h,\mathcal D}(t,\omega)|
        +\operatorname{tr}\Sigma^{h,\mathcal D}(t,\omega) \ge \varepsilon_\omega>0, \qquad 0\le t\le T.
    \]
    
    Then, for $0<t\le T$, the non-anticipatively selected characteristics are
    \[
        a^{h,\mathcal D}(t,\omega) = \frac{\alpha^{h,\mathcal D}(t,\omega)}{{\bm o} (t,\omega)}, \qquad c^{h,\mathcal D}(t,\omega) = \frac{\Sigma^{h,\mathcal D}(t,\omega)}{{\bm o} (t,\omega)},
    \]
    and the canonical portfolio has the simpler representation
    \[
        \nu^{h,\mathcal D}(t,\omega) = \big(\Sigma^{h,\mathcal D}(t,\omega)\big)^\dagger \alpha^{h,\mathcal D}(t,\omega),
        \qquad 0<t\le T.
    \]
    At $t=0$ the non-anticipative convention of Definition~\ref{def:non-anticipative-one-sided-density} assigns the value zero. Consequently, canonical num\'eraire admissibility holds whenever the locally finite growth condition is satisfied.
\end{corollary}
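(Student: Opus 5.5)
The plan is to compute the clock explicitly and then evaluate the one-sided difference quotients of Definition~\ref{def:non-anticipative-one-sided-density} directly. Fix $\omega\in\Omega$ and suppress it from the notation. Since $A^{h,\mathcal D}(t)=\int_0^t\alpha^{h,\mathcal D}(s)\,\d s$ and $C^{h,\mathcal D}(t)=\int_0^t\Sigma^{h,\mathcal D}(s)\,\d s$, the total variation of each $A_i^{h,\mathcal D}$ is $\int_0^t|\alpha_i^{h,\mathcal D}(s)|\,\d s$. Hence the clock of \eqref{def:O-clock} becomes
\[
O^{h,\mathcal D}(t)=\int_0^t {\bm o}(s)\,\d s ,
\]
with ${\bm o}$ continuous and bounded below by $\varepsilon_\omega>0$. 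In particular $O^{h,\mathcal D}$ is strictly increasing, so for $0<t\le T$ and every small rational $r>0$ the denominator in $q_r$ is positive. We then get
\[
q_r^{A\mid O}(t)=\frac{\int_{t-r}^t\alpha^{h,\mathcal D}\,\d s}{\int_{t-r}^t{\bm o}\,\d s}\longrightarrow\frac{\alpha^{h,\mathcal D}(t)}{{\bm o}(t)} ,
\]
by continuity of the integrands and ${\bm o}(t)>0$. This is the left-sided fundamental theorem of calculus applied to numerator and denominator separately. The same argument applies to $C^{h,\mathcal D}$ and gives $c^{h,\mathcal D}(t)=\Sigma^{h,\mathcal D}(t)/{\bm o}(t)$. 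At $t=0$ the denominator $O(0)-O(0)$ vanishes, so $q_r=0$ and the convention gives the value zero.

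Next comes the pseudo-inverse identity. For a scalar $\lambda>0$ and a symmetric nonnegative-definite $\Sigma$ we have $(\Sigma/\lambda)^\dagger=\lambda\,\Sigma^\dagger$, as one sees from the spectral definition. Therefore
\[
\nu^{h,\mathcal D}(t)=\big({\bm o}(t)\,\Sigma^\dagger(t)\big)\,\big(\alpha(t)/{\bm o}(t)\big)=\Sigma^\dagger(t)\,\alpha(t), \qquad 0<t\le T .
\]

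To obtain admissibility we invoke Proposition~\ref{prop:finite-variation-canonical-admissibility}. Joint measurability and non-anticipativity of $\nu^{h,\mathcal D}$ follow from Lemma~\ref{lem:joint-measurable-non-anticipative-characteristics}, or directly from those of $\alpha$ and $\Sigma$ together with Borel measurability of $\dagger$. We change the value at the single deterministic time $0\in\mathbb D_\Pi$, replacing it by $\Sigma^\dagger(0)\alpha(0)$. It then remains to show that $t\mapsto\Sigma^\dagger(t)\alpha(t)$ is continuous and of finite variation on $[0,T]$. Since $\alpha$ is continuous and of finite variation, and products of bounded finite-variation continuous functions are again such, it suffices to prove this for $t\mapsto\Sigma^\dagger(t)$.

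That last step, continuity and finite variation of $\Sigma^\dagger$, is the main obstacle, and this is exactly where the constant-rank hypothesis enters. The pseudo-inverse is discontinuous at rank changes but is smooth on each manifold of fixed-rank symmetric matrices. Concretely, let $k$ be the constant rank. Because $\Sigma$ is continuous on compact $[0,T]$, its $k$ nonzero eigenvalues stay bounded away from $0$, say above $\delta>0$, by continuity and compactness. Choose a contour $\Gamma$ in the complex plane enclosing $[\delta,\max_t\|\Sigma(t)\|]$ but not $0$. Then
\[
\Sigma^\dagger=\frac{1}{2\pi i}\oint_\Gamma z^{-1}(z-\Sigma)^{-1}\,\d z ,
\]
and this expression is a smooth (in fact real-analytic) function of $\Sigma$ on a neighbourhood of the compact range of the path. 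Hence $\Sigma^\dagger(t)$ is the composition of a locally Lipschitz map with a continuous finite-variation path, so it is continuous and of finite variation. Proposition~\ref{prop:finite-variation-canonical-admissibility} then yields $\nu^{h,\mathcal D}\in\mathfrak W_\Omega^{h,\mathcal D}$ whenever \eqref{eq:3.8} holds, which is canonical num\'eraire admissibility.
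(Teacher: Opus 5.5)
Your proposal is correct and follows the paper's proof essentially step for step. The steps match: the clock is absolutely continuous with density ${\bm o}$, the one-sided quotients converge on $(0,T]$, the scaling identity $(\Sigma/{\bm o})^\dagger={\bm o}\,\Sigma^\dagger$ gives the formula for $\nu$, and constant rank plus compactness yields a uniform spectral gap, so $\Sigma^\dagger$ is Lipschitz along the range and hence of finite variation, after which Proposition~\ref{prop:finite-variation-canonical-admissibility} applies. Your contour-integral representation just proves the Lipschitz claim that the paper asserts, and your explicit modification at $t=0\in\mathbb D_\Pi$ supplies a detail the paper leaves implicit.
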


\begin{proof}
    With the present choices, the clock in \eqref{def:O-clock} is absolutely continuous with density ${\bm o}$. The one-sided selection therefore agrees on $(0,T]$ with the displayed continuous ratios. For ${\bm o} >0$, the scaling identity $(\Sigma/{\bm o} )^\dagger={\bm o} \Sigma^\dagger$ gives the formula for $\nu$. Continuity, constant rank, and compactness of $[0,T]$ imply that the positive spectrum of $\,\Sigma^{h,\mathcal D}(\cdot,\omega)\,$ is uniformly bounded away from zero. Thus, the pseudo-inverse is Lipschitz on the compact range of this path, and $\big(\Sigma^{h,\mathcal D}\big)^\dagger$ has finite variation. Hence, Proposition~\ref{prop:finite-variation-canonical-admissibility} applies.
\end{proof}

We recall the collection $\mathfrak W_\Omega^{h,\mathcal D}$ from Definition~\ref{def:scenario-wise-follmer-integrable}, and are now ready to state the pathwise cornerstone theorem.

\begin{theorem}[Pathwise Cornerstone Theorem] \label{thm:pathwise-finite-resolution-cornerstone}
    Suppose that canonical num\'eraire admissibility holds on the set of scenarios $\Omega$.
    
    \medskip
    
    \noindent\textnormal{\bf (A) Growth--Num\'eraire layer.}
    The following conditions are equivalent:
    \begin{enumerate}[label=(\roman*)]
        \item Every scenario has locally finite growth, i.e., \eqref{eq:3.8} holds.

        \item For every $\omega\in\Omega$, we have 
        \[
            a^{h,{\cal D}}(t, \omega) \in \operatorname{Range}\big(c^{h,{\cal D}}(t, \omega)\big), \qquad \d O^{h,{\cal D}}(\omega)\text{-a.e. on }[0,T],
        \]
        \[
            \int_0^T a^{h,{\cal D}}(t,\omega)^\top \big(c^{h,{\cal D}}(t,\omega)\big)^\dagger a^{h,{\cal D}}(t,\omega) \,\d O^{h,{\cal D}}(t,\omega) <\infty.
        \]

        \item There exists $\,\nu\in\mathfrak W_\Omega^{h,{\cal D}}\,$ such that $\nu(\cdot,\omega)$ is a driftless num\'eraire portfolio for every $\omega\in\Omega$.
    
        \item There exists $\nu\in\mathfrak W_\Omega^{h,{\cal D}}$ such that $\nu(\cdot,\omega)$ is a growth-optimal portfolio for every $\omega\in\Omega$.

        \item There exists $\nu\in\mathfrak W_\Omega^{\,h,{\cal D}}$ such that, for every $\omega\in\Omega$ and every $p\in\mathfrak P^{h,{\cal D}}(\omega)$,
        \[
            \Gamma_\nu^{h,{\cal D}}(t,\omega) - \Gamma_p^{h,{\cal D}}(t,\omega) = \frac12 \,C_{p-\nu(\cdot,\omega)}^{h,{\cal D}}(t,\omega), \qquad t\in[0,T].
        \]
    \end{enumerate}
    
    \noindent\textnormal{\bf (B) Viability--Boundedness layer.}
    The following conditions are equivalent:
    \begin{enumerate}[label=(\roman*),resume]
        \item The scenario market $(\Omega,\mathfrak W_\Omega^{h,{\cal D}})$ is pathwise viable on $[0,T]$.
    
        \item The family of attainable terminal wealth levels, is scenario-wise pointwise bounded:
        \[
            \mathcal B(T, \omega) := \sup_{\pi\in\mathfrak W_\Omega^{h,{\cal D}}} X_\pi(T,\omega) <\infty, \qquad \forall\,\omega\in\Omega.
        \]
    \end{enumerate}
\end{theorem}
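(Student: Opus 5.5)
Part (B) is exactly Theorem~\ref{thm:pointwise-boundedness-viability}, so nothing new is needed there. For part (A), I will reduce each scenario to the single-path results of Section~\ref{sec:portfolio-characteristics}. The one genuinely new ingredient is that conditions (iii)--(v) require a \emph{single} rule $\nu\in\mathfrak W_\Omega^{h,\mathcal D}$ that works simultaneously for all $\omega$. Canonical num\'eraire admissibility supplies this rule, namely $\nu^{h,\mathcal D}$ of \eqref{eq:3.9}.

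\textbf{(i)$\Leftrightarrow$(ii).} Apply Proposition~\ref{prop:structural-characterization-lfg} (equivalently, (i)$\Leftrightarrow$(ii) of Proposition~\ref{prop:finite-resolution-growth-structural-benchmark}) separately to each path $R(\cdot,\omega)$. No uniformity is required, because both conditions are stated scenario by scenario.

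\textbf{(i)$\Rightarrow$(iii),(iv),(v).} Assume \eqref{eq:3.8}. By canonical num\'eraire admissibility (Definition~\ref{def:canonical-numeraire-admissibility}), $\nu:=\nu^{h,\mathcal D}\in\mathfrak W_\Omega^{h,\mathcal D}$. Fix $\omega$. The proof of (ii)$\Rightarrow$(iii) in Proposition~\ref{prop:finite-resolution-growth-structural-benchmark} shows that $c^{h,\mathcal D}\nu=a^{h,\mathcal D}$ holds $\d O^{h,\mathcal D}(\omega)$-a.e. Since $\nu(\cdot,\omega)\in\mathfrak W^{h,\mathcal D}(\omega)$, Proposition~\ref{prop:driftless-numeraire} gives the driftless num\'eraire property, which is (iii). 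The single-path implications (iii)$\Rightarrow$(v)$\Rightarrow$(iv) of Proposition~\ref{prop:finite-resolution-growth-structural-benchmark}, applied to this same $\nu(\cdot,\omega)$, give the identity in (v) for every $p\in\mathfrak P^{h,\mathcal D}(\omega)$ and growth optimality in (iv). This uses $\mathfrak W^{h,\mathcal D}(\omega)\subset\mathfrak P^{h,\mathcal D}(\omega)$, from Proposition~\ref{prop:wealth-generating-follmer-calculus}.

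\textbf{(iii),(iv),(v)$\Rightarrow$(i).} For each fixed $\omega$ the argument is single-path.
\begin{itemize}
\item From (iii), Proposition~\ref{prop:driftless-numeraire} yields $a^{h,\mathcal D}=c^{h,\mathcal D}\nu(\cdot,\omega)$ a.e. with $\nu(\cdot,\omega)\in\mathfrak P^{h,\mathcal D}(\omega)$. This is condition (iii) of Proposition~\ref{prop:finite-resolution-growth-structural-benchmark}, hence $G^{h,\mathcal D}(T,\omega)<\infty$.
\item From (iv) or (v), the single-path implications (v)$\Rightarrow$(iv)$\Rightarrow$(iii)$\Rightarrow$(i) of that proposition give the same conclusion.
\end{itemize}
Since $\omega$ is arbitrary, \eqref{eq:3.8} follows.

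The only delicate point is the direction (i)$\Rightarrow$(iii)--(v). Without an assumption, the pathwise canonical maximizers need not glue into an admissible, non-anticipative rule lying in the concrete class $\mathfrak W^{h,\mathcal D}(\omega)$. This is precisely what the standing hypothesis removes. Measurability and non-anticipativity are already guaranteed by Lemma~\ref{lem:joint-measurable-non-anticipative-characteristics}, and the rest is bookkeeping: one must check that the driftless property quantifies over all of $\mathfrak P^{h,\mathcal D}(\omega)$, which Proposition~\ref{prop:driftless-numeraire} indeed provides.
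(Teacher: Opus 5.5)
Your proof is correct and follows the paper's own argument. Part (A) applies Proposition~\ref{prop:finite-resolution-growth-structural-benchmark} scenario by scenario, with canonical num\'eraire admissibility supplying the single rule $\nu^{h,\mathcal D}\in\mathfrak W_\Omega^{h,\mathcal D}$ and Proposition~\ref{prop:driftless-numeraire} identifying it path by path as a driftless num\'eraire; part (B) is Theorem~\ref{thm:pointwise-boundedness-viability}. You are only more explicit than the paper about the converse directions and the inclusion $\mathfrak W^{h,\mathcal D}(\omega)\subset\mathfrak P^{h,\mathcal D}(\omega)$.
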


\begin{proof}
    The equivalence of conditions {\rm(i)}--{\rm(v)} follows from Proposition~\ref{prop:finite-resolution-growth-structural-benchmark}, applied separately to each scenario $\omega\in\Omega$. Canonical num\'eraire admissibility ensures that the canonical rule $\nu$ of \eqref{eq:3.9} 
    belongs to $\,\mathfrak W_\Omega^{\,h,{\cal D}}$ whenever the equivalent growth conditions hold. Proposition~\ref{prop:driftless-numeraire}, applied path by path, identifies each restriction $\nu^{h,{\cal D}}(\cdot,\omega)$ as a driftless num\'eraire portfolio. The equivalence of conditions {\rm(vi)} and {\rm(vii)} is Theorem~\ref{thm:pointwise-boundedness-viability}.
\end{proof}

The two layers of Theorem~\ref{thm:pathwise-finite-resolution-cornerstone} cannot, in general, be combined into a single equivalence class. In particular, \emph{local finiteness of growth does not by itself imply scenario-wise pointwise boundedness or pathwise viability}. The following two examples show that this failure may occur even when canonical num\'eraire admissibility holds.

\begin{example} [An $N$-Resolution Single-Scenario Separation] \label{ex:lfg-without-pointwise-boundedness}
    Consider a one-dimensional market and use the last-cell Faber--Schauder trend extractor from Example~\ref{ex:last-FS-slope}. Fix $h>0$ and $N\in\mathbb N$, and set $\delta:=h2^{-N}$. Throughout this example, we write $(h,N)$ as shorthand for $(h,\mathcal D_N^{\mathrm{last}})$. Let $\omega\in C([-h,T];\mathbb R)$ be the smooth path
    \[
        \omega(t):=\sin\Big(\frac{2\pi t}{\delta}\Big), \qquad t\in[-h,T],
    \]
    and set $\Omega:=\{\omega\}$. Since $\omega$ is continuously differentiable, it belongs to $\Omega_\Pi$ in \eqref{def:Omega-Pi}. By the coordinate convention of Subsection~\ref{subsec:scenario-sets}, $R(t,\omega)=\omega(t)$ for $t\in[-h,T]$.
    
    For every $t\in[0,T]$, the $N$-resolution Faber--Schauder projection $P_NR^{t,h}$ agrees with the past-window path $R^{t,h}$ at the level-$N$ dyadic grid points. In particular,
    \[
        P_NR^{t,h}(1)=R(t,\omega),\qquad P_NR^{t,h}(1-2^{-N})=R(t-\delta,\omega).
    \]
    Since $R$ is $\delta$-periodic, $R(t,\omega)=R(t-\delta,\omega)$, and hence the last-cell trend signal vanishes:
    \[
        \alpha_{\mathrm{last}}^{h,N}(t,\omega)=\frac{P_NR^{t,h}(1)-P_NR^{t,h}(1-2^{-N})}{h2^{-N}}=0,\qquad t\in[0,T].
    \]
    Consequently,
    \[
        A^{h,N}(t,\omega)\equiv0,\qquad M^{h,N}(t,\omega)=R(t,\omega)-R(0,\omega)=R(t,\omega).
    \]
    
    The path $R(\cdot,\omega)$ is continuously differentiable, therefore has zero quadratic variation along $\Pi$. Thus
    \[
        C^{h,N}(\omega)=[R(\omega),R(\omega)]_\Pi\equiv0.
    \]
    Since both $A^{h,N}(\omega)$ and $C^{h,N}(\omega)$ vanish identically, the operational clock satisfies $O^{h,N}(\omega)\equiv0$, and the non-anticipatively selected characteristics are identically zero:
    \[
        a^{h,N}(t, \omega)\equiv0,\qquad c^{h,N}(t, \omega)\equiv0.
    \]
    It follows that $g^{h,N}(t,\omega)=0$ and $G^{h,N}(T,\omega)=0$. Hence the return path has $N$-resolution locally finite growth. The canonical portfolio is
    \[
        \nu^{h,N}(t, \omega)=\big(c^{h,N}(t, \omega)\big)^\dagger a^{h,N}(t, \omega)\equiv0.
    \]
    Since the zero portfolio belongs to $\mathfrak W_\Omega^{h,N}$, canonical num\'eraire admissibility also holds in this example.
    
    For $m\in\mathbb N$, consider the continuously differentiable portfolio
    \[
        \pi^m(t,\omega):=m\dot R(t,\omega), \qquad t\in[0,T].
    \]
    The path $\pi^m(\cdot,\omega)$ is continuous and of finite variation on $[0,T]$. By Definitions~\ref{def:basic-admissible-follmer-integrands} and~\ref{def:scenario-wise-follmer-integrable}, the rule $\pi^m$ belongs to $\mathfrak W_\Omega^{h,N}$: take $B(t):=\pi^m(t,\omega)$, $f(b,x):=bx$ so that
    \[
        \partial_x f\big(B(t),R(t,\omega)\big)=B(t)=\pi^m(t,\omega).
    \]
    Since $R(\cdot,\omega)$ is continuously differentiable, the corresponding F\"ollmer integral agrees with the classical Riemann--Stieltjes integral:
    \[
        R_{\pi^m}(T,\omega)=m\int_0^T\dot R(t,\omega)\,\d R(t,\omega)=m\int_0^T\big|\dot R(t,\omega)\big|^2\,\d t.
    \]
    Moreover, $C_{\pi^m}^{h,N}(T,\omega)=0$. Therefore
    \[
        X_{\pi^m}(T,\omega) = \exp\bigg(m\int_0^T\big|\dot R(t,\omega)\big|^2\, \d t\bigg) \longrightarrow  \infty  
    \]
    as $m \to \infty$. It follows that
    \[
        \mathcal B (T, \omega)=\sup_{\pi\in\mathfrak W_\Omega^{h,N}}X_\pi(T,\omega)=\infty.
    \]
    Thus the attainable terminal wealth family is not scenario-wise pointwise bounded. By Theorem~\ref{thm:pointwise-boundedness-viability}, the scenario market is not pathwise viable.
    
    This failure can also be seen directly. Define the terminal lump-sum withdrawal stream by
    \[
        K(t, \omega):=\mathbf 1_{\{ T\}} (t)\,,\qquad t\in[0,T].
    \]
    This stream is nonnegative, nondecreasing, RCLL, and nonzero. For each $m\in\mathbb N$, set
    \[
        x_m:=\frac{1}{X_{\pi^m}(T,\omega)}.
    \]
    No withdrawal is made before time $T$, while at the terminal time,
    \[
        V_{x_m,\pi^m,K}(T,\omega)=X_{\pi^m}(T,\omega)\Bigg(x_m-\frac{1}{X_{\pi^m}(T,\omega)}\Bigg)=0.
    \]
    Hence $K\in\mathcal K_\Omega(x_m)$ for every $m$. Since $x_m\to0$, this gives $\,x_\Omega(K)=0\,,$ although $K\not\equiv0$; and confirms directly that pathwise viability fails.
\end{example}

The preceding example establishes the separation for a concrete last-cell Faber--Schauder extractor in the simplest singleton setting. In a singleton scenario set, however, non-anticipativity imposes no restriction across alternative paths. The next example complements it by considering an uncountable family of scenarios that coincide up to a common branching time, and by using a single sequence of jointly Borel-measurable, non-anticipative portfolio rules, each applied consistently across all scenarios. It also strengthens the pointwise wealth explosion to uniform explosion over the entire scenario set.

\begin{example}[A Multi-scenario Separation] \label{ex:multi-scenario-non-anticipative-separation}
    Assume $0<h<T$, fix $\tau \in (T-h, T)$ and $\underline\theta\in(0,1)$, and consider the nontrivial continuous ``lag-point'' trend extractor
    \[
        \mathcal D_{\mathrm{lag}}(x):=x(0), \qquad x\in C([0,1];\mathbb R).
    \]
    Let $\Theta_*:=[-1,-\underline\theta] \cup[\underline\theta,1]$. For each $\theta\in\Theta_*$, define the path
    \[
        \omega_\theta(t):=\theta\big((t-\tau)^+\big)^3, \qquad t\in[-h,T],
    \]
    and set $\Omega := \{\omega_\theta : \theta\in\Theta_*\}$. Every $\omega_\theta$ is continuously differentiable and hence belongs to $\Omega_\Pi$. By the coordinate convention of Subsection~\ref{subsec:scenario-sets},
    \begin{equation*}
        R(t,\omega_\theta) = \omega_\theta(t) = \theta\big((t-\tau)^+\big)^3, \qquad (t,\theta)\in[-h,T]\times\Theta_*.
    \end{equation*}
    Thus, an uncountable collection of scenarios has exactly the same history up to and including time $\tau$, and branches only after $\tau$.

    For every $t\in[0,T]$ we have $t-h\le T-h<\tau$, and therefore
    \[
        \alpha^{h,\mathcal D_{\mathrm{lag}}}(t,\omega_\theta)=\frac1h\mathcal D_{\mathrm{lag}}\big(R^{t,h}(\omega_\theta)\big)=\frac1hR(t-h,\omega_\theta)=0.
    \]
    Although $\mathcal D_{\mathrm{lag}}$ is not the zero map, it vanishes on all past windows generated by the present scenario set. Consequently,
    \[
        A^{h,\mathcal D_{\mathrm{lag}}}\equiv0, \qquad M^{h,\mathcal D_{\mathrm{lag}}}=R-R(0)=R.
    \]
    Every return path is continuously differentiable and of finite variation, so
    \[
        C^{h,\mathcal D_{\mathrm{lag}}}=[R,R]_\Pi\equiv0, \qquad O^{h,\mathcal D_{\mathrm{lag}}}\equiv0.
    \]
    The non-anticipative density convention therefore gives
    \[
        a^{h,\mathcal D_{\mathrm{lag}}}\equiv0, \qquad c^{h,\mathcal D_{\mathrm{lag}}}\equiv0, \qquad \nu^{h,\mathcal D_{\mathrm{lag}}}\equiv0, \qquad G^{h,\mathcal D_{\mathrm{lag}}}(T,\omega_\theta)=0.
    \]
    In particular, locally finite growth holds on every scenario; and, since the zero rule belongs to $\mathfrak W_\Omega^{h,\mathcal D_{\mathrm{lag}}}$, canonical num\'eraire admissibility holds on $\Omega$.

    Define the backward derivative functional
    \begin{equation*}
        D^-R(t,\omega):=
        \begin{cases}
            \qquad \qquad \qquad 0, & t=0,\\[0.3em]
            \displaystyle\lim_{\substack{r\downarrow0\\r\in\mathbb Q}}\frac{R(t,\omega)-R((t-r)\vee0,\omega)}{t-\big((t-r)\vee0\big)}, & t\in(0,T]\text{ and the limit exists},\\[1em]
            \qquad \qquad \qquad 0, & t\in(0,T]\text{ and the limit does not exist}.
        \end{cases}
    \end{equation*}
    For every rational $r>0$, the corresponding backward quotient, with value zero at $t=0$, is jointly Borel-measurable on $[0,T]\times\Omega$. The set on which the rational one-sided limit exists is Borel-measurable by the Cauchy criterion, and the zero convention on its complement therefore makes $D^-R$ jointly Borel-measurable. It is non-anticipative because every such quotient uses only values of the path observed no later than $t$. On the present scenario set, every return path is continuously differentiable and
    \[
        D^-R(t,\omega_\theta)=\dot R(t,\omega_\theta)=3\theta\big((t-\tau)^+\big)^2, \qquad (t,\theta)\in[0,T]\times\Theta_*.
    \]
    Along every scenario, this path is continuous and of finite variation.

    For $m\in\mathbb N$, set
    \begin{equation*}
        \pi^m(t,\omega):=mD^-R(t,\omega).
    \end{equation*}
    By Definitions~\ref{def:basic-admissible-follmer-integrands} and~\ref{def:scenario-wise-follmer-integrable}, these rules belong to $\mathfrak W_\Omega^{h,\mathcal D_{\mathrm{lag}}}$: for each fixed $\theta\in\Theta_*$ take the finite variation control $B_\theta(t):=D^-R(t,\omega_\theta)$ and the generating function $f_m(b,x):=mbx$, so that
    \[
        \partial_xf_m\big(B_\theta(t),R(t,\omega_\theta)\big)=mB_\theta(t)=\pi^m(t,\omega_\theta).
    \]
    Since each return path is continuously differentiable, the F\"ollmer integral agrees with the classical Riemann--Stieltjes integral. Using $D^-R=\dot R$, for every $\theta\in\Theta_*$ we obtain
    \begin{align*}
        R_{\pi^m}(T,\omega_\theta) = m\int_0^T\dot R(t,\omega_\theta)\,\d R(t,\omega_\theta) = m\int_0^T\big|\dot R(t,\omega_\theta)\big|^2\,\d t =\frac{9m\theta^2}{5}(T-\tau)^5.
    \end{align*}
    Also, $C_{\pi^m}^{h,\mathcal D_{\mathrm{lag}}}(\cdot,\omega_\theta)\equiv0$ for every $\theta\in\Theta_*$. Therefore, from \eqref{def:wealth},
    \begin{equation*}
        \inf_{\omega\in\Omega}X_{\pi^m}(T,\omega)\ge\exp\bigg(\frac{9m\underline\theta^2}{5}(T-\tau)^5\bigg)\xrightarrow{m\to\infty}\infty.
    \end{equation*}
    Thus, terminal wealth is not merely pointwise unbounded; the same sequence of non-anticipative portfolios becomes unbounded uniformly over the entire scenario set.

    Let $K(t,\omega):=\mathbf1_{\{T\}}(t)$ for $(t,\omega)\in[0,T]\times\Omega$, and put
    \[
        x_m:=\exp\bigg(-\frac{9m\underline\theta^2}{5}(T-\tau)^5\bigg).
    \]
    Then $x_mX_{\pi^m}(T,\omega)\ge1$, $\forall\,\omega\in\Omega$. Hence, for every $m$, the single non-anticipative rule $\pi^m$ finances the unit terminal withdrawal simultaneously on all scenarios from initial capital $x_m$. Since $x_m\downarrow0$, we have $x_\Omega(K)=0$ although $K\not\equiv0$, so the scenario market is not pathwise viable.

    The construction does not reveal the post-$\tau$ parameter in advance: all paths are indistinguishable up to and including time $\tau$, and the portfolio takes the same value there. After the branch, the rule reads the sign and magnitude from the already observed backward derivative, which on the present scenario set coincides with the current time derivative $\dot R$. This is the genuinely non-anticipative feature absent from a singleton scenario set.
\end{example}

\begin{remark}[The Reason the Two Layers Are Distinct]\label{rem:why-two-layers-distinct}
    The two preceding examples exhibit the same structural obstruction in complementary ways. The growth--num\'eraire layer is determined by the extracted finite-variation trend and the quadratic covariation. Therefore, it need not detect a nonconstant finite-variation component that remains in the residual path and has zero F\"ollmer quadratic variation. Such a component contributes neither extracted trend nor quadratic growth penalty, yet suitable wealth-generating F\"ollmer portfolios may exploit its directional variation.

    Example~\ref{ex:lfg-without-pointwise-boundedness} demonstrates this phenomenon for a concrete last-cell Faber--Schauder extractor in the simplest singleton market. Example~\ref{ex:multi-scenario-non-anticipative-separation} shows that the phenomenon is not an artifact of the advance knowledge implicit in a singleton scenario set. There, uncountably many paths agree up to a common branching time; while one sequence of jointly Borel-measurable, non-anticipative portfolio rules, using only backward derivatives already observed, generates terminal wealth that diverges uniformly over all scenarios.

    By contrast, in the continuous semimartingale setting of \cite{KaratzasKardaras2021}, the residual in the canonical decomposition is a continuous local martingale. A continuous local martingale with zero quadratic variation is constant; moreover, when the structural condition for the num\'eraire holds, relative wealth is a nonnegative local martingale and hence a supermartingale. These properties rule out hidden finite-variation directional gains and provide the budget inequalities that connect the growth--num\'eraire layer to the viability--boundedness layer. No such connection follows automatically from a general pathwise residual and its F\"ollmer quadratic covariation.
\end{remark}

\newpage

\noindent \textbf{Funding}

\medskip

\noindent Ioannis Karatzas gratefully acknowledges support from the National Science Foundation under Grant DMS-25-06199, and from a Lenfest Award at Columbia University. Donghan Kim gratefully acknowledges support from the National Research Foundation of Korea under Grant RS-2025-00513609, funded by the Korean government (MSIT), and from an EWon Assistant Professorship Award at KAIST. 

\bigskip

\renewcommand{\refname}{References}
\bibliography{pathwise1}
\bibliographystyle{apalike}

\end{document}